\documentclass[a4paper,UKenglish,cleveref, autoref, thm-restate]{lipics-v2021}

\pdfoutput=1

\usepackage[utf8]{inputenc}
\usepackage{cite}
\usepackage{amssymb,amsfonts,amsmath}
\usepackage{algorithmic}
\usepackage{graphicx}
\usepackage{textcomp}
\usepackage{xcolor}
\usepackage{mathtools}
\usepackage{calc}
\usepackage{physics}
\usepackage{hyperref}
\usepackage{cleveref}
\usepackage{multicol}
\usepackage{stmaryrd}
\usepackage{enumerate}
\usepackage[normalem]{ulem}
\usepackage{subcaption}
\usepackage{supertabular}
\usepackage{tabularx}
\usepackage{xspace}
\usepackage{nicefrac}

\usepackage{tikz}
\usepackage{pgfplots}
\usetikzlibrary{decorations.markings,decorations.pathmorphing, decorations.text,positioning}
\usetikzlibrary{backgrounds,folding}
\usetikzlibrary{shapes.geometric, shapes.misc}
\usetikzlibrary{arrows.meta}
\pgfdeclarelayer{edgelayer}
\pgfdeclarelayer{nodelayer}
\pgfsetlayers{background,edgelayer,nodelayer,main}
\tikzstyle{every picture}=[baseline=-0.25em]
\tikzset{every path/.style={draw=black!80, line width=0.6pt}}
\tikzstyle{none}=[inner sep=0mm]
\tikzstyle{black dot}=[inner sep=0.5mm,minimum width=0pt,minimum height=0pt,fill=black,draw=black,shape=circle]
\tikzstyle{dot}=[black dot]
\tikzstyle{white dot}=[inner sep=0.5mm,minimum width=0pt,minimum height=0pt,fill=white,draw=black,shape=circle]
\tikzstyle{white phase dot}=[minimum size=3.2mm, font={\footnotesize\boldmath}, shape=rectangle, rounded corners=1.3mm, inner sep=0.8mm, outer sep=-2mm, scale=0.8, draw=black, fill=white]
\tikzstyle{ket}=[minimum size=2mm, font={\footnotesize\boldmath}, shape=rectangle, rounded corners=1.2mm, inner sep=0.6mm, outer sep=-2mm, scale=0.8, draw=black, fill=black, text=white, minimum width=3mm, minimum height=3mm]
\tikzstyle{box}=[rectangle,fill=white,draw=black, font=\scriptsize, inner sep=2pt]
\tikzstyle{box-no-outline}=[rectangle, inner sep=2pt]
\tikzstyle{W-1-n}=[draw, trapezium, trapezium left angle=70, trapezium right angle=70, fill=black, inner sep=2pt]

\pgfdeclarelayer{edgelayer}
\pgfdeclarelayer{nodelayer}
\pgfsetlayers{background,edgelayer,nodelayer,main}
\tikzstyle{every loop}=[]

\def\scaling{1}


\def\fig{}

\newcommand{\eq}[2][~]{
	#1
	\underset{\substack{#2}}{=}
	#1
}

\newcommand{\interp}[1]{\left\llbracket #1 \right\rrbracket}

\newcommand{\cat}[1]{\mathbf{#1}}
\newcommand{\zw}{\ensuremath{\operatorname{ZW}_{\!d}}\xspace}
\newcommand{\zwf}{\ensuremath{\operatorname{ZW}_{\!\!f}}\xspace}

\newcommand{\obfhilb}[1]{\llparenthesis #1\rrparenthesis}

\allowdisplaybreaks
\nolinenumbers
\overfullrule=2cm

	\title{Minimality in Finite-Dimensional ZW-Calculi}

\author{Marc de Visme}{Université Paris-Saclay, Inria, CNRS, ENS Paris-Saclay, Laboratoire Méthodes Formelles, 91190, Gif-sur-Yvette, France.}{marc.de-visme@inria.fr}{https://orcid.org/0009-0004-7227-7540}{}

\author{Renaud Vilmart}{Université Paris-Saclay, Inria, CNRS, ENS Paris-Saclay, Laboratoire Méthodes Formelles, 91190, Gif-sur-Yvette, France.}{renaud.vilmart@inria.fr}{https://orcid.org/0000-0002-8828-4671}{}

\authorrunning{M. de Visme \& R. Vilmart}

\Copyright{Marc de Visme and Renaud Vilmart}

\ccsdesc[500]{Theory of computation~Quantum computation theory}
\ccsdesc[500]{Theory of computation~Equational logic and rewriting}
\ccsdesc[100]{Theory of computation~Semantics and reasoning}

\keywords{Quantum Computing, Categorical Quantum Mechanics, ZW-calculus, Qudits, Finite Dimensional Hilbert Spaces, Completeness, Minimality}
%

\funding{This work is supported by the PEPR integrated project EPiQ ANR-22-PETQ-0007 part of Plan France 2030, by the projects ANR-22-PNCQ-0002 and ANR-22-CE47-0012.}

\acknowledgements{The authors would like to thank Antoine Guilmin-Crépon for discussions about the minimality of the present equational theories. The diagrams of the present paper were drawn using the TikZit tool \cite{tikzit}.}

\DOIPrefix{}

\hideLIPIcs

\makeatletter
\newcommand\xlabel[2][]{\phantomsection\def\@currentlabelname{\ensuremath{#1}}\label{#2}}
\makeatother

\bibliographystyle{plainurl}

\begin{document}

\maketitle

\begin{abstract}
The ZW-calculus is a graphical language capable of representing 2-dimensional quantum systems (qubit) through its diagrams, and manipulating them through its equational theory. We extend the formalism to accommodate finite dimensional Hilbert spaces beyond qubit systems. 

First we define a qu\textbf{d}it version of the language, where all systems have the same arbitrary finite dimension $d$, and show that the provided equational theory is both complete -- i.e.~semantical equivalence is entirely captured by the equations -- and minimal -- i.e.~none of the equations are consequences of the others. We then extend the graphical language further to allow for mixed-dimensional systems. We again show the completeness and minimality of the provided equational theory.
\end{abstract}

\section{Introduction}

Graphical languages for quantum computations are a product of the categorical quantum mechanics program \cite{Abramsky2004categorical,Dodo} devoted to studying the foundations of quantum mechanics through the prism of category theory. These graphical languages come in different flavours, depending on which generators are used to build the diagrams (graphical representations of the quantum operators), and critically, displaying different kinds of interactions between said generators. The ZX-calculus describes the interaction between two complementary bases \cite{Coecke2011interacting}, the ZW-calculus, the interaction between the two ``spiders'' derived from the ``GHZ'' and ``W'' states, the only two fully entangled tripartite states up to SLOCC-equivalence \cite{Coecke2010compositional}, and the ZH-calculus the interaction between the same GHZ-state inferred spider and a spider obtained by generalising the Hadamard gate \cite{Backens2019ZH}.

The equations that describe these interactions form ``equational theories'' that define syntactic equivalence classes of diagrams, that are also semantically equivalent. When the syntactic equivalence matches perfectly the semantic one (i.e.~when two diagrams represent the same quantum operator iff they can be turned into one another), we say that the equational theory is \emph{complete}. Complete equational theories have been found for the aforementioned graphical languages, betimes for restrictions of them \cite{Backens2021ZHcompleteness,%
Backens2014ZX,Carette2019completeness,Carette2023compositionality,%
Hadzihasanovic2015diagrammatic,Hadzihasanovic2018complete,Jeandel2020completeness,%
Jeandel2018complete,Vilmart2019nearminimal}.

As is customary for the computer science part of the quantum computation community, the focus was largely set onto \emph{qubit} systems, i.e.~systems where the base quantum system is 2-dimensional, yet this is enough to get applications in optimisation \cite{Backens2020ThereAB,Kissinger2020reducing}, quantum error correction \cite{deBeaudrap2020zxlattice,Huang2023graphical,Townsend2023floquetifying}, verification \cite{Duncan2014verifying,MSc.Hillebrand}, simulation \cite{Kissinger2022Simulating,Kissinger2022classical,Koch2023speedy}... However, physics allows for qudit systems (where the base quantum system is $d$-dimensional with $d>2$) and even infinite dimensional systems. Several attempts have hence been made to go beyond the qubit case \cite{Booth2022complete,Hadzihasanovic2017algebra,Wang2018qutrit}, but it was only recently that a complete equational theory was found for $d$-dimensional (i.e.~qudit) systems \cite{Poor2023completeness} and later for finite dimensional systems (so-called ``qufinite'', i.e.~for the category $\cat{FdHilb}$) \cite{Wang2023completeness}. The results were obtained by generalising both the ZX and the ZW calculi and mixing them together. The W-node in particular allows for a neat intuitive (and unique) normal form for the diagrams. Satisfying the necessary conditions for every diagram to be normalisable then yields a complete equational theory. However, we believe that the ones obtained in \cite{Poor2023completeness,Wang2023completeness} are far from being \emph{minimal}, due in particular to the presence of the generators from both the ZX and the ZW calculi.

From a foundational perspective, it can be enlightening to know if an equation is a defining property of quantum systems (and hence necessary), or on the contrary if it is derivable from more fundamental properties (see e.g.~\cite{Collins2020Hopf,Duncan2016interacting}). The redundancy in the equational theory may also cause issues when trying to explore the space of equivalent diagrams, or to transport the completeness result to other diagrammatic languages for qudit systems (every equation has to be proven in the new language, hence the fewer the better); or when trying to generalise further, e.g.~to the $\cat{FdHilb}$ setting.

We argue here that the ZW-calculus is enough to get a natural normal form (akin to that of \cite{Poor2023completeness}) even in the qudit version, and provide an elegant equational theory that we show to be \emph{complete}, resorting to the normal form instead of transporting the completeness result from \cite{Poor2023completeness}, for the reason described above. We also show that the equational theory is minimal, meaning that none of the equations can be derived from the others, hence avoiding the aforementioned redundancy in the presentation.

We then adapt diagrams and the equational theory of the graphical language to accommodate all finite dimensional Hilbert spaces ($\cat{FdHilb}$), in a way that requires no additional generator and only one new equation. Here again we prove the completeness and the minimality of the equational theory, by leveraging that of the qudit setting.

The paper is split into two parts, \Cref{sec:qudit} and \Cref{sec:fdhilb}, devoted respectively to the $\cat{Qudit}_d$ version, and to the $\cat{FdHilb}$ version. In the $\cat{Qudit}_d$ version, diagrams and their interpretation are introduced in \Cref{sec:qudit-diags} and the equational theory is introduced and discussed in \Cref{sec:qudit-ET}. We then show its minimality in \Cref{sec:qudit-minimality} and its completeness in \Cref{sec:qudit-completeness}. In the $\cat{FdHilb}$ version, diagrams and their interpretation are introduced in \Cref{sec:fdhilb-diags}, and the equational theory is introduced and shown to be complete in \Cref{sec:fdhilb-ET}. All missing proofs are provided in the appendix.

\subsection*{The Dirac Notation}

\label{sec:dirac-notation}

All the upcoming diagrams can be given an interpretation as a linear map, in the appropriate category. In quantum information, it is usual to express such linear maps using the so-called Dirac notation. The current section hence serves as a gentle introduction to this notation.

Let $d\geq2$. In the $d$-dimensional Hilbert space $\mathbb C^d$, the canonical basis
\[\Big\lbrace
\begin{pmatrix}1&0&\cdots&0\end{pmatrix}^\intercal,~
\begin{pmatrix}0&1&\cdots&0\end{pmatrix}^\intercal,~\ldots,~
\begin{pmatrix}0&0&\cdots&1\end{pmatrix}^\intercal
\Big\rbrace\]
is usually denoted $\left\lbrace\ket0,\ket1,...,\ket{d-1}\right\rbrace$ (with $(.)^\intercal$ being the transpose). All 1-qudit systems have states that live in $\mathbb C^d$ and that can hence be represented by linear combinations of the elements of this basis: $\ket\psi = a_0\ket0+a_1\ket1+...+a_{d-1}\ket{d-1}$ (notice that the ``ket'' notation $\ket{.}$ is used for states in general, not only basis elements).

To combine systems, we use the tensor product (Kronecker product): $(.\otimes.)$ which is a fairly standard operation on linear maps. In particular, the overall state obtained by composing two 1-qudit systems in respective states $\ket\psi$ and $\ket\varphi$ is simply $\ket{\psi}\otimes\ket{\varphi}$. Notice that $\left\lbrace \ket i\otimes \ket j\right\rbrace_{0\leq i <d, 0\leq j < d'}$ forms a basis of $\mathbb C^d\otimes \mathbb C^{d'} \simeq \mathbb C^{d\times d'}$. It is customary to write $\ket{\psi,\varphi}$ to abbreviate $\ket{\psi}\otimes\ket{\varphi}$.

The ``bra'' notation $\bra.$ is used to represent the dagger (the conjugate transpose) of a state, i.e.~$\bra\psi = \ket\psi^\dag=\overline{\ket\psi}^\intercal$. The choice of the ``bra-ket'' notation is such that composing a bra with ket forms the bracket, the usual inner product in $\mathbb C^d$: 
$\bra\psi\circ\ket\varphi = \braket{\psi}{\varphi}$.
Linear combinations of ``ket-bras'' $\ket{i}\bra{j}$ of the canonical basis can be used to represent any linear map of the correct dimensions, e.g.~the 1-qudit identity: 
$\textit{id} = \sum\limits_{k=0}^{d-1} \ketbra k$.

\section{ZW-Calculus for Qudit Systems}

In this section, we introduce a graphical language for quantum systems that all have the same fixed dimension $d$ ($d\geq 2$): qudit systems.

\label{sec:qudit}

\subsection{Diagrams of $\cat{ZW}_d$ and their Interpretation}

\label{sec:qudit-diags}

First, we need to introduce the mathematical objects at the heart of the graphical language -- the diagrams -- and what they represent.

\subsubsection{The Diagrams}

As is traditional for graphical langu\-ages for finite-dimensional quantum systems, we work with a $\dagger$-compact prop \cite{Lack2004composing,PhD.Zanasi,selinger2010survey}. Categorically speaking, this is a symmetric, compact closed monoidal category generated by a single object, endowed with a contravariant endofunctor that behaves well with the symmetry and the compact structure. The following explains some of these concepts in more detail.

Let us denote $\cat{ZW}_d$ the $\dagger$-compact prop generated by:\\[0.5em]
\begin{minipage}{0.55\columnwidth}
\begin{itemize}
\setlength{\itemsep}{0.5em}
\item the Z-spiders $
\scalebox{\scaling}{
	\input{./figures/Z-spider.tikz}%
}:n\to m$, $\begin{aligned}[t]&\text{for }r\in\mathbb C\\[-1ex]&\text{and }n,m\geq0 \end{aligned}$
\item the state $\ket1$ $
\scalebox{\scaling}{
	\begin{tikzpicture}
	\begin{pgfonlayer}{nodelayer}
		\node [style=ket] (0) at (0, 0.125) {$1$};
		\node [style=none] (1) at (0, -0.25) {};
	\end{pgfonlayer}
	\begin{pgfonlayer}{edgelayer}
		\draw (1.center) to (0);
	\end{pgfonlayer}
\end{tikzpicture}
}:0\to1$
\end{itemize}
\end{minipage}
\begin{minipage}{0.44\columnwidth}
\begin{itemize}
\setlength{\itemsep}{0.5em}
\item the W-nodes $
\scalebox{\scaling}{
	\input{./figures/W-n.tikz}%
}:1\to n$, for $n\geq0$
\item the global scalars $r : 0\to 0$, for $r\in\mathbb C$
\end{itemize}
\end{minipage}\\
\begin{itemize}
\setlength{\itemsep}{0.5em}
\item the swap $
\scalebox{\scaling}{
	\input{./figures/swap.tikz}%
}:2\to2$ representing the symmetry of the prop
\item the cup $
\scalebox{\scaling}{
	\begin{tikzpicture}
	\begin{pgfonlayer}{nodelayer}
		\node [style=none] (0) at (0.5, 0.125) {};
		\node [style=none] (1) at (1, 0.125) {};
		\node [style=none] (2) at (0.75, -0.125) {};
	\end{pgfonlayer}
	\begin{pgfonlayer}{edgelayer}
		\draw [bend right=90, looseness=1.75] (0.center) to (1.center);
	\end{pgfonlayer}
\end{tikzpicture}
}:2\to0$ and cap $
\scalebox{\scaling}{
	\begin{tikzpicture}
	\begin{pgfonlayer}{nodelayer}
		\node [style=none] (0) at (-0.5, -0.125) {};
		\node [style=none] (1) at (0, -0.125) {};
		\node [style=none] (2) at (-0.25, 0.125) {};
	\end{pgfonlayer}
	\begin{pgfonlayer}{edgelayer}
		\draw [bend left=90, looseness=1.75] (0.center) to (1.center);
	\end{pgfonlayer}
\end{tikzpicture}
}:0\to2$ representing the compact structure
\item and the identity $
\scalebox{\scaling}{
	\begin{tikzpicture}
	\begin{pgfonlayer}{nodelayer}
		\node [style=none] (0) at (0, 0.25) {};
		\node [style=none] (1) at (0, -0.25) {};
	\end{pgfonlayer}
	\begin{pgfonlayer}{edgelayer}
		\draw (0.center) to (1.center);
	\end{pgfonlayer}
\end{tikzpicture}
}:1\to1$.
\end{itemize}
\smallskip
All these generators can be composed sequentially and in parallel, as follows:
\begin{align*}

\scalebox{\scaling}{
	\input{./figures/compo.tikz}%
}~:=~
\scalebox{\scaling}{
	\input{./figures/D2.tikz}%
}\circ
\scalebox{\scaling}{
	\input{./figures/D1.tikz}%
}
\qquad
\text{and}
\qquad

\scalebox{\scaling}{
	\input{./figures/D1.tikz}%
}~
\scalebox{\scaling}{
	\input{./figures/D2.tikz}%
}~:=~
\scalebox{\scaling}{
	\input{./figures/D1.tikz}%
}\otimes
\scalebox{\scaling}{
	\input{./figures/D2.tikz}%
}
\end{align*}
The symmetry and the compact structure satisfy the following identities:
\begin{align*}
\def\fig{swap-involution}
\scalebox{\scaling}{
	\begin{tikzpicture}
	\begin{pgfonlayer}{nodelayer}
		\node [style=white phase dot] (0)  at (0.45, 0.875) {$\star$};
		\node [style=W-1-n, shape border rotate=180] (1)  at (-0.3, -0.375) {};
		\node [style=none] (2)  at (-0.3, -1.125) {};
		\node [style=none, font={\scriptsize}] (3)  at (0.1, 0.275) {...};
		\node [style=none, font={\scriptsize}] (4)  at (-0.05, 0.875) {...};
		\node [style=white phase dot] (5)  at (-0.5, 0.875) {$\star$};
		\node [style=none, font={\scriptsize}] (6)  at (-0.55, 0.25) {...};
		\node [style=none, font={\scriptsize}, color=gray] (7)  at (-0.75, 1.0) {$1$};
		\node [style=none, font={\scriptsize}, color=gray] (8)  at (0.75, 1.0) {$1$};
		\node [style=none, font={\scriptsize}, color=gray] (9)  at (-0.425, -0.575) {$\delta$};
		\node [style=W-1-n, shape border rotate=0] (10)  at (-0.3, -0.875) {};
		\node [style=none, font={\scriptsize}, color=gray] (11)  at (-0.3, -1.25) {$a_i$};
		\node [style=none] (12)  at (-0.5, 1.25) {};
		\node [style=none] (13)  at (0.45, 1.25) {};
	\end{pgfonlayer}
	\begin{pgfonlayer}{edgelayer}
		\draw [bend left=15] (0) to (1);
		\draw [bend right=15] (0) to (1);
		\draw (2.center) to (1);
		\draw (5) to (1);
		\draw [bend right, looseness=1.25] (5) to (1);
		\draw (12.center) to (5);
		\draw (13.center) to (0);
	\end{pgfonlayer}
\end{tikzpicture}%
}
\eq[]{}\scalebox{\scaling}{
	\begin{tikzpicture}
	\begin{pgfonlayer}{nodelayer}
		\node [style=W-1-n, shape border rotate=180] (16)  at (-0.325, -0.25) {};
		\node [style=none] (17)  at (-0.075, -1.25) {};
		\node [style=none, font={\scriptsize}] (19)  at (-0.075, 1.0) {...};
		\node [style=white phase dot] (20)  at (-0.525, 1.0) {$\star$};
		\node [style=none, font={\scriptsize}] (21)  at (-0.575, 0.375) {...};
		\node [style=none, font={\scriptsize}, color=gray] (22)  at (-0.775, 1.125) {$1$};
		\node [style=none, font={\scriptsize}, color=gray] (24)  at (-0.45, -0.45) {$\delta$};
		\node [style=W-1-n, shape border rotate=0] (25)  at (-0.075, -1.0) {};
		\node [style=none, font={\scriptsize}, color=gray] (26)  at (-0.075, -1.375) {$a_i$};
		\node [style=none] (27)  at (-0.525, 1.375) {};
		\node [style=white phase dot] (29)  at (0.475, 1.0) {$\star$};
		\node [style=W-1-n, shape border rotate=180] (30)  at (0.225, -0.25) {};
		\node [style=none, font={\scriptsize}] (31)  at (0.3, 0.4) {...};
		\node [style=none, font={\scriptsize}, color=gray] (33)  at (0.775, 1.125) {$1$};
		\node [style=none] (34)  at (0.475, 1.375) {};
		\node [style=W-1-n, shape border rotate=180] (35)  at (-0.075, -0.65) {};
		\node [style=none, font={\scriptsize}, color=gray] (36)  at (0.15, -0.525) {$\delta$};
		\node [style=none, font={\scriptsize}, color=gray] (37)  at (0.1, -0.825) {$\delta$};
	\end{pgfonlayer}
	\begin{pgfonlayer}{edgelayer}
		\draw (16) to (35);
		\draw (20) to (16);
		\draw [bend right, looseness=1.25] (20) to (16);
		\draw (27.center) to (20);
		\draw [bend left=15] (29) to (30);
		\draw [bend right] (29) to (30);
		\draw (34.center) to (29);
		\draw (35) to (30);
		\draw (35) to (17.center);
	\end{pgfonlayer}
\end{tikzpicture}%
}
\hspace*{3.5em}
\def\fig{symmetry-naturality}
\scalebox{\scaling}{
}
\eq[]{}\scalebox{\scaling}{
}
\hspace*{3.5em}
\def\fig{swapped-cap}
\scalebox{\scaling}{
}
\eq[]{}\scalebox{\scaling}{
}
\hspace*{3.5em}
\def\fig{snake}
\scalebox{\scaling}{
}
\eq[]{}\scalebox{\scaling}{
}
\eq[]{}\scalebox{\scaling}{
	\begin{tikzpicture}
	\begin{pgfonlayer}{nodelayer}
		\node [style=W-1-n, shape border rotate=180] (39)  at (-0.325, -0.25) {};
		\node [style=none] (40)  at (-0.075, -1.25) {};
		\node [style=none, font={\scriptsize}] (41)  at (-0.075, 1.0) {...};
		\node [style=white phase dot] (42)  at (-0.525, 1.0) {$\star$};
		\node [style=none, font={\scriptsize}] (43)  at (-0.575, 0.375) {...};
		\node [style=none, font={\scriptsize}, color=gray] (44)  at (-0.775, 1.125) {$1$};
		\node [style=none, font={\scriptsize}, color=gray] (45)  at (-0.525, -0.45) {$\delta$};
		\node [style=none, font={\scriptsize}, color=gray] (47)  at (-0.075, -1.375) {$a_i$};
		\node [style=none] (48)  at (-0.525, 1.375) {};
		\node [style=white phase dot] (49)  at (0.475, 1.0) {$\star$};
		\node [style=W-1-n, shape border rotate=180] (50)  at (0.225, -0.25) {};
		\node [style=none, font={\scriptsize}] (51)  at (0.3, 0.4) {...};
		\node [style=none, font={\scriptsize}, color=gray] (52)  at (0.775, 1.125) {$1$};
		\node [style=none] (53)  at (0.475, 1.375) {};
		\node [style=W-1-n, shape border rotate=180] (57)  at (-0.075, -0.975) {};
		\node [style=W-1-n, shape border rotate=0] (58)  at (-0.325, -0.625) {};
		\node [style=W-1-n, shape border rotate=0] (59)  at (0.225, -0.625) {};
		\node [style=none, font={\scriptsize}, color=gray] (60)  at (0.4, -0.45) {$\delta$};
		\node [style=none, font={\scriptsize}, color=gray] (61)  at (0.175, -0.875) {$a_i$};
		\node [style=none, font={\scriptsize}, color=gray] (62)  at (-0.325, -0.875) {$a_i$};
	\end{pgfonlayer}
	\begin{pgfonlayer}{edgelayer}
		\draw (39) to (58);
		\draw (42) to (39);
		\draw [bend right, looseness=1.25] (42) to (39);
		\draw (48.center) to (42);
		\draw [bend left=15] (49) to (50);
		\draw [bend right] (49) to (50);
		\draw (53.center) to (49);
		\draw (57) to (40.center);
		\draw (57) to (59);
		\draw (58) to (57);
		\draw (59) to (50);
	\end{pgfonlayer}
\end{tikzpicture}%
}
\end{align*}
This compact structure in particular allows us to define the "upside-down" version of the generators, for instance: 
\def\fig{W-n-1}
$
\scalebox{\scaling}{
}
~~:=~~\scalebox{\scaling}{
}
\quad
\text{and}
\quad
\def\fig{bra-1}
\scalebox{\scaling}{
}
~~:=~~\scalebox{\scaling}{
}
$

The $\dag$ functor is defined inductively as:\\
\[ \begin{array}{rcl}
(D_2\circ D_1)^\dag &=& D_1^\dag\circ D_2^\dag\\
(D_1\otimes D_2)^\dag &=& D_1^\dag\otimes D_2^\dag\\
\left(r\right)^\dag &=& \bar r\\
\end{array}\qquad \left(~
\scalebox{\scaling}{
	\input{./figures/Z-spider.tikz}%
}~\right)^\dag =~ 
\scalebox{\scaling}{
	\input{./figures/dag-Z.tikz}%
} \qquad \begin{array}{p{4cm}}with the other generators being mapped to their upside-down version.\end{array}\]
\noindent
Notice that thank to the identities satisfied by the $\dag$-compact prop, the $\dag$-functor is involutive.

As will be made clearer in what follows, in $\cat{ZW\!}_d$, $d\geq2$ represents the dimension of the "base" quantum system, called qudit. As this $d$ will be fixed in the following, we may forget to specify it. For convenience, we define an empty white node as a parameter-$1$ Z-spider: 
\def\fig{Z-spider-empty}
$
\scalebox{\scaling}{
}
:=\scalebox{\scaling}{
}
$
and give the $0\to1$ W-node a special symbol, akin to that of $\ket1$ (as its interpretation, as we will see later, is merely $\ket0$): 
\def\fig{ket-0-def}
$
\scalebox{\scaling}{
}
~:=~\scalebox{\scaling}{
}
$. 
We generalise the ket symbol inductively as follows (for $2\leq k<d$): 
\def\fig{ket-k-def-induct}
$
\scalebox{\scaling}{
}
~~:=~~\scalebox{\scaling}{
}
$. 
These last symbols can be given an upside-down definition using the compact structure as was done for $\ket1$ and the W-node.

\subsubsection{The Interpretation}

The point of the diagrams of the $\cat{ZW\!}_d$ is to represent quantum operators on multipartite $d$-dimensional systems. The way those are usually specified is thanks to the category $\cat{Qudit}_d$. This forms again a symmetric $\dag$-compact prop, where the base object is $1:=\mathbb C^d$, and morphisms $n\to m$ are linear maps $\mathbb C^{d^n}\to \mathbb C^{d^m}$. The symmetry and the compact structure correspond to their counterparts in $\cat{ZW}_d$, they will be stated out in the following, as part of the interpretation of $\cat{ZW\!}_d$ diagrams. The $\dag$ functor is the usual $\dag$ of linear maps over $\mathbb C$.

We may hence interpret diagrams of the $\cat{ZW\!}_d$-calculus thanks to the functor $\interp{.}:\cat{ZW\!}_d\to \cat{Qudit}_d$ inductively defined as follows:\\
\begin{minipage}{0.42\columnwidth}
\begin{align*}
\interp{D_2\circ D_1}
&= \interp{D_2}\circ\interp{D_1}\\
\interp{D_1\otimes D_2}
&= \interp{D_1}\otimes\interp{D_2}\\
\interp{~
\scalebox{\scaling}{
}~}
&=\sum_{k} \ketbra k\\
\interp{
\scalebox{\scaling}{
	\input{./figures/swap.tikz}%
}}
&=\sum_{k,\ell} \ketbra{\ell,k}{k,\ell}\\
\interp{
\scalebox{\scaling}{
}}
&=\interp{
\scalebox{\scaling}{
}}^\dag=\sum_{k} \ket{k,k}
\end{align*}
\end{minipage}
\hspace*{-1em}
\begin{minipage}{0.59\columnwidth}
\begin{align*}
\interp{
\scalebox{\scaling}{
	\input{./figures/Z-spider.tikz}%
}}
&= \sum_{k=0}^{d-1} r^k \sqrt{k!}^{n+m-2}\ketbra{k^m}{k^n}\\
\interp{
\scalebox{\scaling}{
	\input{./figures/W-n.tikz}%
}}
&= \!\sum_{\substack{k\in\{0,...,d-1\}\\i_1+...+i_n=k}}\! \sqrt{\binom{k}{i_1,...,i_n}}\ketbra{i_1,...,i_n}{k}\\
\interp{
\scalebox{\scaling}{
}}
&= \ket 1\\
\interp{r}&=r
\end{align*}
\end{minipage}\\
where $\displaystyle\binom{k}{i_1,...,i_n} = \frac{k!}{i_1!...i_n!}$ is a multinomial coefficient. 
Notice that the interpretation of the $0\to1$ W-node is simply: 
$
\def\fig{ket-0-def}
\interp{\scalebox{\scaling}{
}} = \sum\limits_{\substack{k\in\{0,...,d-1\}\\0=k}} \sqrt{\binom{k}{0}}\ket{k} = \ket0
$, 
and that of the black node symbol $k$ for $k<d$ is $\ket k$ up to renormalisation: 
$\interp{
\scalebox{\scaling}{
	\begin{tikzpicture}
	\begin{pgfonlayer}{nodelayer}
		\node [style=ket] (0) at (-0.25, 0.25) {$k$};
		\node [style=none] (1) at (-0.25, -0.25) {};
	\end{pgfonlayer}
	\begin{pgfonlayer}{edgelayer}
		\draw (1.center) to (0);
	\end{pgfonlayer}
\end{tikzpicture}
}}
=\sqrt{\binom{k}{1,...,1}}\ket k
= \sqrt{k!} \ket k$. 
The presence of $\sqrt{\cdots}$ on the coefficients is not particularly relevant, and is simply an artefact of us maintaining some symmetry between generators and their dagger. Indeed, we want \def\fig{braket-k}$\interp{\scalebox{\scaling}{
}}=k!$, and for that the coefficient $k!$ needs to be split between both nodes, resulting in either an asymmetric presentation or a square root (see \Cref{sec:asymmetric-presentation} for an equivalent semantics, without any $\sqrt{\cdots}$ but asymmetric instead).

Notice also that the interpretation of the Z-spider differs from more usual generalisations of its qubit counterpart, because of the $\sqrt{k!}^{n+m-2}$ which depends on the degree of the spider. While it makes the interpretation of the diagrams slightly more complicated, it allows us -- as will be stated later -- to quite conveniently generalise equations from the qubit ZW-calculus, and hence have a simpler equational theory.
It will be shown in the following (\Cref{cor:universality}), that the above set of generators makes for a universal calculus, i.e.~any linear map of $\cat{Qudit}_d$ can be represented by a $\cat{ZW\!}_d$-diagram.

To gain intuition about the upcoming equations between diagrams, it can be useful to \emph{semantically} decompose a diagram into sums of simpler ones\footnote{Notice that here, such decompositions are merely semantical. The upcoming completeness is only interested in equivalence between single diagrams.}. To do so, it can be convenient to understand $\ket k$ as a bunch of $k$ indistinguishable particles:\\
\begin{minipage}{0.56\columnwidth}
\def\fig{ket-k-on-W}
\begin{align}
\label{eq:ket-distrib-on-W}
\interp{\scalebox{\scaling}{
}}
&=\hspace*{-1em}\sum_{i_1+...+i_n=k}\binom{k}{i_1,...,i_n}\interp{\scalebox{\scaling}{
}}\\
\def\fig{ket-sum}
\label{eq:ket-sum}
\interp{\scalebox{\scaling}{
	\input{./figures/\fig/\fig_12.tikz}%
}}
&=\def\fig{ket-sum}
\begin{cases}
\interp{\scalebox{\scaling}{
	\input{./figures/\fig/\fig_14.tikz}%
}} & \text{ if }k+\ell<d\\
\vec 0 & \text{ if }k+\ell\geq d
\end{cases}
\end{align}
\end{minipage}
\hfill
\begin{minipage}{0.38\columnwidth}
\def\fig{copy-k}
\begin{align}
\label{eq:copy}
\interp{\scalebox{\scaling}{
}}
&=r^k\interp{\scalebox{\scaling}{
	\begin{tikzpicture}
	\begin{pgfonlayer}{nodelayer}
		\node [style=none] (94)  at (-0.325, -1.0) {};
		\node [style=none, font={\scriptsize}] (95)  at (-0.075, 0.75) {...};
		\node [style=white phase dot] (96)  at (-0.525, 0.75) {$\star$};
		\node [style=none, font={\scriptsize}] (97)  at (-0.575, 0.125) {...};
		\node [style=none, font={\scriptsize}, color=gray] (98)  at (-0.775, 0.875) {$1$};
		\node [style=none, font={\scriptsize}, color=gray] (100)  at (-0.325, -1.125) {$a_i$};
		\node [style=none] (101)  at (-0.525, 1.125) {};
		\node [style=white phase dot] (102)  at (0.475, 0.75) {$\star$};
		\node [style=W-1-n, shape border rotate=180] (103)  at (-0.325, -0.5) {};
		\node [style=none, font={\scriptsize}] (104)  at (0.05, 0.15) {...};
		\node [style=none, font={\scriptsize}, color=gray] (105)  at (0.775, 0.875) {$1$};
		\node [style=none] (106)  at (0.475, 1.125) {};
		\node [style=W-1-n, shape border rotate=180] (107)  at (-0.325, -0.475) {};
	\end{pgfonlayer}
	\begin{pgfonlayer}{edgelayer}
		\draw (96) to (103);
		\draw [bend right, looseness=1.25] (96) to (103);
		\draw (101.center) to (96);
		\draw [bend left=15] (102) to (103);
		\draw [bend right] (102) to (103);
		\draw (106.center) to (102);
		\draw (107) to (94.center);
	\end{pgfonlayer}
\end{tikzpicture}%
}}\\
\def\fig{braket-k-l}
\label{eq:braket}
\interp{\scalebox{\scaling}{
}}
&=k!\braket{k}{\ell}\\
\label{eq:id-decomp}
\interp{
\scalebox{\scaling}{
}} &= \sum_{k=0}^{d-1} \frac1{k!}\interp{
\scalebox{\scaling}{
	\begin{tikzpicture}
	\begin{pgfonlayer}{nodelayer}
		\node [style=ket] (0) at (0, 0.175) {$k$};
		\node [style=none] (1) at (0, 0.425) {};
		\node [style=ket] (2) at (0, -0.175) {$k$};
		\node [style=none] (3) at (0, -0.425) {};
	\end{pgfonlayer}
	\begin{pgfonlayer}{edgelayer}
		\draw (1.center) to (0);
		\draw (3.center) to (2);
	\end{pgfonlayer}
\end{tikzpicture}
}}
\end{align}
\end{minipage}\\[1em]
\Cref{eq:ket-distrib-on-W} explains how the W-node spreads the $k$ ``particles'' that enter it following a multinomial distribution. \Cref{eq:ket-sum} shows that the $2\to1$ W-node takes two bunches of particles $k$ and $\ell$ and regroups them into one, and yields the null state if $k+\ell$ exceeds the ``capacity'' (i.e.~the dimension) of a single wire. 
This will be proven graphically (\Cref{lem:ket-sum}) from the upcoming equational theory (\Cref{fig:equational-theory}). When $k+\ell<d$, the fact that there is no additional scalar is due to the rescaling of the $k$-dots to represent $\sqrt{k!}\ket k$. This rescaling also makes the ``copy'' more natural: The Z-spider $1\to n$ copies any bunch of $k$ particles entering it, yielding global scalar $r^k$ in the process, as is shown by \Cref{eq:copy}. 
This will again be proven graphically (\Cref{lem:copy}) from the equational theory. 
The rescaling, however, forces \Cref{eq:braket}. 
Finally, it can be useful to decompose the identity as a linear combination of products of kets and bras as is done in \Cref{eq:id-decomp}.

\subsection{Equational Theory}

\label{sec:qudit-ET}

With the above interpretation of the $\cat{ZW\!}_d$, different diagrams may yield the same linear map. All axioms of symmetric $\dag$-compact props in particular preserve the interpretation. More generally, we may want to relate together \emph{all} diagrams that have the same semantics. This is done through an equational theory, i.e.~a set of equations that can be applied locally in a diagram without changing the semantics of the whole.

\subsubsection{Equations of the $\cat{ZW\!}_d$-Calculus}

On top of the axioms of symmetric $\dag$-compact props, we assume some conventional equations about the topology of the generators, which should align with the symmetries of the symbols used to depict them. The Z-spider does not distinguish between any of its connections: it is ``flexsymmetric''\cite{Carette2021wielding}, meaning that we can interchange any of its legs without changing the semantics. Graphically, for any permutation of wires $\sigma$: 
\def\fig{Z-flex}
$
\scalebox{\scaling}{
}
\eq{}\scalebox{\scaling}{
}
$. 
On the other hand, the binary $W$-node is only co-com\-mu\-tative,
which, together with the upcoming Equation \nameref{ax:W-assoc}, means that all the outputs of the $n$-ary W-node can be exchanged, i.e.~for any permutation of wires $\sigma$: 
\def\fig{W-n-commutative}
$
\scalebox{\scaling}{
}
\eq{}\scalebox{\scaling}{
}
$. 
With all this in place, we can give the core of the equational theory, in \Cref{fig:equational-theory}. When diagram $D_1$ can be turned into diagram $D_2$ using the rules of \zw, we write $\zw\vdash D_1=D_2$.

\begin{figure*}[!htb]
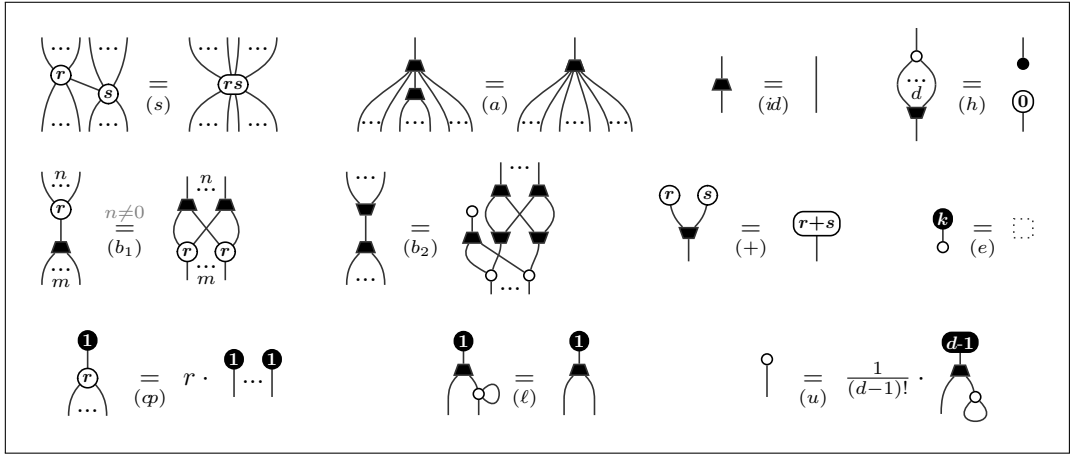

\boxed{
\begin{minipage}{0.984\textwidth}
\medskip
\hspace*{0.0em}
\def\fig{Z-spider-rule}
$\label{ax:Z-spider}\xlabel[(s)]{ax:Z-spider}
\scalebox{\scaling}{
	\input{./figures/\fig/\fig_00.tikz}%
}
\eq[]{(s)}\scalebox{\scaling}{
	\input{./figures/\fig/\fig_01.tikz}%
}$
\hfill
\def\fig{W-assoc}
$\label{ax:W-assoc}\xlabel[(a)]{ax:W-assoc}
\scalebox{\scaling}{
	\input{./figures/\fig/\fig_00.tikz}%
}
\!\!\eq[]{(a)}\!\!\scalebox{\scaling}{
	\input{./figures/\fig/\fig_01.tikz}%
}$
\hfill
\def\fig{W-unit-aux}
$\label{ax:W-unit}\xlabel[(i\hspace*{-1pt}d)]{ax:W-unit}
\scalebox{\scaling}{
	\input{./figures/\fig/\fig_00.tikz}%
}
\eq[~]{(i\hspace*{-1pt}d)}\scalebox{\scaling}{
	\input{./figures/\fig/\fig_01.tikz}%
}$
\hfill
\def\fig{Hopf-aux}
$\label{ax:hopf}\xlabel[(h)]{ax:hopf}
\scalebox{\scaling}{
	\input{./figures/\fig/\fig_00.tikz}%
}
\!\eq[~]{(h)}\scalebox{\scaling}{
	\input{./figures/\fig/\fig_01.tikz}%
}$
\hspace*{0.0em}
\medskip

\hspace*{.0em}
\def\fig{Z-W-bialgebra}
$\label{ax:bialgebra-Z-W}\xlabel[(b_1)]{ax:bialgebra-Z-W}
\scalebox{\scaling}{
	\input{./figures/\fig/\fig_00.tikz}%
}
\overset{{\color{gray}n\neq0}}{\eq[~]{(b_1)}}\scalebox{\scaling}{
	\input{./figures/\fig/\fig_01.tikz}%
}$
\hfill
\def\fig{W-bialgebra}
$\label{ax:bialgebra-W}\xlabel[(b_2)]{ax:bialgebra-W}
\scalebox{\scaling}{
	\input{./figures/\fig/\fig_00.tikz}%
}
\!\!\eq[~]{(b_2)}\!\scalebox{\scaling}{
	\input{./figures/\fig/\fig_01.tikz}%
}$
\hfill
\def\fig{sum}
$\label{ax:sum}\xlabel[(+)]{ax:sum}
\scalebox{\scaling}{
	\input{./figures/\fig/\fig_00.tikz}%
}
\!\!\eq[~]{(+)}\scalebox{\scaling}{
	\input{./figures/\fig/\fig_01.tikz}%
}$
\hfill
$\label{ax:one}\xlabel[(e)]{ax:one}

\scalebox{\scaling}{
	\input{./figures/erasing-ket.tikz}%
}\eq[]{(e)}
\scalebox{\scaling}{
	\input{./figures/empty.tikz}%
}$
\hspace*{.0em}
\bigskip

\hspace*{1em}
\def\fig{copy-1}
$\label{ax:copy}\xlabel[(c\!p)]{ax:copy}
\scalebox{\scaling}{
	\input{./figures/\fig/\fig_04.tikz}%
}
\eq{(c\!p)} r\cdot \scalebox{\scaling}{
	\input{./figures/\fig/\fig_07.tikz}%
}$
\hfill
\def\fig{Z-loop-removal}
$\label{ax:loop}\xlabel[(\ell)]{ax:loop}
\scalebox{\scaling}{
	\input{./figures/\fig/\fig_00.tikz}%
}\hspace*{-0.7em}
\eq{(\ell)}\scalebox{\scaling}{
	\input{./figures/\fig/\fig_01.tikz}%
}$
\hfill
\def\fig{alternative-Z-decomp}
$\label{ax:decomp-Z}\xlabel[(u)]{ax:decomp-Z}
\scalebox{\scaling}{
	\input{./figures/\fig/\fig_00.tikz}%
}
\eq{(u)}\frac1{(d{-}1)!}\cdot\scalebox{\scaling}{
	\input{./figures/\fig/\fig_01.tikz}%
}$
\hspace*{1em}
\medskip
\end{minipage}}
\caption{Equational theory \zw for the qudit $\cat{ZW}$-calculus.}
\label{fig:equational-theory}
\end{figure*}

\begin{remark}
In this framework, we can tensor global scalars together $r\otimes s$, which graphically could be confused with their product $rs$. This is actually unambiguous in the equational theory, as, using Equation \nameref{ax:copy}:
\def\fig{scalar-multiplication}
\begin{align*}
\zw\vdash~r\otimes s
\eq{\nameref{ax:copy}}r\cdot\scalebox{\scaling}{
}
\eq{\nameref{ax:copy}}\scalebox{\scaling}{
}
\eq{\nameref{ax:Z-spider}}\scalebox{\scaling}{
}
\eq{\nameref{ax:copy}}rs
\end{align*}
Moreover, using Equation~\nameref{ax:one}, one can easily show that global scalar $1$ is the empty diagram (\Cref{lem:scalar-1}). 
Scalar multiplication is assumed to be automatically applied, and scalar $1$ is assumed to be automatically removed in the following after Lemma \ref{lem:scalar-1} is proven.
\end{remark}

All rules up to \nameref{ax:one} are fairly standard generalisations of rules of the qubit $\cat{ZW}$-calculus (with \nameref{ax:bialgebra-W} being inspired from \cite{Poor2023completeness} to avoid using a \emph{fermionic swap}\footnote{The fermionic swap, introduced in \cite{Hadzihasanovic2015diagrammatic}, is a generator that in many situations behave like the actual swap. The qubit version of ZW uses the fermionic swap, but this generator loses many properties (for instance its involution, or the fact that it maps $\ket{k,\ell}$ to $\ket{\ell,k}$ up to some coefficient) when going in larger dimensions. By not using it here, we avoid having to axiomatise it.}). 
The non-conventional $\sqrt k ^{n+m-2}$ coefficients in the interpretation of the Z-spider seem to be necessary for Equations \nameref{ax:Z-spider}, \nameref{ax:bialgebra-Z-W} and \nameref{ax:sum} to all work. Notice that this makes the Z-spider \emph{non-special}, meaning that: 
\def\fig{spider-non-special}
$
\scalebox{\scaling}{
}
\neq~~\scalebox{\scaling}{
}
$. 
Equation \nameref{ax:loop} however gives a context in which that inequality becomes an equality. 
Finally, Equation \nameref{ax:decomp-Z} shows how a $0\to1$ Z-spider can be obtained by distributing $d-1$ ``particles'' over two paths, and erasing (or post-selecting) adequately one of the two paths.

\begin{remark}
Thanks to the compact structure and its interaction with the generators of the language, all upside-down version of the equations of \Cref{fig:equational-theory} are derivable, by simply deforming the diagrams to get the actual axiom. For instance, the upside-down version of \nameref{ax:sum} can be derived as follows:
\def\fig{upside-down-sum}
\begin{align*}
\zw\vdash\scalebox{\scaling}{
}
\eq{}\scalebox{\scaling}{
}
\eq{}\scalebox{\scaling}{
}
\eq{\nameref{ax:sum}}\scalebox{\scaling}{
	\begin{tikzpicture}
	\begin{pgfonlayer}{nodelayer}
		\node [style=W-1-n, shape border rotate=180] (64)  at (-0.325, -0.375) {};
		\node [style=none] (65)  at (-0.075, -1.125) {};
		\node [style=none, font={\scriptsize}] (66)  at (-0.075, 0.875) {...};
		\node [style=white phase dot] (67)  at (-0.525, 0.875) {$\star$};
		\node [style=none, font={\scriptsize}] (68)  at (-0.475, 0.25) {...};
		\node [style=none, font={\scriptsize}, color=gray] (69)  at (-0.775, 1.0) {$1$};
		\node [style=none, font={\scriptsize}, color=gray] (71)  at (-0.075, -1.25) {$a_i$};
		\node [style=none] (72)  at (-0.525, 1.25) {};
		\node [style=white phase dot] (73)  at (0.475, 0.875) {$\star$};
		\node [style=W-1-n, shape border rotate=180] (74)  at (0.225, -0.375) {};
		\node [style=none, font={\scriptsize}] (75)  at (0.525, 0.275) {...};
		\node [style=none, font={\scriptsize}, color=gray] (76)  at (0.775, 1.0) {$1$};
		\node [style=none] (77)  at (0.475, 1.25) {};
		\node [style=W-1-n, shape border rotate=180] (78)  at (-0.075, -0.85) {};
		\node [style=none, font={\scriptsize}, color=gray] (82)  at (0.175, -0.75) {$a_i$};
		\node [style=none, font={\scriptsize}, color=gray] (83)  at (-0.325, -0.75) {$a_i$};
		\node [style=W-1-n, shape border rotate=0] (84)  at (-0.725, 0.25) {};
		\node [style=W-1-n, shape border rotate=0] (85)  at (-0.225, 0.25) {};
		\node [style=W-1-n, shape border rotate=0] (86)  at (0.275, 0.25) {};
		\node [style=W-1-n, shape border rotate=0] (87)  at (0.775, 0.25) {};
		\node [style=none, font={\scriptsize}, color=gray] (88)  at (-0.65, -0.1) {$1$};
		\node [style=none, font={\scriptsize}, color=gray] (89)  at (-0.175, -0.1) {$1$};
		\node [style=none, font={\scriptsize}, color=gray] (90)  at (0.15, -0.1) {$1$};
		\node [style=none, font={\scriptsize}, color=gray] (91)  at (0.6, -0.175) {$1$};
	\end{pgfonlayer}
	\begin{pgfonlayer}{edgelayer}
		\draw (64) to (78);
		\draw (67) to (85);
		\draw (72.center) to (67);
		\draw (73) to (87);
		\draw (74) to (78);
		\draw (77.center) to (73);
		\draw (78) to (65.center);
		\draw (84) to (64);
		\draw (84) to (67);
		\draw (85) to (64);
		\draw (86) to (74);
		\draw (86) to (73);
		\draw (87) to (74);
	\end{pgfonlayer}
\end{tikzpicture}%
}
\eq{}\scalebox{\scaling}{
}
\end{align*}
\end{remark}

\begin{proposition}
\label{prop:soundness}
All equations in \zw are sound, i.e.:
\[\zw\vdash D_1=D_2\implies \interp{D_1}=\interp{D_2}\]
\end{proposition}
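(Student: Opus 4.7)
The proof is a routine case analysis, and I plan to proceed by structural induction on the derivation $\zw \vdash D_1 = D_2$. Because $\interp{\cdot}$ is defined functorially into $\cat{Qudit}_d$, all congruence rules (sequential composition, tensor product, rewriting under arbitrary context) are automatically preserved, and the axioms of a symmetric $\dagger$-compact prop hold in $\cat{Qudit}_d$ by construction. The flexsymmetry of the Z-spider and co-commutativity of the W-node also come for free: the sum $\sum_k r^k \sqrt{k!}^{n+m-2}\ketbra{k^m}{k^n}$ is manifestly invariant under permutation of legs, and the multinomial coefficient $\binom{k}{i_1,\ldots,i_n}$ is symmetric in its lower indices. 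This leaves only the ten equations of \Cref{fig:equational-theory} to check.

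For each such equation I would expand both sides in Dirac notation, reducing soundness to a small combinatorial identity. The key fact tying the two generator families together is $\sqrt{\binom{k}{i_1,\ldots,i_n}}\cdot\sqrt{i_1!\cdots i_n!} = \sqrt{k!}$, which ensures that the $\sqrt{k!}^{n+m-2}$ rescaling on Z-spiders and the $\sqrt{\binom{k}{\ldots}}$ weights on W-nodes combine coherently. Several equations reduce almost immediately: \nameref{ax:Z-spider} is Z-spider fusion (add $r$-parameters, multiply scalars, track the change in degree); \nameref{ax:copy} and \nameref{ax:loop} exploit that the Z-spider is diagonal in the computational basis; \nameref{ax:W-unit}, \nameref{ax:one}, and \nameref{ax:hopf} follow from direct computation using orthonormality of the computational basis. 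The semantic decomposition equations (\ref{eq:ket-distrib-on-W})--(\ref{eq:id-decomp}) offer useful shortcuts when manipulating the ``particle-count'' states.

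The genuinely delicate cases are the bialgebra rules \nameref{ax:bialgebra-Z-W} and \nameref{ax:bialgebra-W}, the sum rule \nameref{ax:sum}, and the decomposition rule \nameref{ax:decomp-Z}. The two bialgebras require careful bookkeeping of the square-root coefficients across both families, but ultimately reduce to standard multinomial recursions of the form $\binom{k}{i_1,\ldots,i_n}=\binom{k}{i_1}\binom{k-i_1}{i_2,\ldots,i_n}$. The sum rule must recover complex addition via the W-node's action on black-dot states, which is an application of (\ref{eq:ket-distrib-on-W}) combined with (\ref{eq:ket-sum}). I expect \nameref{ax:decomp-Z} to be the main obstacle: it expresses a $0\to1$ Z-spider as a post-selected branching through a $(d{-}1)$-particle channel, and checking that the prefactor $\tfrac{1}{(d-1)!}$ restores the correct normalisation requires evaluating a sum of multinomial coefficients over the simplex $\{(i_1,i_2) : i_1+i_2 = d-1\}$ and matching it against the $\sqrt{(d-1)!}$ factor produced by the target ket. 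Once all ten checks succeed, the induction closes and the proposition follows.
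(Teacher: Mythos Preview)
Your approach matches the paper's: both treat soundness as routine case-by-case verification of the axioms in \Cref{fig:equational-theory}, leaning on the semantic decompositions (\ref{eq:ket-distrib-on-W})--(\ref{eq:id-decomp}) and the functoriality of $\interp{\cdot}$ to handle congruence and the prop structure.

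Two small corrections to your identification of the ``delicate cases''. First, the key combinatorial identity for \nameref{ax:bialgebra-W} is not the multinomial recursion $\binom{k}{i_1,\ldots,i_n}=\binom{k}{i_1}\binom{k-i_1}{i_2,\ldots,i_n}$ (that one handles associativity \nameref{ax:W-assoc}), but rather the \emph{Vandermonde identity}
\[
\sum_{k_1+\cdots+k_p=m} \binom{n_1}{k_1}\cdots\binom{n_p}{k_p} = \binom{n_1+\cdots+n_p}{m},
\]
which is what makes the W-W bialgebra law hold in the capped context. Second, for \nameref{ax:sum} the decisive step is precisely the binomial formula $(r+s)^k = \sum_j \binom{k}{j} r^j s^{k-j}$; your appeal to (\ref{eq:ket-distrib-on-W}) and (\ref{eq:ket-sum}) is fine as a route in, but you should name the binomial identity explicitly since that is where the complex addition actually emerges. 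Finally, the paper does not single out \nameref{ax:decomp-Z} as especially hard --- once you expand both sides against the basis $\ket{k}$, the $\tfrac{1}{(d-1)!}$ factor drops out from a single multinomial evaluation rather than a sum over a simplex. The equations the paper flags as requiring genuine identities are \nameref{ax:bialgebra-W} and \nameref{ax:sum}.
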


\begin{proof}
This is a straightforward verification for most of the equations. They can all be proven using the aforementioned identities (Equations~\ref{eq:ket-distrib-on-W} to \ref{eq:id-decomp}) in the semantics of the diagrams, especially the decomposition of the identity (Equation \ref{eq:id-decomp}). Equations~\nameref{ax:bialgebra-W} and \nameref{ax:sum} require respectively the Vandermonde identity and  the binomial formula:
\[
\sum_{k_1+...+k_p=m} \binom{n_1}{k_1}...\binom{n_p}{k_p} = \binom{n_1+...+n_p}{m}
\quad\text{ and }\quad(r+s)^n = \sum_{k=0}^n\binom{n}{k}r^ks^{n-k}.\qedhere\]
\end{proof}

\subsection{Minimality}

\label{sec:qudit-minimality}

Minimality of an equational theory states that every single equation is necessary: none can be derived from the others. Said otherwise, as soon as we remove one of the equations, some equalities (that were previously provable) become unprovable. Minimality is fundamental, as it allows us to pinpoint properties that are necessary to our model, and a contrario those that are consequences of the necessary ones. Notice however that there is usually not a single minimal equational theory, for two reasons: 1) it often happens that one equation can be replaced by an equivalent one, and 2) one could start with a totally different set of equations. Most of the equations we chose are generalisations or adapatations of equations that were already used by other graphical languages, and are usually motivated by either categorical or physical considerations. We only ``filled in the blanks'' with equations \nameref{ax:loop} and \nameref{ax:decomp-Z}.

In trying to prove minimality, it often happens that two equations fail to be proven necessary individually, but that the pair (i.e.~at least one of the two) can be proven necessary. Such cases underline some sort of proximity between the two equations, and the obstacle it poses to minimality can sometimes be circumvented (somewhat artificially) by merging them into a single, potentially slightly less intuitive, equation. This happened once here: we merged equations \def\fig{Hopf}
$\scalebox{\scaling}{
}\eq[]{}\scalebox{\scaling}{
}$ and \def\fig{ket-0}$\scalebox{\scaling}{
}\eq[]{}\scalebox{\scaling}{
}$, that we initially had as axioms, into Equation \nameref{ax:hopf}. This finally provides us with a minimal equational theory for qudit ZW-calculus.

To prove that an equation is necessary, we define a non-standard interpretation which is preserved by all the equations (including the axioms of $\dag$-compact props), except the equation of interest. When such an interpretation is exhibited, we can safely conclude that the equation is necessary, since if it were a consequence of the others, it would also preserve this interpretation. Interpretations like these sometimes simply take the form of a quantity that turns out to be invariant for all equations except the one that is considered. In the realm of quantum graphical theories, such arguments were used for single equations in \cite{Duncan2009graph,%
Schroder2014incomplete,%
Perdrix2016supplementarity,%
Jeandel2018Ycalculus,%
Jeandel2017Cyclotomic}, partial minimality results were obtained for Clifford ZX-calculus \cite{Backens2020towards}, unrestricted ZX-calculus \cite{Vilmart2019nearminimal}, and quite recently, full minimality (with completeness) was obtained for quantum circuits \cite{Clement2023minimal}.

In the following, we show that the equational theory \zw from \Cref{fig:equational-theory} is minimal, i.e.~that none of the equations can be derived from the others. It is to be noted that most of the equations in \Cref{fig:equational-theory} are schemas, that is they are parametrised, and  the equation is assumed for all possible values of the parameters. Our minimality result is ``weak'' in the sense that for each equation schema, we show that \emph{at least} one of the occurrences cannot be derived from the other equations, but we do not pinpoint for which parameters the equation is necessary or not. Nevertheless:

\begin{theorem}
\label{thm:minimality}
All equations in \zw are necessary, hence \zw is minimal.
\end{theorem}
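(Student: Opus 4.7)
The plan is to apply the separating-interpretation methodology outlined in the paragraph preceding the statement. For each of the eleven axiom schemas $e$ appearing in \Cref{fig:equational-theory}, I will construct an alternative interpretation $\interp{\cdot}_e$ of $\cat{ZW}_d$-diagrams, taking values either in a twisted variant of $\cat{Qudit}_d$ (i.e.\ the same category but with generators reinterpreted via modified coefficients) or in a purely combinatorial structure such as $(\mathbb{N},+)$ or $(\mathbb{Z}/2\mathbb{Z},+)$, that is preserved by every axiom of $\dag$-compact props and by every equation of $\zw$ \emph{except} $e$, and for which at least one instance of $e$ fails. The existence of such a separating interpretation immediately rules out any derivation of $e$ from the remaining axioms, since such a derivation would preserve $\interp{\cdot}_e$.

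First I would handle the axioms admitting a clean combinatorial invariant. For \nameref{ax:loop}, I would use a (possibly signed) count of Z-spider self-loops designed to be stable under spider fusion, the bialgebras, and the other topological rearrangements. For \nameref{ax:one}, I would count $\ket 1$-generators modulo a parity or weight, after checking that no other rule can adjust this count. For \nameref{ax:decomp-Z}, an invariant tracking $0\to 1$ Z-spiders, weighted so as to be stable under \nameref{ax:sum} and \nameref{ax:copy}, should suffice. For \nameref{ax:copy}, I would deform the scalar attached to the Z-spider so that it no longer distributes correctly over $\ket k$ inputs, while remaining compatible with the fusion rule \nameref{ax:Z-spider}.

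For the more structural axioms \nameref{ax:Z-spider}, \nameref{ax:W-assoc}, \nameref{ax:W-unit}, \nameref{ax:hopf}, and for \nameref{ax:sum}, the separating interpretations will be controlled deformations of the standard semantics. The generic recipe is to multiply the interpretation of one generator by an arity- or parameter-dependent factor that is non-multiplicative in precisely the way needed to break the targeted equation, and then to check that this deformation commutes with the remaining equations. For instance, to separate \nameref{ax:Z-spider}, the $\sqrt{k!}^{n+m-2}$ coefficient can be replaced by a non-multiplicative function of $n,m$ agreeing with the standard one in limited arity; to separate \nameref{ax:W-assoc}, one deforms the multinomial coefficient on the W-node by a non-associative factor. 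The fact, noted in the paper, that the original Hopf rule and the $\ket 0$-absorption rule had to be merged into the single equation \nameref{ax:hopf}, signals that no deformation separates them individually; one single deformation should however separate the merged equation \nameref{ax:hopf} from everything else.

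The main obstacle will be ensuring these deformations do not conflict with each other: a modification designed to break one axiom must remain compatible with all the others, and the most delicate cases are expected to be \nameref{ax:bialgebra-Z-W} and \nameref{ax:bialgebra-W}, which intertwine Z- and W-generators in a global way, and whose separating interpretations must simultaneously respect the two spider laws, the copy schema \nameref{ax:copy}, and the sum rule \nameref{ax:sum}. For these I would construct bespoke twisted interpretations into $\cat{Qudit}_d$, with correctness of the separation reducing to a handful of numerical identities involving multinomial and Vandermonde-type sums, in the spirit of the soundness verification (\Cref{prop:soundness}). In each case the final check is to exhibit an explicit instance of the axiom where the LHS and RHS acquire different alternative interpretations, concluding the necessity of every equation in $\zw$ and hence minimality.
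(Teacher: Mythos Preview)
Your general strategy---one separating interpretation per axiom---is the right framework, and matches the paper. But several of your concrete proposals do not work, and for the hardest cases your plan diverges substantially from what is actually needed.

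The combinatorial invariants you suggest are broken by other rules. Counting $\ket 1$-generators for \nameref{ax:one} fails because \nameref{ax:copy} turns one $\ket 1$ into $n$ of them (plus a scalar), so the count is not preserved. Tracking $0\to 1$ Z-spiders for \nameref{ax:decomp-Z} fails because \nameref{ax:Z-spider} freely creates and destroys them by splitting and merging, and \nameref{ax:sum} does as well. A self-loop count for \nameref{ax:loop} is delicate at best (self-loops are not a robust syntactic notion once the compact structure and spider fusion are in play). The paper instead uses very different invariants here: for \nameref{ax:one} it observes this is the only rule producing a non-empty diagram from the empty one; for \nameref{ax:decomp-Z} it sends every $\ket 1$ to $\ket 0$ and works up to scalar; for \nameref{ax:loop} it uses a $\dag$-compact functor rescaling $\ket 1$ by a $(2d-2)$-th root of unity and twisting Z-spider parameters accordingly.

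The largest gap is your plan for the two bialgebras. You propose ``bespoke twisted interpretations into $\cat{Qudit}_d$'' checked via Vandermonde-type identities, but the paper does something categorically different: it treats diagrams as graphs and uses \emph{path invariants}. For \nameref{ax:bialgebra-W} the invariant is the existence of a ``W-path'' (boundary-to-boundary, through W-nodes only via their input leg, never through a Z-spider); for \nameref{ax:bialgebra-Z-W} it is the existence of an ``effective Z-path'', which requires the non-trivial notion of \emph{sole effective outputs} of W-nodes and a dedicated lemma (\Cref{lem:double-W-simplification}) just to show that \nameref{ax:bialgebra-W} preserves such paths. These are non-compositional semantics and the verification is a case analysis over how each rule can intersect a given path---not a numerical identity. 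Your outline gives no indication of this machinery, and a purely coefficient-twisting approach is unlikely to separate the bialgebras while still validating \nameref{ax:Z-spider}, \nameref{ax:W-assoc}, \nameref{ax:sum}, and \nameref{ax:copy} simultaneously. Similarly, for \nameref{ax:hopf} the paper uses a terminating wire-annotation procedure (essentially a capacity-propagation fixed point), and for \nameref{ax:W-assoc} an arity argument about non-trivial W-nodes; neither fits the ``deform the multinomial coefficient'' template you sketch.
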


Several arguments in the proof are graphical, meaning that they rely on the understanding that a diagram can also be seen as a special kind of graph with vertices (inputs, outputs, Z-spiders, W-nodes, states, global scalars) and edges (wires, cup, cap, with swaps representing crossings). As such, when we talk about ``a W-node in a diagram $D$'', we actually refer to a vertex (of the Z type) of said diagram. One argument in the proof require the notion of ``effective Z-path'', that is a path that goes only through Z-spiders and trivial W-nodes -- which W-nodes that could be replaced by identity wires without it making any difference -- and that can be used in non-trivial computations.
\begin{definition}[Sole effective output of a W-node]
	Let $D$ be a diagram with a collections of $n$ distinguished $W$-node. Let's call `$a_1$',$\dots$,`$a_n$' one output of each, as shown in \Cref{eq:single-out-W} below.
	We say that wires `$a_1$',$\dots$,`$a_n$' are \emph{jointly} the \emph{sole effective outputs} of the W-nodes if \Cref{eq:sole-effective-output} is satisfied:\\[-1em]
	\begin{equation}
	\label{eq:single-out-W}
	
\scalebox{\scaling}{
	\input{./figures/D.tikz}%
} = 
\scalebox{\scaling}{
	\input{./figures/sole-effective-output-with.tikz}%
}
	\end{equation}
	\begin{equation}
	\interp{
\scalebox{\scaling}{
	\input{./figures/sole-effective-output-without.tikz}%
}}=\interp{
\scalebox{\scaling}{
	\input{./figures/sole-effective-output-ket0.tikz}%
}}
	\end{equation}
\end{definition}
In particular, if we only consider one W-node that has a sole effective output -- we call such a node a \emph{trivial W-node} -- it follows that:
\def\fig{sole-effective-output}
\begin{equation}
\label{eq:sole-effective-output}
\interp{\scalebox{\scaling}{
}} = \interp{\scalebox{\scaling}{
}}
\end{equation}
\begin{definition}[Effective Z-path]
	An effective Z-path of a diagram $D$ is a path going from a boundary to another (inputs and/or outputs), such that:
	\begin{itemize}
		\item For each W-node it goes through, it cannot use two outputs of the W-node (so it must use its input). Considering all those W-nodes at once, those outputs must be jointly the sole effective outputs of those W-nodes.
		\item There exists a state $\ket{\phi}$ that is not of the form $\lambda \ket{0}$ or $\lambda \ket{1}$ for some $\lambda \in \mathbb{C}$, that \emph{inhabits} the path. That is, if we feed to $\interp{D}$ the state $\ket{\phi}$ and/or the effect $\bra{\phi}$ to the two inputs and/or outputs of the path, then the result is not $\vec 0$. 
	\end{itemize}
\end{definition}

We can now prove \Cref{thm:minimality}:
\begin{proof}
We consider each of these equations individually, and give more details in \Cref{sec:proof_minimality}:
\begin{itemize}
\setlength{\itemsep}{0.2em}
\item[\nameref{ax:Z-spider}] When applying the transformation that turns all Z-spider parameters and global scalars to their real part ($r\mapsto \Re(r)$), Equation~\nameref{ax:Z-spider} is the only one that is not preserved.
\item[\nameref{ax:W-assoc}] It is the only equation permitting to create non-trivial W-nodes with arity $>d$ from a diagram that only has non-trivial W-nodes with arity $\leq d$.
\item[\nameref{ax:W-unit}] It is the only equation that can create nodes connected to boundaries from a node-free diagram. 
\item[\nameref{ax:hopf}] To each wire in a diagram $D$, we associate a number $0\leq k < d$ (or more graphically we annotate each wire by some number $k$)\footnote{The annotations produced here are upper bounds on the value of the kets that can go through the wires. As such, they are very closely related to the \emph{capacities} from \Cref{sec:fdhilb}.}. The procedure to do so is as follows:
\begin{enumerate}
	\item annotate all wires with $d-1$
	\item rewrite the annotations using the following rules, until a fixed point is reached:\\
	$
	\labelitemii~~ \def\fig{ket-0-ofs}\scalebox{\scaling}{
}~\overset{a\neq0}\to~\scalebox{\scaling}{
}
	\hspace*{3.5em}
	\labelitemii~~ \def\fig{ket-1-ofs}\scalebox{\scaling}{
}~\overset{a>1}\to~\scalebox{\scaling}{
}
	\hspace*{3.5em}
	\labelitemii~~ \def\fig{W-node-ofs}\scalebox{\scaling}{
}\to\scalebox{\scaling}{
}
	$\\
	$
	\labelitemii~~ \def\fig{Z-spider-ofs}\scalebox{\scaling}{
}~\to~\scalebox{\scaling}{
}\quad \text{ with } a:=\min(a_i)
	$
\end{enumerate}
This simple procedure obviously terminates, as a step is only applied if at least one of the annotations is decreased. By considering inputs and outputs of $D$ (which are the only wires that can be guaranteed to remain during rewrites with \zw), we can check that Equation \nameref{ax:hopf} is the only one that can modify the outcome of the procedure.
\item[\nameref{ax:bialgebra-Z-W}] Consider diagrams as graphs, and use the above definition of an ``effective Z-path''. All equations except \nameref{ax:bialgebra-Z-W} preserve the existence of effective Z-paths, although proving it for \nameref{ax:bialgebra-W} is quite involved and relies on the following lemma proved in the appendix:
	\begin{lemma}\label{lem:double-W-simplification}~\\
		Let $D$ be a diagram of the form shown in \Cref{eq:double-W}, such that we have \Cref{eq:double-W-simplification}.
		\begin{equation}
		\label{eq:double-W}
		
\scalebox{\scaling}{
	\input{./figures/D.tikz}%
} = 
\scalebox{\scaling}{
	\input{./figures/lem-double-W.tikz}%
}
		\end{equation}
		\begin{equation}
		\label{eq:double-W-simplification}
		\interp{
\scalebox{\scaling}{
	\input{./figures/D.tikz}%
}} = \interp{
\scalebox{\scaling}{
	\input{./figures/lem-W-above.tikz}%
}} = \interp{
\scalebox{\scaling}{
	\input{./figures/lem-W-below.tikz}%
}}
		\end{equation}
		Then \Cref{eq:double-W-without} is satisfied:
		\begin{equation}
		\label{eq:double-W-without}
		\interp{
\scalebox{\scaling}{
	\input{./figures/lem-W-without.tikz}%
}} = \interp{
\scalebox{\scaling}{
	\input{./figures/lem-W-ket0.tikz}%
}}
		\end{equation}
\end{lemma}
\item[\nameref{ax:bialgebra-W}] Consider diagrams as graphs, and define a ``W-path'' in the diagram as a path 1) that goes from a boundary to another boundary, 2) which cannot use two outputs of a W-node (if it goes through a W-node, it has to use the input edge) and 3) that does not go through a Z-spider. All equations, except \nameref{ax:bialgebra-W}, preserve the existence of a W-path. \nameref{ax:bialgebra-W} is the only equation that can bring the number of W-paths from non-zero to zero (which is done by adding a Z-spider on the path).
\item[\nameref{ax:sum}] 
Take the interpretation that maps all the Z-spider parameters (and the global scalars) to their absolute value ($r\mapsto |r|$). This interpretation preserves all equations except \nameref{ax:sum}.
\item[\nameref{ax:one}] It is the only equation that can create generators out of empty diagrams.
\item[\nameref{ax:copy}] It is the only equation that can create a non-real scalar from a diagram with only real scalars.
\item[\nameref{ax:loop}] Let $\varpi:=e^{i\frac\pi{d-1}}$ be the first $(2d-2)$-th root of unity. Consider the $\dagger$-compact monoidal functor (i.e.~functor that preserves compositions, symmetry and compact structure) that maps the generators as follows:\\
$
\labelitemii~~ 
\scalebox{\scaling}{
}\mapsto \varpi
\scalebox{\scaling}{
}
\hspace*{5em}
\labelitemii~~ 
\scalebox{\scaling}{
	\input{./figures/Z-spider.tikz}%
}\mapsto 
\scalebox{\scaling}{
	\input{./figures/minimality-l-Z-spider.tikz}%
}
\hspace*{5em}
\labelitemii~~ 
\scalebox{\scaling}{
	\input{./figures/W-n.tikz}%
}\mapsto
\scalebox{\scaling}{
	\input{./figures/W-n.tikz}%
}
$\\
When $d>2$, all equations but \nameref{ax:loop} are preserved by this functor. We can make the argument work when $d=2$ by choosing any $\varpi$ such that $\varpi^2\neq1$ and by working up to a scalar factor.
\item[\nameref{ax:decomp-Z}] Take the interpretation that maps $
\scalebox{\scaling}{
}$ (and subsequently all $
\scalebox{\scaling}{
}$) to $\def\fig{ket-0}\scalebox{\scaling}{
}$. All rules hold (up to a the scalar factor) except \nameref{ax:decomp-Z}.\qedhere
\end{itemize}
\end{proof}

\subsection{Completeness}

\label{sec:qudit-completeness}

Completeness of an equational theory with respect to a semantics is the fundamental property that ensures that semantical equivalence of diagrams is entirely captured by the equational theory. Minimality is worthless without some form of completeness, as it is extremely simple to design minimal, but not complete, equational theories. For instance, the empty equational theory (that contains no axiom) is minimal but clearly not complete for the qudit ZW-diagrams. We hence show in this section that we indeed have completeness.

\subsubsection{Normal Form and Universality}

The usual way to prove completeness is to show that any diagram can be put in a normal form, and that this normal form is unique and similar for all equivalent diagrams. As is customary in a category that is compact-closed, we can focus on states, as there is an isomorphism between operators and states \cite{Abramsky2004categorical}: 
\def\fig{map-state}
\begin{align*}
\scalebox{\scaling}{
}
\eq{}\scalebox{\scaling}{
}
~~=:~~\scalebox{\scaling}{
}
\end{align*}

Proving completeness requires a fair amount of diagrammatic derivations, especially when starting from a minimal equational theory, to get enough material to define a normalisation strategy. These lemmas and their proofs are postponed to the appendix, pages \pageref{sec:proofs-qudit} to \pageref{sec:appendix-fdhilb}.
\bigskip

\noindent
\begin{minipage}{0.6\columnwidth}
\begin{definition}
We define $\mathcal N:\cat{Qudit}_d\to\cat{ZW\!}_d$ as the functor that maps any $n$-qudit state\\
$\hspace*{1.5em}\ket{\psi} = r_0\ket{0...0}+r_1\ket{0...01}+...+r_i\ket{x_1^i...x_n^i}+...$\\
to the diagram on the right.\\
We say of any diagram in the image of $\mathcal N$ that it is in normal form.
\end{definition}
\end{minipage}
\begin{minipage}{0.39\columnwidth}
\phantom{.}\hfill$\mathcal N(\ket\psi)=
\scalebox{\scaling}{
	\input{./figures/NF.tikz}%
}$
\end{minipage}

\bigskip\noindent
This construction is a direct generalisation of the normal form of the qubit ZW-diagrams~\cite{Hadzihasanovic2015diagrammatic}, which is also considered in~\cite{Hadzihasanovic2017algebra} in the context of $q$-arithmetic. It creates a diagram whose interpretation is the starting state:
\begin{proposition}
\label{prop:NF-preserves-semantics}
$\forall \ket\psi\in\cat{Qudit}_d[0,n],~\interp{\mathcal N(\ket\psi)} = \ket\psi$.
\end{proposition}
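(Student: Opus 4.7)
The plan is to verify $\interp{\mathcal N(\ket\psi)}=\ket\psi$ by direct computation of the interpretation of the normal form diagram using the semantic rules from \Cref{sec:qudit-diags}. Since $\mathcal N$ is defined term-by-term on the expansion $\ket\psi=\sum_i r_i\ket{x_1^i\ldots x_n^i}$, the computation splits naturally along this decomposition into contributions from one ``branch'' of the diagram per nonzero term, with scalar-carrying Z-spiders producing the $r_i$ factors, kets $\ket{x_j^i}$ selecting the basis component on each wire $j$, and W-trees at the external outputs combining the branches.

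The two key computational ingredients are \Cref{eq:copy}, which states that a ket $\ket k$ entering a Z-spider of parameter $r$ is copied along its outgoing wires while producing a global factor $r^k$, and \Cref{eq:ket-sum}, which states that W-trees regroup incoming kets $\ket k$ and $\ket\ell$ into $\ket{k+\ell}$ (or annihilate them if $k+\ell\geq d$). I would unpack the interpretation layer by layer, starting from the scalar Z-spiders and kets at the leaves, tracking how the $r_i^{x_j^i}$ factors propagate upwards, and then regrouping the outputs through the W-trees at each of the $n$ external wires. Cross-terms arising from the multilinearity of W-nodes applied to distinct branches either reproduce genuine terms of $\ket\psi$ or are killed by the capacity constraint in \Cref{eq:ket-sum}, so only the intended contributions survive.

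The main obstacle is the bookkeeping of the $\sqrt{k!}^{n+m-2}$ coefficients on Z-spiders and the $\sqrt{\binom{k}{i_1,\ldots,i_n}}$ coefficients on W-nodes, which must cancel exactly so that the final coefficient on each basis vector is precisely $r_i$. As emphasised after \Cref{eq:id-decomp}, the non-standard normalization of the generators and the rescaling $\ket k\mapsto\sqrt{k!}\ket k$ used for the black-dot notation were specifically chosen for this cancellation, generalising the qubit normal-form argument of \cite{Hadzihasanovic2015diagrammatic,Hadzihasanovic2017algebra}. I would verify the coefficient computation first on a single branch with a single basis vector, where the combinatorics simplify, and then assemble the general case by summing over branches.
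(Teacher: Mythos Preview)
Your proposal misreads the structure of the normal form and thereby misses the one-line simplification that drives the paper's proof. In $\mathcal N(\ket\psi)$ there are no explicit kets $\ket{x_j^i}$ at the leaves: the integers $x_j^i$ are encoded as the multiplicity of parallel edges between the $i$-th Z-spider and the $j$-th output W-node, and the only state in the diagram is a single $\ket1$ at the top, feeding a $1\to N$ W-node whose $N$ outputs go to the $N$ parameter-carrying Z-spiders. The paper's proof starts precisely there: since this top piece has interpretation $\ket{1,0,\ldots,0}+\ket{0,1,\ldots,0}+\cdots+\ket{0,\ldots,0,1}$, the whole diagram decomposes as a sum over $i$ of configurations in which exactly the $i$-th Z-spider receives $\ket1$ and all others receive $\ket0$. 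Hence no cross-terms ever appear and no capacity-killing argument from \Cref{eq:ket-sum} is needed; the ``branches'' are already disjoint.

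This also shows why your factor $r_i^{x_j^i}$ is wrong. By \Cref{eq:copy} with $k=1$, the $i$-th Z-spider (whose parameter is $r_i/\sqrt{x_1^i!\cdots x_n^i!}$, not $r_i$) emits its parameter to the first power and copies $\ket1$ along each of its $\sum_j x_j^i$ outgoing edges; the inactive spiders, fed $\ket0$, contribute only $\ket0$'s with unit coefficient. At the $j$-th output W-node one then gathers $x_j^i$ copies of $\ket1$ (plus $\ket0$'s from the other branches), which by \Cref{eq:ket-sum} assemble into the black dot $\ket{x_j^i}$, i.e.\ $\sqrt{x_j^i!}\,\ket{x_j^i}$. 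These $\sqrt{x_j^i!}$ factors cancel the normalisation built into the spider parameter, yielding exactly $r_i\ket{x_1^i,\ldots,x_n^i}$. So the bookkeeping you anticipate is real, but the cancellation is between the Z-spider parameter and the $\sqrt{k!}$ in the black-dot convention, not an interplay of Z- and W-coefficients across branches as your outline suggests.
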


Where for any category $\mathcal{C}$, we write $\mathcal{C}[n,m]$ for the set all the morphisms from $n$ to $m$. As a simple consequence of this proposition, any qudit operator can be represented by a diagram of $\cat{ZW\!}_d$:
\begin{corollary}[Universality]
\label{cor:universality}
$\forall f \in \cat{Qudit}_d[n,m],~\exists D_f\in\cat{ZW\!}_d[n,m],~
\interp{D_f} = f$.
\end{corollary}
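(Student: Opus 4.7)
The plan is to leverage \Cref{prop:NF-preserves-semantics} together with the map-state duality afforded by the compact structure of both $\cat{ZW}_d$ and $\cat{Qudit}_d$. Since $\cat{Qudit}_d$ is compact closed, for any $f \in \cat{Qudit}_d[n,m]$ we obtain a state $\tilde f \in \cat{Qudit}_d[0,n+m]$ by bending all $n$ inputs of $f$ to outputs via the cap. Explicitly,
\[
\tilde f = \bigl(f \otimes \mathit{id}_n\bigr) \circ \bigl(\textstyle\sum_{k_1,\dots,k_n} \ket{k_1,\dots,k_n,k_1,\dots,k_n}\bigr),
\]
and the snake equations in $\cat{Qudit}_d$ recover $f$ from $\tilde f$ by bending the last $n$ outputs back down with a cup.

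First I would apply \Cref{prop:NF-preserves-semantics} to $\tilde f$, obtaining a diagram $\mathcal N(\tilde f) \in \cat{ZW}_d[0,n+m]$ with $\interp{\mathcal N(\tilde f)} = \tilde f$. I then define
\[
D_f \;:=\; \bigl(\mathit{id}_m \otimes \tikzfig{cup}^{\otimes n}\bigr) \circ \bigl(\mathcal N(\tilde f) \otimes \mathit{id}_n\bigr),
\]
which is a valid diagram in $\cat{ZW}_d[n,m]$ since all constituents are generators or composites of generators of $\cat{ZW}_d$. By functoriality of $\interp{\cdot}$, which preserves sequential and parallel composition as well as the interpretation of the cup, we get
\[
\interp{D_f} \;=\; \bigl(\mathit{id}_m \otimes \interp{\tikzfig{cup}}^{\otimes n}\bigr) \circ \bigl(\tilde f \otimes \mathit{id}_n\bigr).
\]

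Finally, I would invoke the snake equations in $\cat{Qudit}_d$ (which hold because it is a $\dag$-compact prop with the stated cup and cap) to conclude that this composite collapses back to $f$. There is no real obstacle here: universality is an immediate corollary of the state-level result combined with compact closure, and the only thing one must check is that the cups and caps of $\cat{ZW}_d$ are interpreted as the cups and caps of $\cat{Qudit}_d$, which is true by definition of $\interp{\cdot}$. Hence $D_f \in \cat{ZW}_d[n,m]$ satisfies $\interp{D_f} = f$.
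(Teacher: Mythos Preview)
Your proposal is correct and follows exactly the route the paper intends: the corollary is stated as an immediate consequence of \Cref{prop:NF-preserves-semantics} via the map--state duality in a compact-closed category, and you have simply spelled out that duality explicitly. The paper gives no further proof beyond noting this consequence, so your argument matches it (with more detail than the paper itself provides).
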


Since we defined the normal form as the image of a map from the semantics, any diagram can only be associated to a unique normal form.

\subsubsection{Completeness}

Our goal now is to show that any diagram can be put in normal form. To do so, we show that all generators can be put in normal form, and that all compositions of diagrams in normal form can be put in normal form.

We start by showing the latter for the tensor product:
\begin{proposition}
\label{prop:NF-tensor}
The spatial composition of diagrams in normal form can be put in normal form, i.e.~$\zw\vdash\mathcal{N}(v_1)\otimes\mathcal{N}(v_2)=\mathcal{N}(v_1\otimes v_2)$.
\end{proposition}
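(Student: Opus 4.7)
The plan is to construct a rewrite chain from $\mathcal{N}(v_1) \otimes \mathcal{N}(v_2)$ to $\mathcal{N}(v_1 \otimes v_2)$ by reorganising the ``branch-and-merge'' structure of the normal form. Both diagrams are built from the same primitive pieces: for each basis term in the decomposition of the state, one branch comprising a scalar (the coefficient) and a set of kets (the basis ket on each output), all aggregated by W-tree structures at each output wire. The distinction is that $\mathcal{N}(v_1) \otimes \mathcal{N}(v_2)$ presents the branches as a product of two independent enumerations ($d^n$ branches covering the first $n$ wires and $d^m$ branches covering the last $m$), whereas $\mathcal{N}(v_1 \otimes v_2)$ presents them as a single enumeration over the $d^{n+m}$ pairs, with every branch spanning all $n+m$ outputs.

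Concretely, I would write $v_1 = \sum_i r_i \ket{x^i}$ and $v_2 = \sum_j s_j \ket{y^j}$, so that $v_1 \otimes v_2 = \sum_{i,j} r_i s_j \ket{x^i, y^j}$, and then aim to pair off scalars and kets branch-by-branch. The key ingredients are: (i) the identity $r \otimes s \vdash rs$ from the Remark after Figure~\ref{fig:equational-theory}, which contracts the juxtaposed coefficients $r_i$ and $s_j$ of paired branches into a single scalar $r_i s_j$; (ii) the associativity rule \nameref{ax:W-assoc}, which lets us reshape the two independent W-trees at each output into a single W-tree of the correct arity indexed by pairs; and (iii) the bialgebra/copy rules \nameref{ax:bialgebra-Z-W} and \nameref{ax:copy}, which allow each branch of $v_1$ to be ``replicated'' across the $d^m$ branches of $v_2$ (and symmetrically), so that every pair $(i,j)$ acquires its own linked scalar-and-ket sub-diagram reaching all $n+m$ outputs.

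I expect this to proceed by induction on the arity $m$ of $v_2$. The base case $m=0$ reduces to the observation that $\mathcal{N}(v_1) \otimes r \vdash \mathcal{N}(r v_1)$ for a scalar $r$, which follows from the Remark and the copy rule applied coefficient-wise. For the inductive step, one peels off one output wire of $v_2$, uses the single-wire bialgebra rule to distribute the structure, and invokes the induction hypothesis. The main obstacle I anticipate is step~(iii): the bookkeeping of how a single $v_1$-branch gets duplicated across all $v_2$-branches without introducing unwanted scalar factors, given that the Z-spider is non-special and carries the $\sqrt{k!}^{n+m-2}$ coefficients built into its interpretation. Any copy operation will generate specific numerical factors that must cancel precisely with those produced by the W-tree reshaping. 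Soundness (Proposition~\ref{prop:soundness}) together with the fact that $\interp{\mathcal{N}(v_1) \otimes \mathcal{N}(v_2)} = \interp{\mathcal{N}(v_1 \otimes v_2)}$ guarantees that the equality holds semantically, so the constants work out; the real labour is to realise the derivation syntactically within the minimal equational theory.
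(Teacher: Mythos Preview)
Your plan misses the one rule that actually performs the ``pair off the branches'' step: the W--W bialgebra \nameref{ax:bialgebra-W} (in the paper used via its derived form, Lemma~\ref{lem:W-bialgebra-gen}). The two normal forms each carry a top W-structure (a $\ket1$ fed into a W-node fanning out to the $d^n$, respectively $d^m$, coefficient Z-spiders). Combining these into a single W-structure indexed by the $d^{n+m}$ pairs $(i,j)$ is precisely a bialgebra move between the two W-nodes; the Z--W bialgebra \nameref{ax:bialgebra-Z-W} that you list only lets a Z-spider slide through a W-node and cannot on its own merge two independent W-fanouts. The paper's proof is a direct rewrite: copy a $\ket1$ so both top W-nodes share it, apply \nameref{ax:bialgebra-Z-W} to expose the two W-fanouts, then apply Lemma~\ref{lem:W-bialgebra-gen} once to cross them, and finally use \nameref{ax:bialgebra-Z-W}, \nameref{ax:Z-spider}, \nameref{ax:W-unit} and Lemma~\ref{lem:Z-id} to tidy the resulting grid into the single normal form.

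Your proposed induction on the arity $m$ of $v_2$ is also not well-posed as stated: the normal form $\mathcal N(v_2)$ is indexed by the $d^m$ basis states jointly, and ``peeling off one output wire'' does not give you something of the shape $\mathcal N(\text{smaller state})\otimes(\text{one-wire gadget})$ to which the induction hypothesis would apply. Any attempt to decompose by the last coordinate produces a \emph{sum} of tensor products, not a single one. So even granting the right bialgebra rule, the inductive scheme would need substantial reworking; the paper sidesteps this entirely with the one-shot W--W bialgebra application.
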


When turning arbitrary operators into states, the sequential composition turns into the application of cups $
\scalebox{\scaling}{
}$ onto pairs of outputs of the state, as:
\def\fig{NF-compo}
\begin{align*}
\scalebox{\scaling}{
}
\eq{}\scalebox{\scaling}{
}
\eq{\textit{Prop}.~\ref{prop:NF-tensor}}\scalebox{\scaling}{
}
\end{align*}

\begin{proposition}
\label{prop:NF-cup}
The diagram obtained by applying a cup $
\scalebox{\scaling}{
}$ to two outputs of a diagram in normal form can be put in normal form.
\end{proposition}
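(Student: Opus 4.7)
The plan is to reduce to a canonical configuration, identify the target normal form semantically, and then rewrite the cupped diagram into that target using a branch-wise simplification. By flex-symmetry of the W-tree appearing in the normal form (inherited from \nameref{ax:W-assoc} and the co-commutativity of the W-node), I may assume without loss of generality that the cup is applied to the last two outputs. Writing $\ket\psi = \sum_i r_i \ket{x_1^i, \ldots, x_n^i}$, soundness (\Cref{prop:soundness}) tells me the semantic target is $\ket{\psi'} = \sum_{i : x_{n-1}^i = x_n^i} r_i \ket{x_1^i, \ldots, x_{n-2}^i}$, whose normal form $\mathcal{N}(\ket{\psi'})$ is what I want the cupped diagram to rewrite to.

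The core auxiliary lemma I would prove (or draw from the appendix) is a \emph{basis-cup identity}: for $0\leq a,b < d$, cupping the two legs that carry the ket-symbols $\ket a$ and $\ket b$ produces the global scalar $a!$ when $a=b$ and the zero diagram otherwise. This is the diagrammatic reflection of Equation~(\ref{eq:braket}), provable by induction on $a+b$ from the inductive definition of $\ket k$ in terms of W-nodes and $\ket 1$, using \nameref{ax:hopf}, \nameref{ax:bialgebra-W}, and \Cref{lem:ket-sum}.

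With this lemma in hand, I push the cup through the W-tree of $\mathcal{N}(\ket\psi)$ from the leaves upward. At each branching where two W-subtrees eventually lead to the cupped outputs, I use \nameref{ax:bialgebra-W} to reorganise the graph so the basis-cup identity can act branch by branch. Each branch is either killed (when $x_{n-1}^i \neq x_n^i$) or contributes a scalar $x_{n-1}^i\,!$ that is absorbed into the Z-spider coefficient via \nameref{ax:copy} and \nameref{ax:Z-spider}. The surviving branches, indexed by $i$ with $x_{n-1}^i=x_n^i$, are then grouped by their residual basis state $(x_1^i,\ldots,x_{n-2}^i)$ and aggregated with \nameref{ax:sum}, which combines their Z-spider scalars exactly as the normal form of $\ket{\psi'}$ requires. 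Finally, any leftover subtree that no longer contributes is pruned via \nameref{ax:hopf} and the bialgebra rules, yielding a diagram of the shape $\mathcal{N}(\ket{\psi'})$.

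The main obstacle is that the $d^n$ branches of the normal form share the upper layers of a common W-tree, so the basis-cup identity cannot be applied to each branch independently without first rearranging the structure. This requires repeated use of \nameref{ax:bialgebra-W} and \nameref{ax:bialgebra-Z-W} to thread the cup past the shared W-nodes, all while tracking the rescaling factors $\sqrt{k!}^{n+m-2}$ on Z-spiders and $\sqrt{\binom{k}{i_1,\ldots,i_n}}$ on W-nodes. These rescalings are precisely designed to make the multinomial bookkeeping collapse to the plain sum of complex coefficients demanded by the semantics, but verifying the cancellation across the full W-tree is the technically heaviest part of the argument and is what I expect to occupy the bulk of the appendix proof.
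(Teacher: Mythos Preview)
Your high-level plan---distribute the cup across branches, kill the mismatched ones, absorb a scalar into the matched ones, then merge---matches the paper's strategy. The divergence is in how the distribution step is actually achieved, and that is where your proposal has a real gap.

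You propose to ``push the cup through the W-tree from the leaves upward'' by repeated use of \nameref{ax:bialgebra-W}. But \nameref{ax:bialgebra-W} is stated \emph{with a Z-spider context on top}; the bottom W-nodes in the normal form sit under a $\ket1$--W-fan--Z-spiders structure, not directly under a Z-spider, so the rule does not apply as written. You flag this as ``the technically heaviest part'' but do not say how to discharge it. The paper's solution is a dedicated context lemma (\Cref{lem:W-bialgebra-qubit-ctxt}): one first expands the $\ket1$ at the top via \Cref{lem:ket-1} to create enough parallel connections, which manufactures the Z-spider context needed for \nameref{ax:bialgebra-W}, and then undoes the expansion afterwards. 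With this in hand, together with \Cref{lem:cup-distrib-aux}, the cup is distributed to every Z-spider in one sweep; no bottom-up induction through the W-tree is needed.

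The per-branch simplification also differs. You want a ``basis-cup identity'' $\langle a\,|\,b\rangle$ acting at the ket level, but in the normal form the kets are implicit (encoded as multiplicities of edges), so you would first have to expose them. The paper instead works directly at the Z-spider level: after distribution each Z-spider carries a little cup-gadget, which is removed by \Cref{lem:skewed-hopf} when the two multiplicities differ and by \Cref{lem:Z-loop-on-not} (absorbing the $k!$ into the parameter) when they agree. The final merge of Z-spiders with identical connection patterns is then \Cref{lem:sum-gen}, as you anticipated.
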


Then we move on to showing that all the generators can be put in normal form. To do so, the following lemma will prove useful:

\begin{lemma}
\label{lem:W21-to-NF}
The diagram obtained by applying \scalebox{0.8}{$
\scalebox{\scaling}{
	\input{./figures/W21.tikz}%
}$} to two outputs of a normal form can be put in normal form.
\end{lemma}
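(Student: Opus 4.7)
The plan is to derive the equality branch-by-branch: unfold $\mathcal{N}(\ket\psi)$ into its basis-state branches, commute the $2\to 1$ W-node onto each, simplify using \Cref{eq:ket-sum} (proven graphically as \Cref{lem:ket-sum}), and then regroup. By the structural definition of $\mathcal{N}$, the normal form $\mathcal{N}(\ket\psi)$ assembles one branch per basis state $\ket{x_1^i\ldots x_n^i}$ via a top W-distributor; the $i$-th branch carries amplitude $r_i$ as a Z-spider parameter and produces the kets $\ket{x_1^i},\ldots,\ket{x_n^i}$ on the $n$ outputs.

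Using \nameref{ax:W-assoc} and the cocommutativity of W, I would push the $2\to 1$ W-node through the top distributor so that on each branch it acts on the two kets $\ket{x_{j_1}^i}$ and $\ket{x_{j_2}^i}$ corresponding to the chosen outputs. Applying \Cref{lem:ket-sum}, each branch either yields $\ket{x_{j_1}^i+x_{j_2}^i}$ when $x_{j_1}^i+x_{j_2}^i<d$, or collapses to a diagram with scalar $0$ that I detach from the distributor by propagating $0$ through the branch via \nameref{ax:copy} (and absorbing it into the global scalar). The surviving branches each now produce an $(n-1)$-qudit basis state, but distinct source branches may have collapsed to the \emph{same} target basis state. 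Regrouping such branches via \nameref{ax:W-assoc} and cocommutativity so that their amplitude-carrying Z-spiders share a common context, I fuse their amplitudes into a single Z-spider parameter via \nameref{ax:sum}. The resulting diagram is precisely $\mathcal{N}(\ket{\psi'})$ for the semantically induced $(n-1)$-qudit state $\ket{\psi'}$, as guaranteed by soundness (\Cref{prop:soundness}) and the injectivity of $\mathcal{N}$.

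The main obstacle will be this last regrouping step: aligning two branches that started out at distinct outputs of the top W-distributor so that their amplitude Z-spiders actually meet in a context where \nameref{ax:sum} applies. This requires nontrivial restructuring of the top W-distributor using associativity and cocommutativity, together with careful bookkeeping of the $\sqrt{k!}$ normalisation factors baked into the Z-spider convention. Handling the zero-collapsing branches is routine but still depends on showing that the $0$ scalar correctly absorbs through the surrounding W-structure rather than leaving behind spurious residual pieces; an auxiliary lemma explicitly absorbing $0$ into the distributor may be helpful here.
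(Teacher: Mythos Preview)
Your proposal misreads the structure of $\mathcal{N}(\ket\psi)$ and, as a result, takes a route with a genuine gap. In the normal form, the $i$-th branch does \emph{not} produce explicit kets $\ket{x_1^i},\ldots,\ket{x_n^i}$ on the outputs; instead, the amplitude Z-spider is connected by $x_j^i$ \emph{parallel wires} to the $j$-th \emph{bottom} $W$-node, and those bottom $W$-nodes are what form the outputs. There are thus no explicit $\ket{x_j^i}$ generators in the diagram on which \Cref{lem:ket-sum} could act, and ``unfolding branch-by-branch'' is a semantic picture (coming from the decomposition of the top $\ket1$-$W$ state) rather than a move available in the equational theory. Your plan to ``push the $2\to1$ $W$-node through the top distributor'' would first have to commute it past the bottom $W$-nodes and all the Z-spiders, and you give no mechanism for this.

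The paper's argument is much more direct and sidesteps all of this. The applied $2\to1$ $W$-node sits immediately below two of the bottom $W$-nodes, so a single use of \nameref{ax:W-assoc} merges all three into one bottom $W$-node. A Z-spider that had $k$ wires to the first output and $\ell$ to the second now simply has $k+\ell$ wires to the merged output---this \emph{is} the diagrammatic content of \Cref{lem:ket-sum}, obtained for free. If $k+\ell\geq d$, that Z-spider is removed via \nameref{ax:hopf} (the resulting $\ket0$ copies through and is absorbed by the surrounding $W$-nodes). If two Z-spiders now have identical connection patterns, they are fused by \Cref{lem:sum-gen}. In particular, the ``main obstacle'' you flag---aligning branches so that \nameref{ax:sum} applies---evaporates: after the bottom merge, duplicate Z-spiders already sit in exactly the shared context that \Cref{lem:sum-gen} requires, and no restructuring of the top distributor is needed at all.
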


\begin{proposition}
\label{prop:NF-generators}
All generators of the $\cat{ZW\!}_d$-calculus can be put in normal form.
\end{proposition}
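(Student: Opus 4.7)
The plan is to invoke the map--state duality and reduce each generator $g:n\to m$ to its bent state $\hat g:0\to n+m$, then exhibit a \zw-derivation into the image of $\mathcal N$. Because the compact-structure generators all bend to very simple states -- the identity, cup and cap all yield $\sum_k \ket{k,k}$, and the swap yields $\sum_{k,\ell}\ket{\ell,k,k,\ell}$ -- it is enough to produce a normal form for a handful of ``template'' states and then permute outputs using \Cref{prop:NF-tensor}. The state $\ket1$ is itself a single-branch normal form $\mathcal N(\ket 1)$, and a global scalar $r$ is the zero-output normal form with a single empty branch of weight~$r$.

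For the cup-type state $\sum_{k=0}^{d-1}\ket{k,k}$, I would start from a $0\to 2$ Z-spider (of parameter $0$, whose unique contributing term is $\ket{0,0}$) and use \nameref{ax:decomp-Z} together with \nameref{ax:sum} to expand it into the desired $d$-fold sum of basis branches; the scalar-combining rules \nameref{ax:copy} and \nameref{ax:one} collect the coefficients and the result is in the image of $\mathcal N$. For the $n$-ary W-node, I would reduce to the binary case by iterated \nameref{ax:W-assoc}, and then handle the bent binary W-node $\sum_{i+j<d}\sqrt{\binom{i+j}{i,j}}\ket{i+j,i,j}$ by starting from a normal form for $\sum_k\ket{k,k}$ and applying the $2\to1$ W-node (obtained from the binary $1\to 2$ W-node by the compact structure) to its two outputs: by \Cref{lem:W21-to-NF} the result is again in normal form, and an explicit computation shows that the multinomial weights produced match exactly those dictated by the W-node's semantics.

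The Z-spider is the main obstacle. Its bent state is $\sum_{k=0}^{d-1} r^k \sqrt{k!}^{\,n+m-2}\,\ket k^{\otimes(n+m)}$, whose coefficients tangle the parameter~$r$, the factorial rescaling, and the arity. I would first fuse all legs via \nameref{ax:Z-spider} into a single arity-$(n{+}m)$, parameter-$r$ Z-spider; then apply \nameref{ax:decomp-Z} to open up a W-node on top of which the $\ket k$ particles get distributed; then thread the parameter $r$ through the W-node using \nameref{ax:copy}, which accumulates the factor $r^k$ on the $k$-th branch; finally re-bracket the resulting superposition of tensor powers of $\ket k$ as a normal form using \Cref{prop:NF-tensor,prop:NF-cup}. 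The delicate step is the coefficient bookkeeping for $r^k\sqrt{k!}^{\,n+m-2}$: here the unconventional $\sqrt{k!}^{\,n+m-2}$ factor built into $\interp{\text{Z-spider}}$ earns its keep, lining up precisely with the rescaling $\ket k \mapsto \sqrt{k!}\,\ket k$ implicit in the normal-form encoding, so that no extraneous scalars are left over after the expansion.
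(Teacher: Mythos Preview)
Your high-level strategy matches the paper's---bend to states, handle each generator, use \Cref{lem:W21-to-NF} and the tensor/cup closure propositions---but two concrete steps fail as written. For the W-node: applying $W_{2\to1}$ to both outputs of the normal form of $\sum_k\ket{k,k}$ yields a $0\to1$ state, not the $0\to3$ bent binary W-node. The paper instead starts from the tensor of $n$ copies of the $0\to2$ Z-spider (a $0\to2n$ state, in normal form by \Cref{prop:NF-tensor}), then applies $W_{2\to1}$ exactly $n{-}1$ times to reach $n{+}1$ outputs; each application preserves normal form by \Cref{lem:W21-to-NF}. You are simply missing a tensor factor of caps.

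For the Z-spider: after \nameref{ax:decomp-Z} you have a $\ket{d{-}1}$ feeding a W-node, not yet a branch decomposition. The missing ingredient is \Cref{lem:Pascal}, which diagrammatically expands $\ket{n}$ through an $m$-ary W-node into a family of Z-spiders indexed by all compositions $n=i_1+\cdots+i_m$, carrying the multinomial coefficients; \nameref{ax:copy} on its own only pushes a single $\ket1$ through a spider and does not generate this fan-out. The paper handles the arity-$1$ case first (\nameref{ax:decomp-Z}, then \Cref{lem:Pascal}, then \Cref{prop:NF-cup}) and reduces larger arities to it via \nameref{ax:Z-spider}. A smaller slip: the cap is the $0\to2$ Z-spider with parameter~$1$, not~$0$ (parameter~$0$ gives only $\ket{0,0}$); once the Z-spider case is done correctly, cap, cup and identity are instances of it via flexsymmetry and \Cref{lem:Z-id}, and the bent swap is a permutation of the outputs of $\text{cap}\otimes\text{cap}$.
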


Putting all the latter results together, we can show the completeness of the language:
\begin{theorem}[Completeness for Qudit Systems]
\label{thm:completeness-qudit}
The language is complete: for any two diagrams $D_1$ and $D_2$ of the $\cat{ZW\!}_d$-calculus:
\[\interp{D_1}=\interp{D_2}\iff \zw\vdash D_1=D_2\]
\end{theorem}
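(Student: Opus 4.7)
The plan is to prove the nontrivial direction of the equivalence (the reverse direction is just soundness, \Cref{prop:soundness}) by reducing any diagram to the canonical normal form and arguing that equivalent diagrams yield syntactically identical normal forms. Concretely, I would show that for every diagram $D\in\cat{ZW\!}_d[n,m]$, one has $\zw\vdash D = \widetilde{\mathcal{N}(\interp{D})}$ for some fixed encoding $\widetilde{\cdot}$ of normal-form states as morphisms; then if $\interp{D_1}=\interp{D_2}$ both sides rewrite to the same diagram, hence $\zw\vdash D_1=D_2$.

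The first step is to reduce the problem to states. Using the compact structure of $\cat{ZW\!}_d$, any morphism $D\in\cat{ZW\!}_d[n,m]$ can be bent into a state $\widetilde D\in\cat{ZW\!}_d[0,n+m]$ via caps, and this bending is involutive up to the snake equations, which hold in any $\dag$-compact prop. So it suffices to show that any diagram $D$ with $n=0$ can be proven equal to $\mathcal{N}(\interp{D})$. Then, given $D_1,D_2$ with $\interp{D_1}=\interp{D_2}$, we bend, normalize, and unbend: $\zw\vdash D_1=\widetilde{\widetilde{D_1}}=\widetilde{\mathcal{N}(\interp{\widetilde{D_1}})}=\widetilde{\mathcal{N}(\interp{\widetilde{D_2}})}=\widetilde{\widetilde{D_2}}=D_2$, where syntactic equality in the middle follows from $\mathcal{N}$ being a function of the interpretation.

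The second step is an induction on the construction of state-diagrams from the generators of $\cat{ZW\!}_d$. Every state is built from tensor products and compositions of generators together with cups and caps, and since cups applied to outputs of a state are again a state, sequential composition inside a state-diagram always reduces, after bending, to the application of cups between outputs. So the inductive steps are exactly those covered by \Cref{prop:NF-tensor} (preservation under $\otimes$) and \Cref{prop:NF-cup} (preservation under cup application). The base case, namely that every generator is itself provably equal to its normal form, is precisely \Cref{prop:NF-generators} (each generator $\tikzfig{ket-1}$, $\tikzfig{W-n}$, $\tikzfig{Z-spider}$, identity, swap, cap, scalar, can be bent into a state and rewritten to $\mathcal{N}$ of that state). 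By \Cref{prop:NF-preserves-semantics}, this normalization is semantics-preserving, and the induction closes.

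The main obstacle is the cup case, \Cref{prop:NF-cup}: applying a cup to two outputs of a normal form mixes the different basis components in a nontrivial way, and one must reorganize the resulting diagram back into the shape $\mathcal N(\ket\psi)$ using only the minimal axioms of \Cref{fig:equational-theory}. This is where \Cref{lem:W21-to-NF} is crucial, since it gives the basic merging operation on two outputs of a normal form, which must then be iterated and combined with the Z-spider fusion, the bialgebra rules \nameref{ax:bialgebra-Z-W} and \nameref{ax:bialgebra-W}, and the sum rule \nameref{ax:sum} to recompute the coefficients correctly. The tensor case \Cref{prop:NF-tensor} is more routine: it mainly reorganizes W-trees and Z-copies using \nameref{ax:W-assoc}, \nameref{ax:copy}, and \nameref{ax:Z-spider}. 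The remaining work, relegated to the appendix, is the careful diagrammatic bookkeeping to verify each inductive case inside the minimal equational theory.
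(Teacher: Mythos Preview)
Your proposal is correct and follows essentially the same strategy as the paper: reduce to states via the compact structure, show every generator normalizes (\Cref{prop:NF-generators}), show normal forms are closed under tensor (\Cref{prop:NF-tensor}) and cup application (\Cref{prop:NF-cup}), and conclude by uniqueness of $\mathcal N$. One small misattribution: \Cref{lem:W21-to-NF} is not the key ingredient for \Cref{prop:NF-cup} but is used in \Cref{prop:NF-generators} to normalize the $n$-ary W-node; the cup case instead rests on \Cref{lem:W-bialgebra-qubit-ctxt}, \Cref{lem:cup-distrib-aux} and \Cref{lem:Z-loop-on-not}.
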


\begin{proof}
By \Cref{prop:NF-generators}, any generator of the language can be put in normal form. Thanks to Propositions~\ref{prop:NF-tensor} and~\ref{prop:NF-cup}, compositions of diagrams in normal form can be put in normal form. As a consequence, any diagram can be put in normal form. By uniqueness of this normal form, if two diagrams share the same semantics, they can be rewritten into the same diagram. This proves completeness.
\end{proof}

\section{Finite Dimensional Hilbert Spaces}

\label{sec:fdhilb}

In the previous setting, all systems are required to be $d$-dimensional for some fixed $d$. Here we relax that constraint, which allows us to go ``mixed-dimensional'' and to represent morphisms of $\cat{FdHilb}$\footnote{Technically, the skeleton of $\cat{FdHilb}$, i.e.~where all $d$-dimensional Hilbert spaces are identified with the canonical representative $\mathbb C^d$. We take the liberty in this paper to name $\cat{FdHilb}$ this skeleton.}.

$\cat{FdHilb}$ is the strict symmetric monoidal $\dag$-compact category of finite dimensional Hilbert spaces \cite{Abramsky2004categorical}. Its objects are tensor products of finite dimensional Hilbert spaces $\mathbb C^d (d \in \mathbb N\setminus\{0\})$, and its morphisms are linear maps between them. The symmetry and the compact structure are naturally extended from that of $\cat{Qudit}_d$.

In this new setting, we will be able to represent \emph{all} morphisms of $\cat{FdHilb}$, at the cost of annotating the wires of the diagrams to keep track of their dimensions. Instead of the dimension itself, we rather annotate the wire with its dimension $-1$, i.e.~with the largest $k$ such that $\ket k$ is allowed on the wire. We call such $k$ the \emph{capacity} of the wire. This makes the bookkeeping a little bit less tedious.

\subsection{Diagrams and Interpretation}

\label{sec:fdhilb-diags}

We also require the following constraints for the capacities around each generator:
\begin{itemize}
\item All capacities around a Z-spider are the same
\item The input capacity of the W-node must be larger than (or equal to) each of its outputs
\end{itemize}
The first constraint follows from the fact that Z-spiders in ZW can be seen as a generalisation of graph edges -- more precisely they can be seen as hyperedges. Hence the whole hyperedge should have a single capacity. The second constraint simply comes from the fact that a larger capacity on the outputs of a W-node will never be used, so we might as well prevent it. When considering $1\to1$ W-nodes, which represent projections, this restriction allows us to see at a glance which side has the largest dimension.

The first restriction further allows us to put the capacity annotation on the Z-spider rather than on all its legs, making annotating diagrams less cumbersome.

We now work with a $\dagger$-compact symmetric monoidal category, which is not a prop anymore. Our base objects are $\mathbb C^{d}$ for $d\in\mathbb N\setminus\{0\}$. Every pair of objects can be composed with $\otimes$ to form a third object, with $\otimes$ being associative, and with the tensor unit $I$ being $I:=\mathbb C^1$. We work with a strict monoidal category, so we consider $I\otimes \mathbb C^d = \mathbb C^d = \mathbb C^d\otimes I$. To simplify notations, we represent objects $\mathbb C^{d_1}\otimes\mathbb C^{d_2}\otimes ... \otimes \mathbb C^{d_n}$ by a list of capacities $\obfhilb{d_1-1,d_2-1,...,d_n-1}$ (the tensor product simply becomes the concatenation of lists). The tensor unit is represented by $
\obfhilb{}$. Since $\obfhilb{0} =\obfhilb{}$, we forbid $0$ capacities on the wires.

In this new setting, the generators are generalised as follows:
\begin{itemize}
\setlength{\itemsep}{1em}
\item Z-spiders $
\scalebox{\scaling}{
	\input{./figures/Z-spider-annot.tikz}%
}:\obfhilb{\overbrace{a,...,a}^n} \to \obfhilb{\overbrace{a,...,a}^m}$ with $r\in\mathbb C$ and $a\geq1$
\item W-nodes $
\scalebox{\scaling}{
	\input{./figures/W-n-annot.tikz}%
}: \obfhilb{a} \to \obfhilb{b_1,...,b_n}$ with $a\geq \underset{1\leq i \leq n}{\max}(b_i)$ and $b_i\geq1$
\item state $\ket1$ $
\scalebox{\scaling}{
	\input{./figures/ket-1-annot.tikz}%
}:\obfhilb{} \to \obfhilb{a}$ with $a\geq 1$
\item global scalars $r:\obfhilb{} \to\obfhilb{}$ with $r\in\mathbb C$
\end{itemize}
with the symmetry and the compact structure being generalised to

\noindent
\begin{minipage}{0.4\columnwidth}
\begin{itemize}
\item $
\scalebox{\scaling}{
	\input{./figures/swap-annot.tikz}%
}:\obfhilb{a,b} \to\obfhilb{b,a}$
\end{itemize}
\end{minipage}
\begin{minipage}{0.59\columnwidth}
\begin{itemize}
\item $
\scalebox{\scaling}{
	\input{./figures/cap-annot.tikz}%
}:\obfhilb{}\to\obfhilb{a,a}$ \quad and \quad $
\scalebox{\scaling}{
	\input{./figures/cup-annot.tikz}%
}:\obfhilb{a,a}\to\obfhilb{}$,
\end{itemize}
\end{minipage}\\
(again with $a,b\geq1$) and the identity to $
\scalebox{\scaling}{
	\input{./figures/id-annot.tikz}%
}:\obfhilb{a}\to\obfhilb{a}$ ($a\geq1$).

Diagrams can still be composed together both sequentially and in parallel. The sequential composition prevents us from composing diagrams with unmatched objects (e.g.~two Z-spiders with different capacities in sequence). Diagrams with capacities are called $\cat{ZW_{\!f}}$-diagrams, and are graphical representations of the morphisms of the $\dag$-compact symmetric monoidal category $\cat{FdHilb}$ (the dagger functor can be given in $\cat{ZW_{\!f}}$ in a similar way as $\cat{ZW\!}_d$).

The compact structure still allows us to define upside-down versions of the W-node and the kets. Again, we give the $\obfhilb{}\to\obfhilb{a}$ W-node a special symbol (for $a\geq1$): 
\def\fig{ket-0-annot}
$
\scalebox{\scaling}{
}~~:=~~\scalebox{\scaling}{
}
$, 
and we generalise the ket symbol inductively as follows (for $k\geq1$ and $a > k$): 
\def\fig{ket-k-def-qf-induct}
$
\scalebox{\scaling}{
}
~~:=~~\scalebox{\scaling}{
}
$.

The interpretation of these diagrams is now a monoidal functor $\interp{.}:\cat{ZW_{\!f}}\to\cat{FdHilb}$ inductively defined as:\vspace*{-1.5em}\\
\begin{minipage}[t]{0.75\columnwidth}
\begin{align*}
\interp{D_2\circ D_1}
&= \interp{D_2}\circ\interp{D_1}\\
\interp{D_1\otimes D_2}
&= \interp{D_1}\otimes\interp{D_2}\\
\interp{~
\scalebox{\scaling}{
	\input{./figures/id-annot.tikz}%
}~}
&=\sum_{k=0}^a \ketbra k\\
\interp{
\scalebox{\scaling}{
	\input{./figures/cap-annot.tikz}%
}}
&=\interp{
\scalebox{\scaling}{
	\input{./figures/cup-annot.tikz}%
}}^\dag=\sum_{k=0}^a \ket{k,k}\\
\interp{r}&=r\\
\interp{
\scalebox{\scaling}{
	\input{./figures/W-n-annot.tikz}%
}}
&= \hspace*{-1em}\sum_{\substack{0\leq k_i \leq b_i\\ k_1{+}...{+}k_n \leq a}} \hspace*{-1em}\sqrt{\binom{k_1{+}...{+}k_n}{k_1,...,k_n}}\ketbra{k_1,...,k_n}{k_1{+}...{+}k_n}
\end{align*}
\end{minipage}
\hspace*{-10em}
\begin{minipage}[t]{0.48\columnwidth}
\begin{align*}
\interp{
\scalebox{\scaling}{
	\input{./figures/swap-annot.tikz}%
}}
&=\sum_{k=0}^a\sum_{\ell=0}^b \ketbra{\ell,k}{k,\ell}\\
\interp{
\scalebox{\scaling}{
	\input{./figures/Z-spider-annot.tikz}%
}}
&= \sum_{k=0}^a r^k \sqrt{k!}^{n+m-2}\ketbra{k^m}{k^n}\\
\interp{
\scalebox{\scaling}{
	\input{./figures/ket-1-annot.tikz}%
}}
&= \ket1
\end{align*}
\end{minipage}

By composition, one can check that, for $k>1$ and $a\geq k$: 
$
\interp{
\scalebox{\scaling}{
	\input{./figures/ket-k-annot.tikz}%
}}
= \sqrt{k!}\ket k
$.

\noindent
Notice that we use the same notation for the interpretation of $\cat{ZW\!}_d$-diagrams, and for the interpretation of the $\cat{ZW\!_f}$-diagrams. Which interpretation we are referring to should be clear from the context.

\subsection{Complete Equational Theory}

\label{sec:fdhilb-ET}

We once again equip the language with an equational theory \zwf, defined in \Cref{fig:equational-theory-fdhilb}. This equational theory only slightly differs from the one for qudit systems in \Cref{fig:equational-theory}. It is interesting to notice that 1) the associativity of the W-node is broken down into two equations \nameref{ax:W-assoc-qf} and \nameref{ax:W-assoc-2-qf}, whose choice depends on the capacities involved, 2) the W-bialgebra equation \nameref{ax:bialgebra-W-qf} does not need a context anymore, but instead side conditions on the capacities, 3) we managed to remove Equation \nameref{ax:one}, 4) we now need an equation \nameref{ax:ket-1} that states that a $\ket1$, when ``injected'' into a larger dimensional Hilbert space, is still a $\ket1$.

We also notice the existence of an interesting equation, that we did not include in \Cref{fig:equational-theory-fdhilb} as it turns out to be derivable; and which states that a $0\to1$ Z-state can be ``copied'' by the W-node, as follows: 
\def\fig{Z-copy-annot}$
\scalebox{\scaling}{
}
\eq{}\scalebox{\scaling}{
}$.

\begin{figure*}[!htb]
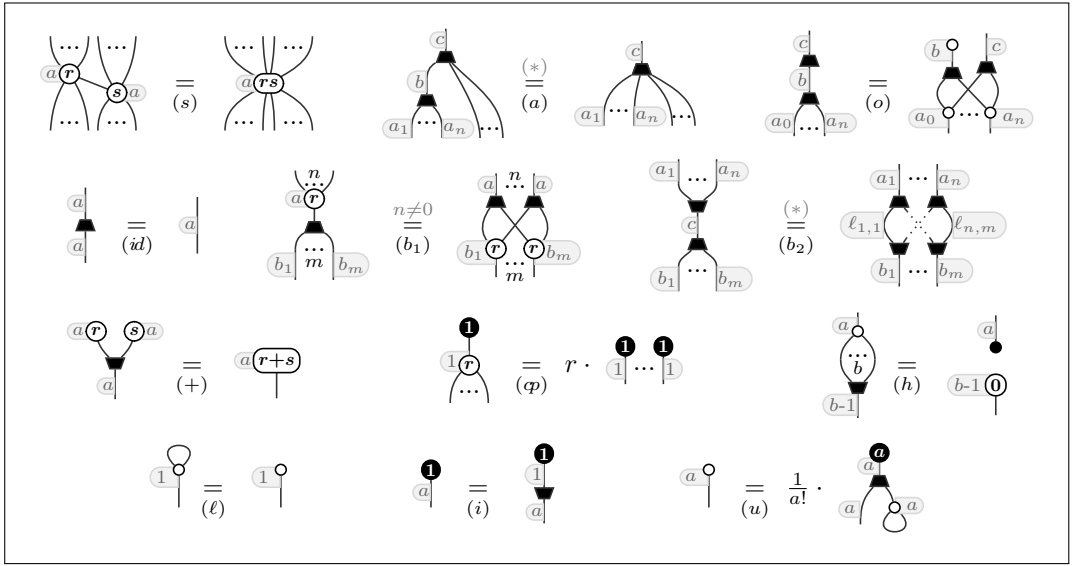

\boxed{
\begin{minipage}{0.984\textwidth}
\medskip
\hspace*{0em}
\def\fig{Z-spider-rule-annot}
$\label{ax:Z-spider-qf}\xlabel[(s)]{ax:Z-spider-qf}
\scalebox{\scaling}{
	\input{./figures/\fig/\fig_00.tikz}%
}
\eq[~]{(s)}\scalebox{\scaling}{
	\input{./figures/\fig/\fig_01.tikz}%
}$
\hfill
\def\fig{W-assoc-annot}
$\label{ax:W-assoc-qf}\xlabel[(a)]{ax:W-assoc-qf}
\scalebox{\scaling}{
	\input{./figures/\fig/\fig_00.tikz}%
}
\overset{{\color{gray}(*)}}{\eq[]{(a)}}~\scalebox{\scaling}{
	\input{./figures/\fig/\fig_01.tikz}%
}$
\hfill
\def\fig{W-assoc-annot2}
$\label{ax:W-assoc-2-qf}\xlabel[(o)]{ax:W-assoc-2-qf}
\scalebox{\scaling}{
	\input{./figures/\fig/\fig_00.tikz}%
}
\!\!\eq[]{(o)}\scalebox{\scaling}{
	\input{./figures/\fig/\fig_01.tikz}%
}$
\hspace*{0em}
\medskip

\hspace*{1em}
\def\fig{Z-id-annot}
$\label{ax:id-qf}\xlabel[(i\hspace*{-1pt}d)]{ax:id-qf}
\scalebox{\scaling}{
	\input{./figures/\fig/\fig_02.tikz}%
}
\eq[~]{(i\hspace*{-1pt}d)}\scalebox{\scaling}{
	\input{./figures/\fig/\fig_01.tikz}%
}$
\hfill
\def\fig{Z-W-bialgebra-annot}
$\label{ax:bialgebra-Z-W-qf}\xlabel[(b_1)]{ax:bialgebra-Z-W-qf}
\scalebox{\scaling}{
	\input{./figures/\fig/\fig_00.tikz}%
}
\overset{{\color{gray}n\neq0}}{\eq[~]{(b_1)}}\scalebox{\scaling}{
	\input{./figures/\fig/\fig_01.tikz}%
}$
\hfill
\def\fig{W-bialgebra-annot}
$\label{ax:bialgebra-W-qf}\xlabel[(b_2)]{ax:bialgebra-W-qf}
\scalebox{\scaling}{
	\input{./figures/\fig/\fig_00.tikz}%
}
\overset{{\color{gray}(*)}}{\eq{(b_2)}}\scalebox{\scaling}{
	\input{./figures/\fig/\fig_01.tikz}%
}$
\hspace*{1em}
\medskip

\hspace*{1em}
\def\fig{sum-annot}
$\label{ax:sum-qf}\xlabel[(+)]{ax:sum-qf}
\scalebox{\scaling}{
	\input{./figures/\fig/\fig_00.tikz}%
}
\!\!\!\eq[~]{(+)}\scalebox{\scaling}{
	\input{./figures/\fig/\fig_01.tikz}%
}$
\hfill
\def\fig{copy-1-annot}
$\label{ax:scalar-qf}\xlabel[(c\!p)]{ax:scalar-qf}
\scalebox{\scaling}{
	\input{./figures/\fig/\fig_00.tikz}%
}
\eq[~]{(c\!p)} r\cdot \scalebox{\scaling}{
	\input{./figures/\fig/\fig_01.tikz}%
}$
\hfill
\def\fig{Hopf-annot}
$\label{ax:hopf-qf}\xlabel[(h)]{ax:hopf-qf}
\scalebox{\scaling}{
	\input{./figures/\fig/\fig_00.tikz}%
}
\eq[]{(h)}~\scalebox{\scaling}{
	\input{./figures/\fig/\fig_01.tikz}%
}$
\hspace*{1em}
\medskip

\hspace*{4em}
\def\fig{Z-loop-removal-annot}
$\label{ax:loop-removal-qf}\xlabel[(\ell)]{ax:loop-removal-qf}
\scalebox{\scaling}{
	\input{./figures/\fig/\fig_00.tikz}%
}
\!\!\!\!\!\!\eq[]{(\ell)}~\scalebox{\scaling}{
	\input{./figures/\fig/\fig_01.tikz}%
}$
\hfill
\def\fig{ket-1-dim-change}
$\label{ax:ket-1}\xlabel[(i)]{ax:ket-1}
\scalebox{\scaling}{
	\input{./figures/\fig/\fig_00.tikz}%
}
\eq[~]{(i)}~\scalebox{\scaling}{
	\input{./figures/\fig/\fig_01.tikz}%
}$
\hfill
\def\fig{alternative-Z-decomp-qf}
$\label{ax:decomp-Z-qf}\xlabel[(u)]{ax:decomp-Z-qf}
\scalebox{\scaling}{
	\input{./figures/\fig/\fig_00.tikz}%
}
\eq[~]{(u)}\frac1{a!}\cdot\scalebox{\scaling}{
	\input{./figures/\fig/\fig_01.tikz}%
}$
\hspace*{4em}
\medskip
\end{minipage}}
\caption{Equational theory \zwf for the finite-dimensional $\cat{ZW}$-calculus. In \nameref{ax:W-assoc-qf}, we require that $b=c$ or $b\geq \sum_i a_i$; and in \nameref{ax:bialgebra-W-qf} that $c\geq\min(\sum a_i, \sum b_i)$ on the lhs, and that $\ell_{ij}=\min(a_i,b_j)$ on the rhs.}
\label{fig:equational-theory-fdhilb}
\end{figure*}

The category $\cat{Qudit}_d$ is a full subcategory of $\cat{FdHilb}$, and as such there is an obvious inclusion functor $\cat{Qudit}_d \overset{i_d}\hookrightarrow \cat{FdHilb}$. This inclusion transports to the ZW-calculi: we can turn any $\cat{ZW\!}_d$-diagram into a $\cat{ZW_{\!f}}$-diagram through $\iota_d$ in such a way that the following diagram commutes: 
$
\scalebox{\scaling}{
	\input{./figures/qudit-2-qufinite-cd.tikz}%
}$.\\
The functor $\iota_d$ simply takes a $\cat{ZW\!}_d$-diagram and annotates all its wires with $d-1$.

We show that the present equational theory is complete. To do so, we need to adapt the notion of normal form from qudit systems (and we again use the map/state duality to focus on states rather than arbitrary morphisms):
\bigskip

\noindent
\begin{minipage}{0.58\columnwidth}
\begin{definition}
We define $\mathcal N:\cat{FdHilb}\to\cat{ZW_{\!f}}$ as the functor that maps any $n$-ary state $\ket{\psi} \in \cat{FdHilb}[\obfhilb{}, \obfhilb{a_1, ..., a_n}]$:\\
$\hspace*{1em}\ket{\psi} = r_0\ket{0...0}+r_1\ket{0...01}+...+r_i\ket{x_1^i...x_n^i}+...$\\
to the diagram on the right.\\
We say of any diagram in the image of $\mathcal N$ that it is in normal form.
\end{definition}
\end{minipage}
\begin{minipage}{0.41\columnwidth}
\phantom{.}\hfill$\mathcal N(\ket\psi)=
\scalebox{\scaling}{
	\input{./figures/NF-annot.tikz}%
}$
\end{minipage}

\bigskip
\noindent
We can once again show that $\mathcal N$ builds a diagram that represents $\ket\psi$:
\begin{lemma}
$\forall \ket\psi\in\cat{FdHilb}[\obfhilb{},\obfhilb{a_1, ..., a_n}],~\interp{\mathcal N(\ket\psi)} = \ket\psi$
\end{lemma}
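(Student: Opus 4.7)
The plan is to follow the same strategy as the proof of \Cref{prop:NF-preserves-semantics}, adapted to track the per-wire capacities $a_1,\dots,a_n$. By construction, $\mathcal{N}(\ket\psi)$ is a sum-over-basis diagram whose top-level structure matches the qudit normal form: for each basis element $\ket{x_1^i\ldots x_n^i}$ in the support of $\ket\psi$, the diagram contains one branch consisting of a $0\to 1$ Z-spider with parameter $r_i$ feeding a $1\to n$ W-node whose $j$-th output is projected onto $\ket{x_j^i}$ (via the black-dot symbol on a wire of capacity $a_j$), and these branches are aggregated through the W-node summing structure on each output wire.

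First, I would compute the interpretation of a single branch. Using the $0\to 1$ Z-spider interpretation at parameter $r_i$, composed with the W-node and the explicit value $\interp{\tikzfig{ket-k-annot}} = \sqrt{k!}\ket k$ recorded in the excerpt, the multinomial, factorial and square-root factors telescope exactly as in the qudit case (\Cref{eq:copy} together with \Cref{eq:ket-distrib-on-W}) and the branch evaluates to $r_i\ket{x_1^i\ldots x_n^i}$. Second, I would aggregate the branches by repeatedly applying \Cref{eq:ket-sum}: each pairwise combination of branches returns the componentwise sum of their ket patterns (not the vanishing case, by the capacity analysis below), so by linearity of the interpretation the total evaluates to $\sum_i r_i\ket{x_1^i\ldots x_n^i}=\ket\psi$.

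The main obstacle is verifying that the capacity bookkeeping stays consistent throughout. One must check that every wire in $\mathcal{N}(\ket\psi)$ has capacity large enough to host the basis values it carries, so that the Z-spider sum ranges over all relevant $k$ and no intermediate W-combination falls into the $\vec 0$ case of \Cref{eq:ket-sum}. This follows from $\ket\psi\in\cat{FdHilb}[\obfhilb{},\obfhilb{a_1,\dots,a_n}]$, which forces $x_j^i\leq a_j$ for every basis element in the support, together with the well-formedness constraint on the W-node that its output capacities never exceed its input capacity. Once these side conditions are discharged, the argument is structurally identical to that of \Cref{prop:NF-preserves-semantics}, so the proof mainly reduces to making the translation between the unannotated and annotated worlds explicit — a task that mirrors the inclusion $\iota_d$ discussed above, restricted to one branch at a time.
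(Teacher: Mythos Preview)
Your high-level plan --- redo the argument of \Cref{prop:NF-preserves-semantics} while tracking capacities --- is exactly what the paper intends (it gives no separate proof, only ``we can once again show''). But your execution misidentifies the structure of $\mathcal N(\ket\psi)$ and, more importantly, where the superposition comes from.

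In the normal form the Z-spiders are not $0\to1$ nodes each feeding their own $1\to n$ W-node. There is a \emph{single} top W-node fed by $\ket1$, whose outputs go to the Z-spiders (so the $i$-th Z-spider is $1\to(\sum_j x_j^i)$), and each Z-spider then sends $x_j^i$ parallel wires to the $j$-th \emph{bottom} W-node. The sum over $i$ in $\ket\psi=\sum_i r_i\ket{x^i}$ arises from decomposing that top W-node via \Cref{eq:ket-distrib-on-W}: feeding $\ket1$ into a $1\to N$ W-node yields $\sum_i\ket{0\cdots1\cdots0}$, and each term ``activates'' exactly one Z-spider while the others receive $\ket0$ and collapse.

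Your aggregation step via \Cref{eq:ket-sum} does not produce this. \Cref{eq:ket-sum} adds ket \emph{values} on a single wire, not state vectors: if two branches fed $\ket{x_j^i}$ and $\ket{x_j^{i'}}$ into the same bottom W-node you would get $\ket{x_j^i+x_j^{i'}}$, not a superposition --- so ``componentwise sum of their ket patterns'' is precisely the wrong outcome. The bottom W-nodes are indeed used, but only \emph{within a single term} of the top-W decomposition, to combine the $x_j^i$ copies of $\ket1$ coming from the one active Z-spider (plus $\ket0$'s from the inactive ones) into $\ket{x_j^i}$. Once you reorganise the argument around the top-W decomposition, the capacity checks you outline go through unchanged and the proof is complete.
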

We hence get universality of the language as a direct consequence:
\begin{corollary}[Universality of $\cat{ZW_f}$]~

$
\forall f \in \cat{FdHilb}[\obfhilb{a_1, ..., a_n},\obfhilb{b_1, ..., b_m}],~\exists D_f\in\cat{ZW_f}[\obfhilb{a_1, ..., a_n},\obfhilb{b_1, ..., b_m}],~
\interp{D_f} = f
$
\end{corollary}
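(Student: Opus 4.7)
The plan is to exploit the compact closed structure of $\cat{FdHilb}$ to reduce the universality statement for arbitrary morphisms to the already-established representability of states, which is the content of the preceding lemma. Concretely, I will combine map-state duality with the previous lemma.

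First, given a morphism $f \in \cat{FdHilb}[\obfhilb{a_1, ..., a_n},\obfhilb{b_1, ..., b_m}]$, I would form its ``name'' $\ket{f} \in \cat{FdHilb}[\obfhilb{},\obfhilb{a_1, ..., a_n, b_1, ..., b_m}]$ by composing $f$ with the caps on the input wires:
\[
\ket{f} := (\textit{id}_{\obfhilb{a_1, ..., a_n}} \otimes f) \circ (\bigotimes_{i=1}^n \interp{\tikzfig{cap-annot}}).
\]
Since $\cat{FdHilb}$ is compact closed, the snake equations recover $f$ from $\ket{f}$ by applying cups to the first $n$ wires.

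Second, I would apply the previous lemma to $\ket{f}$: there exists a normal form diagram $\mathcal{N}(\ket{f}) \in \cat{ZW_{\!f}}[\obfhilb{}, \obfhilb{a_1, ..., a_n, b_1, ..., b_m}]$ such that $\interp{\mathcal{N}(\ket{f})} = \ket{f}$. Then I would define
\[
D_f := \bigl(\textit{id}_{\obfhilb{b_1, ..., b_m}} \otimes \bigotimes_{i=1}^n \tikzfig{cup-annot}\bigr) \circ \bigl(\mathcal{N}(\ket{f}) \otimes \textit{id}_{\obfhilb{a_1, ..., a_n}}\bigr),
\]
i.e.\ the diagram obtained by placing $\mathcal{N}(\ket{f})$ next to $n$ identity wires and then bending the first $n$ outputs of $\mathcal{N}(\ket{f})$ onto those identities via cups. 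This diagram has the correct type $\obfhilb{a_1, ..., a_n} \to \obfhilb{b_1, ..., b_m}$ because the capacities on cups/caps/identities can be chosen to match $a_1, \ldots, a_n$.

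Finally, applying the interpretation functor $\interp{.}$ and using functoriality together with the definitions of $\interp{\tikzfig{cup-annot}}$ and $\interp{\tikzfig{cap-annot}}$, the semantic snake equation in $\cat{FdHilb}$ gives $\interp{D_f} = f$. There is no real obstacle here: the whole argument is a direct transport of the standard compact-closed map-state duality through the already-verified state representability. The only care needed is bookkeeping with the capacities, ensuring that the cups, caps and identity wires used in the construction carry exactly the capacities $a_1, \ldots, a_n$ required for the composition to be well-typed in $\cat{ZW_{\!f}}$.
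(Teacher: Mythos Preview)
Your proposal is correct and follows exactly the route the paper has in mind: the corollary is stated as an immediate consequence of the preceding lemma via map--state duality, which the paper invokes explicitly (``As is customary in a category that is compact-closed, we can focus on states\ldots''). One cosmetic point: the explicit formula you wrote for $D_f$ has a tensor-ordering mismatch (the output type of the bottom layer does not match the input type of the top layer as written), but your verbal description right after it --- bending the first $n$ outputs of $\mathcal N(\ket f)$ onto the identity wires via cups --- is the correct construction, and with appropriate swaps or nesting of cups it types as intended.
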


Most of the arguments given for the minimality of \zw can be adapted to arguments for the necessity of the equations of \zwf, and the few remaining equations can be given a new argument.
We hence have:
\begin{theorem}
\label{thm:minimality-ZWf}
The equational theory \zwf is minimal.
\end{theorem}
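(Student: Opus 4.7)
The plan is to lift the minimality argument of \Cref{thm:minimality} to the finite-dimensional setting, using the inclusion functor $\iota_d:\cat{ZW\!}_d\hookrightarrow\cat{ZW_{\!f}}$ (which annotates every wire with capacity $d-1$) as a bridge. For each equation of \zwf I would exhibit an invariant preserved by every axiom of $\dag$-compact symmetric monoidal categories and by every other equation of \zwf, but not by the equation under scrutiny. Many qudit invariants transport directly: if the invariant is defined on single-capacity fragments of $\cat{ZW_{\!f}}$-diagrams, a hypothetical \zwf-derivation violating the target equation would induce, via $\iota_d$, a \zw-derivation violating its qudit counterpart, contradicting \Cref{thm:minimality}.

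Following this scheme, the coefficient-level interpretations $r\mapsto \Re(r)$, $r\mapsto|r|$, ``real scalars only'', and ``multiply $\ket 1$ by a non-trivial root of unity'' yield the necessity of \nameref{ax:Z-spider-qf}, \nameref{ax:sum-qf}, \nameref{ax:scalar-qf} and \nameref{ax:loop-removal-qf} respectively, exactly as in the qudit case. The graph-theoretic arguments behind \nameref{ax:id-qf}, \nameref{ax:hopf-qf}, \nameref{ax:bialgebra-Z-W-qf} and \nameref{ax:bialgebra-W-qf} adapt once the relevant notions — annotated wires, effective Z-paths, W-paths — are made capacity-aware; the new side conditions on \nameref{ax:bialgebra-W-qf} and \nameref{ax:W-assoc-qf} only restrict their applicability, which can only help in verifying invariance. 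For \nameref{ax:decomp-Z-qf}, the interpretation sending $\ket 1$ to $\ket 0$ (and hence every $\ket k$ to $\ket 0$) transfers directly.

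The genuinely new cases are the split associativity (\nameref{ax:W-assoc-qf} vs.\ \nameref{ax:W-assoc-2-qf}) and the dimension-change axiom \nameref{ax:ket-1}. To distinguish the two associativities, one observes that they cover disjoint regimes of capacities: \nameref{ax:W-assoc-qf} applies only when the middle capacities coincide or when the top wire is at least the sum of the bottoms, while \nameref{ax:W-assoc-2-qf} reassociates W-node compositions outside this regime. A combinatorial invariant tracking the capacity profile around composed W-nodes — for instance, the multiset of $(c,\sum_i a_i)$ pairs attached to every W-W junction — should be modified by exactly one of the two axioms and hence pinpoint its necessity. For \nameref{ax:ket-1}, which equates $\ket 1$ states sitting on wires of different capacities (through a $1\to1$ W-node acting as a dimensional inclusion), the natural attack is a capacity-sensitive reinterpretation $\ket 1\mapsto\lambda(a)\ket 1$ on a wire of capacity $a$: for a non-constant choice of $\lambda$ this breaks \nameref{ax:ket-1}, while every other equation involves $\ket 1$ at a single fixed capacity and therefore reduces to its qudit counterpart, which still holds.

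The main obstacle will be twofold. First, choosing the rescaling $\lambda$ so as to remain simultaneously consistent with every equation where $\ket 1$ can be indirectly related to wires of different capacities — in particular checking that the only axiom that effectively changes the capacity of a $\ket 1$-bearing wire is indeed \nameref{ax:ket-1}, and not some combination of \nameref{ax:bialgebra-Z-W-qf}, \nameref{ax:hopf-qf} and the associativities. Second, finding a sharp combinatorial invariant that cleanly separates \nameref{ax:W-assoc-qf} from \nameref{ax:W-assoc-2-qf}, as the two axioms differ only through their capacity side conditions on otherwise identical diagrammatic patterns. Both obstacles should nevertheless yield to a careful case analysis exploiting the extra bookkeeping provided by capacity annotations, which \zw does not have.
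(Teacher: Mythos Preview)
Your overall strategy—transport the qudit invariants through $\iota_d$ and treat the genuinely new equations separately—is exactly the paper's. But there is a concrete misassignment that breaks the argument.

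You keep the W-path invariant attached to \nameref{ax:bialgebra-W-qf}. In \zwf the bialgebra rule has lost its $\ket{d{-}1}/\bra{d{-}1}$ context and no longer introduces a Z-spider: its right-hand side is a pure grid of W-nodes, with inner capacities $\ell_{ij}=\min(a_i,b_j)$. A W-path from the $i$th input to the $j$th output therefore survives on both sides (through the two W-nodes on the left, through the $(i,j)$ cell of the grid on the right). Hence the W-path invariant does \emph{not} single out \nameref{ax:bialgebra-W-qf}, and your argument for that equation collapses. The paper reshuffles: the W-path argument now targets \nameref{ax:W-assoc-2-qf}, while \nameref{ax:bialgebra-W-qf} inherits the ``empty-diagram'' argument freed up by the disappearance of \nameref{ax:one} (instantiate \nameref{ax:bialgebra-W-qf} with $n=m=0$; one side is empty, the other is not). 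An alternative, equally simple, invariant is: \nameref{ax:bialgebra-W-qf} is the only rule able to introduce a capacity strictly larger than every capacity already present.

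For the remaining new equations the paper's invariants are also much simpler than the ones you sketch. For \nameref{ax:ket-1} one does not need any semantic rescaling $\ket1\mapsto\lambda(a)\ket1$: it is purely syntactically the only rule that can produce a $\ket1$ on a wire of capacity $\neq1$ from a diagram whose $\ket1$'s all sit on capacity $1$. For \nameref{ax:W-assoc-qf} one reuses the arity argument almost verbatim (only rule creating non-trivial W-nodes of arity $\geq3$ from a diagram where all non-trivial W-nodes have arity $\leq2$), so no ``multiset of $(c,\sum_i a_i)$ pairs'' is needed. Your anticipated obstacles thus dissolve once you notice that removing \nameref{ax:one} and de-contextualising \nameref{ax:bialgebra-W} frees two old invariants for reuse on the two new associativity-related rules.
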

\begin{proof}
	The argument given for the necessity of Equation \nameref{ax:bialgebra-W} now works for the necessity of Equation \nameref{ax:W-assoc-2-qf}, and that of Equation \nameref{ax:one} now works for Equation \nameref{ax:bialgebra-W-qf}. Namely, Equation \nameref{ax:bialgebra-W-qf} is the only that can create a non-empty diagram from an empty diagram. A less artificial argument for Equation \nameref{ax:bialgebra-W-qf} is that it is the only equation that can create a capacity $>k$ from a diagram whose capacities are all $\leq k$ for $k\geq1$. 
	Moreover:\\
	Equation \nameref{ax:ket-1} is the only equation that can create a $\ket1$ with capacity $a\neq1$, from a diagram whose $\ket1$s are all on capacity $1$.\\
	We can reformulate the argument of Equation \nameref{ax:W-assoc-qf} as: It is the only equation permitting to create non-trivial W-nodes with arity $\geq3$ from a diagram where all non-trivial W-nodes have arity $\leq2$.\\
	In the argument of Equation \nameref{ax:loop-removal-qf}, one can take for $\varpi$ any complex number such that $\varpi^2\neq1$, and by working up to a scalar factor, as is done initially for $d=2$.\\
	In the argument of Equation \nameref{ax:hopf-qf}, we can instantiate the protocol by annotating the wires with their capacities, then continue with the protocol as explained in the initial argument.\\
	All the other arguments work right off the bat for their mixed-dimensional counterpart, hence the result of minimality.
\end{proof}

Using the normal form, we can then leverage the completeness from the qudit ZW-calculus to get the similar result in the current setting:

\begin{theorem}[Completeness for Finite Dimensional Systems]
\label{thm:completeness-finite}
The language is complete: for any two diagrams $D_1$ and $D_2$ of the $\cat{ZW_{\!f}}$-calculus:
\[\interp{D_1}=\interp{D_2}\iff \zwf\vdash D_1=D_2\]
\end{theorem}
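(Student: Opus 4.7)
The overall strategy mirrors the proof of \Cref{thm:completeness-qudit}: show that every $\cat{ZW_{\!f}}$-diagram can be rewritten, using $\zwf$, to the unique normal form prescribed by $\mathcal{N}$ applied to its semantics. Since $\mathcal{N}$ is defined directly from semantics, two diagrams with equal interpretation must both rewrite to the same normal form, which gives the converse direction. The forward direction is a routine sound\-ness check, analogous to \Cref{prop:soundness}, where Vandermonde and binomial identities are used in their capacity-constrained forms. By the map/state duality realised by the annotated cups and caps, it is enough to normalise states.

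I would then establish the mixed-dimensional analogues of the three closure ingredients used in the qudit proof: tensor products of normal forms can be renormalised (cf.~\Cref{prop:NF-tensor}); plugging a cup on two outputs of a normal form yields a normal form (cf.~\Cref{prop:NF-cup}) together with the auxiliary $\obfhilb{a,a}\to\obfhilb{a}$ W-plugging lemma (cf.~\Cref{lem:W21-to-NF}); and every generator of $\cat{ZW_{\!f}}$ is provably in normal form (cf.~\Cref{prop:NF-generators}). These combine by induction on the diagram structure, exactly as in the proof of \Cref{thm:completeness-qudit}. Most of the qudit derivations transfer verbatim once capacities are tracked: the equations \nameref{ax:Z-spider-qf}, \nameref{ax:hopf-qf}, \nameref{ax:sum-qf}, \nameref{ax:scalar-qf}, \nameref{ax:loop-removal-qf}, \nameref{ax:decomp-Z-qf}, and the capacity-guarded \nameref{ax:W-assoc-qf}/\nameref{ax:bialgebra-W-qf} appear in exactly the shapes used there, while the new equation \nameref{ax:W-assoc-2-qf} plugs the gap when a W-tree exceeds the local capacity.

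The genuinely new obstacle will be the truly mixed-dimensional cases, most notably a W-node $\obfhilb{a}\to\obfhilb{b_1,\dots,b_n}$ with distinct $b_i$'s and the migration of $\ket 1$ across different capacities. My plan here is to leverage \Cref{thm:completeness-qudit} via the inclusion $\iota_d$ for any $d-1 \geq \max(a, b_1, \ldots, b_n)$: lift the offending generator along $\iota_d$ to a uniformly-$(d{-}1)$-capacity $\cat{ZW}_d$-diagram with the same semantics after capacity adjustments, apply the qudit normalisation granted by \Cref{thm:completeness-qudit}, and then descend along \nameref{ax:ket-1} to the prescribed output capacities $b_i$. The key verification — and the step I expect to be the most delicate — is that the lifted normal form only ever involves kets $\ket k$ with $k \leq b_i$ on the $i$-th output wire (a consequence of the W-node's semantics: the multinomial coefficient vanishes when any $k_i > b_i$), so that the descent along \nameref{ax:ket-1} is well-defined and $\zwf$-provable without introducing out-of-range kets. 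Once this mixed-capacity step is in hand, it combines with the uniform-capacity lemmas — which transfer from the qudit proofs — to yield the three closure properties, and hence completeness.
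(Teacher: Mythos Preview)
Your approach is workable but takes a different and longer route than the paper. You propose to redo the closure lemmas (\Cref{prop:NF-tensor}, \Cref{prop:NF-cup}, \Cref{lem:W21-to-NF}) directly in the mixed-dimensional setting and then handle generators by a local qudit lift. The paper instead performs a single \emph{global} lift: given any $\cat{ZW_{\!f}}$-state $D$ with largest capacity $d-1$, it first rewrites $D$ in $\zwf$ to the form $\iota_d(D_d)$ composed with boundary projectors (\Cref{prop:qufinite-to-qudit}), then invokes qudit completeness on $D_d$ via the fact that every $\zw$ equation is $\zwf$-derivable (\Cref{prop:qufinite-derives-qudit}), and finally absorbs the boundary projectors into the resulting qudit normal form to obtain the mixed-dimensional one (\Cref{prop:qudit-NF-to-qufinite-NF}). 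This bypasses entirely the need to reprove the closure lemmas under the new capacity side-conditions on \nameref{ax:W-assoc-qf} and \nameref{ax:bialgebra-W-qf}, which is where most of the tedium in your plan would concentrate.

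Your descent step is also underspecified. Saying ``descend along \nameref{ax:ket-1}'' is not enough: what sits at the boundary after the lift is a $1\to1$ W-node $\obfhilb{d-1}\to\obfhilb{a_i}$, not a ket, and absorbing it into the normal form is a genuine diagrammatic computation (in the paper, \Cref{prop:qudit-NF-to-qufinite-NF}) that uses \nameref{ax:bialgebra-W-qf} to push the projector through the bottom W-layer, followed by \nameref{ax:hopf-qf} to kill the Z-spiders whose multiplicity on wire $i$ exceeds $a_i$. Your semantic observation that those terms have zero coefficient is correct, but it still has to be realised syntactically, and \nameref{ax:ket-1} alone does not do that job.
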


\begin{proof}
The right-to-left implication (soundness) is again a straightforward verification. The other is proven in its entirety in \Cref{sec:appendix-fdhilb}, and uses the completeness of $\zw$ for qudit systems. The idea is to show that i) we can turn both $D_1$ and $D_2$ into a diagram with a high enough capacity $d$ everywhere (except boundaries), and that ii) all the equations of \zw can be proven in \zwf (through $\iota_d$).
\end{proof}

\section{Related Work}

\subsection{Qudit Framework}

The first and only result to date of a complete equational theory for a graphical language describing qudit systems comes from the ``ZXW-calculus'' \cite{Poor2023completeness}. There, the authors start from a qudit version of the ZX-calculus and most probably end up requiring a W-node in the definition of a normal form, and hence in the equational theory leading to completeness. We argue here that we can get a complete equational theory purely inside the ZW-calculus. By keeping the number of generators as low as possible, we also end up with few, intuitive equations in the equational theory.

The W-node we used is a different generalisation of the qubit W-node than the one used in \cite{Poor2023completeness}. The version we used offers two advantages with respect to the aims of the paper. First, it allows to use a single parameter in the Z-spiders (which aligns with the spirit of keeping things as minimal as possible) and to sum such parameters together, while the other version requires to have $(d-1)$-sized lists of coefficients as parameters in order to get a \nameref{ax:sum}-like rule to sum coefficients together. Second, it allows us to define $\ket k$ (up to a scalar) as a composition using only $\ket 1$ and the W-node. Again, this lowers the number of generators, as all the $\ket k$ (for $k>1$) become syntactic sugar.

Focussing on ZW-calculus is not a new idea. The first ever completeness proof for qubit graphical languages was in the (qubit) ZW-calculus, introduced in \cite{Coecke2010compositional} and tweaked and made complete in \cite{Hadzihasanovic2015diagrammatic,Hadzihasanovic2018complete}. The ZW-calculus noticeably has very nice combinatorial properties different from those of its counterparts, which in particular allows for a very natural notion of normal form. It is hence not suprising that some attempts were made to get a complete equational theory of qudit systems purely in ZW. There have then been tentative generalisations for qudit systems, in particular in \cite{Hadzihasanovic2017algebra} where $q$-arithmetic is used, and in \cite{Wang2021nonanyonic} where the W-node is generalised in a different way (and that we encounter in \cite{Poor2023completeness}). It is to be noted that our two main generators are essentially the same as in \cite{Hadzihasanovic2017algebra}, except with usual arithmetic instead of $q$-arithmetic. While some equations are sound with respect to the $q$-arithmetic semantics, others are truly specific to the standard arithmetic. Adapting the results of the present paper to $q$-arithmetic semantics hence seems non-trivial. Other presentations for qudit systems have also been proposed (without proof of completeness) in \cite{Roy2023qudit,DeBeaudrap2023simple}. Finally, complete presentations for fragments of qudit quantum mechanics can be found e.g.~in \cite{Poor2023qupit,Booth2024graphical}.

Another system we are close to is $\cat{QPath}$ \cite{DeFelice2022quantum}. Our W-node is merely the ``triangle'' node of $\cat{QPath}$ that we truncated to a finite dimension\footnote{The idea of truncating this tensor has also been considered in \cite{deFelice2023lightmatter} during a translation between graphical languages.}, and we generalised their ``line weight'' to an $n$-ary Z-spider. The degree-2 Z-spider furthermore has exactly the same interpretation as the line weight. While in $\cat{QPath}$ the triangle nodes satisfy a bialgebra, this is not the case when truncating to finite dimension. Here we could either resort to define a ``fermionic swap'' that would replace the usual swap in the bialgebra (as in \cite{Hadzihasanovic2017algebra} and \cite{Wang2021nonanyonic}), or give a context in which the bialgebra works (as is done in \cite{Poor2023completeness}). While such a ``fermionic swap'' exists in our setting, it does not have all the nice properties of the qubit fermionic swap, that in particular allow us to see it as a quasi-proper swap. Instead we went with the latter solution, which as it turns out works in our setting, despite the W-node having a different interpretation from that of \cite{Poor2023completeness}, and we end up with Equation \nameref{ax:bialgebra-W}.

\subsection{Finite Dimensional Framework}

Another complete presentation of a graphical language for $\cat{FdHilb}$ was announced recently before the first version of the current paper \cite{Wang2023completeness}. This one builds upon the aforementioned ZXW-calculus, and introduces a new generator that takes two systems, of dimensions $a$ and $b$, and builds a system of dimension $a\times b$. Our approach builds upon $\cat{ZW\!}_d$, the qudit version of the ZW-calculus from \Cref{sec:qudit} and hence starts with fewer generators and equations. As a consequence, the graphical language for $\cat{FdHilb}$ we end up with has fewer equations as well. 
Moreover, we did not require a new generator, and simply promoted the qudit W-node to work with any mix of dimensions in a natural manner, which was enough to provide us with universality.

A version of the ZX-calculus for $\cat{FdHilb}$ was recently provided and shown to be complete \cite{Poor2024zxcalculus}. The proof of completeness for their graphical language was obtained by transporting the property from the $\cat{ZW_{\!f}}$-calculus of the first version of the current paper to the ZX-calculus, through a system of translations between the two languages.

\section{Conclusion}

In this paper, we explored the potential for a minimal yet complete diagrammatic language for quantum mechanics beyond qubit systems. This starts with a well-chosen generalisation of the generators of the ZW-calculus, allowing us to have few and intuitive equations. For both qudit systems and finite dimensional systems, we showed that the diagrams are universal, and that the equational theories are both minimal and complete for their respective interpretation. 


\appendix

\section{Asymmetric Presentation for Qudit Systems}
\label{sec:asymmetric-presentation}

In this section, we give an alternative semantics for the \zw-diagrams, which breaks the up/down symmetry of the generators. On the one hand, the dagger-functor becomes less natural; on the other hand, the combinatorics associated to the diagrams becomes simpler (except for the cap):\\
\renewcommand{\interp}[1]{\left\llbracket #1 \right\rrbracket_{\rotatebox[origin=c]{-90}{$\rightsquigarrow$}}}
\begin{minipage}{0.42\columnwidth}
\begin{align*}
\interp{D_2\circ D_1}
&= \interp{D_2}\circ\interp{D_1}\\
\interp{D_1\otimes D_2}
&= \interp{D_1}\otimes\interp{D_2}\\
\interp{~
\scalebox{\scaling}{
}~}
&=\sum_{k} \ketbra k\\
\interp{
\scalebox{\scaling}{
	\input{./figures/swap.tikz}%
}}
&=\sum_{k,\ell} \ketbra{\ell,k}{k,\ell}\\
\interp{
\scalebox{\scaling}{
}}
&=\sum_{k} k!\bra{k,k}\\
\interp{
\scalebox{\scaling}{
}}
&=\sum_{k} \frac1{k!}\ket{k,k}
\end{align*}
\end{minipage}
\hspace*{-1em}
\begin{minipage}{0.59\columnwidth}
\begin{align*}
\interp{
\scalebox{\scaling}{
	\input{./figures/Z-spider.tikz}%
}}
&= \sum_{k=0}^{d-1} r^k k!^{n-1}\ketbra{k^m}{k^n}\\
\interp{
\scalebox{\scaling}{
	\input{./figures/W-n.tikz}%
}}
&= \!\sum_{\substack{k\in\{0,...,d-1\}\\i_1+...+i_n=k}}\! \binom{k}{i_1,...,i_n}\ketbra{i_1,...,i_n}{k}\\
\def\fig{W-n-1}\interp{\scalebox{\scaling}{
}}
&= \!\sum_{\substack{k\in\{0,...,d-1\}\\i_1+...+i_n=k}}\! \ketbra{k}{i_1,...,i_n}\\
\interp{
\scalebox{\scaling}{
}}
&= \ket 1 \qquad \text{and}\qquad \def\fig{bra-1}\interp{\scalebox{\scaling}{
}}=\bra1\\
\interp{r}&=r
\end{align*}
\end{minipage}\\ 
As a direct consequence, we have:
\begin{align*}
\interp{
\scalebox{\scaling}{
}} = \ket k \qquad\text{and}\qquad \def\fig{bra-k}\interp{\scalebox{\scaling}{
}} = k!\bra k
\end{align*}
This semantics being equivalent to $\interp{-}$, the equational theory of \zw-diagrams remains universal, sound and complete in this setting. 

\renewcommand{\interp}[1]{\left\llbracket #1 \right\rrbracket}

\section{Lemmas and Proofs for Completeness of Qudit}
\label{sec:proofs-qudit}
In this section, we give the proofs necessary for the completeness of the qudit setting. To get there, we also provide a set of useful derivable equations:

\begin{multicols}{2}

\begin{lemma}
\label{lem:scalar-1}
\def\fig{empty-diag-is-one}
\begin{align*}
\zw\vdash~1
\eq{}\scalebox{\scaling}{
}
\end{align*}
\end{lemma}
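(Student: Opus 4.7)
The plan is to exploit the combination of axioms \nameref{ax:copy} and \nameref{ax:one}. Axiom \nameref{ax:one} provides us with a specific small diagram (call it $D_e$) whose interpretation is the complex scalar $1$ and which is equated to the empty diagram. Axiom \nameref{ax:copy}, read with the parameter $r$ instantiated to $1$, lets us rewrite compositions of a parameter-$1$ Z-spider with a $\ket 1$ as a diagram in which the global scalar generator $1$ appears explicitly (as the factor $1\cdot(-)$ on the right-hand side).

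Concretely, I would first apply \nameref{ax:copy} with $r=1$ to exhibit an equation of the form $Z_1\circ\ket 1 = 1\cdot D'$, where $D'$ is the diagram obtained on the right-hand side of \nameref{ax:copy} (a tensor of $\ket 1$s, or the empty diagram in the $0$-output case). Choosing the arity of the Z-spider so that $D'$ is simply the empty diagram (the $1\to 0$ case), we obtain $Z_1^{1\to 0}\circ\ket 1 = 1$ as a chain of diagrammatic rewrites. Secondly, I would apply \nameref{ax:one} (possibly preceded by minor shape manipulations such as \nameref{ax:Z-spider} to match the left-hand side of the axiom) to rewrite the same diagram $Z_1^{1\to 0}\circ\ket 1$ into the empty diagram. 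Chaining the two derivations gives $\zw\vdash 1 = Z_1^{1\to 0}\circ\ket 1 = \text{empty}$.

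The only subtlety to watch for is the exact syntactic form of the diagram on the left-hand side of \nameref{ax:one}: depending on this shape, one may need a small amount of rewriting using \nameref{ax:Z-spider} (to introduce or merge the parameter-$1$ Z-spider) or the compact-structure identities of the $\dag$-compact prop (to reshape the wires) to align the post-\nameref{ax:copy} diagram with the left-hand side of \nameref{ax:one}. I expect this alignment to be the only non-trivial step; given that this is the first derived equation in the appendix and that scalar multiplication is henceforth treated as automatic, the overall proof should be two or three rewriting steps long.
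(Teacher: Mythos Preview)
Your proposal is correct and essentially matches the paper's proof: the paper derives the lemma in exactly two steps, applying \nameref{ax:one} (to introduce the erasing-ket diagram from the empty diagram) followed by \nameref{ax:copy} with $r=1$ in the $1\to 0$ case (to turn that diagram into the global scalar $1$). Your anticipated alignment steps via \nameref{ax:Z-spider} or the compact structure are not needed, since the left-hand side of \nameref{ax:one} already coincides with the $0$-output instance of the left-hand side of \nameref{ax:copy}.
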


\begin{lemma}
\label{lem:Z-id}
\def\fig{Z-id-prf}
\begin{align*}
\zw\vdash~\scalebox{\scaling}{
}
\eq{}\scalebox{\scaling}{
}
\end{align*}
\end{lemma}

\begin{lemma}
\label{lem:W-bialgebra-gen}
\def\fig{W-bialgebra-old}
\begin{align*}
\zw\vdash~\scalebox{\scaling}{
}
\eq{}\scalebox{\scaling}{
}
\end{align*}
\end{lemma}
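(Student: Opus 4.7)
The statement is the generalized W-bialgebra, where on the left we have an $n$-ary W-node composed (through a single wire) with an $m$-ary W-node (or in the co-variant reading: a W-comultiplication stacked on a W-multiplication), and on the right we have the grid of $nm$ binary W-nodes with appropriately matched connections. The plan is a double induction, reducing the general case to the base bialgebra axiom \nameref{ax:bialgebra-W} by unfolding the higher-arity W-nodes into chains of binary ones via associativity \nameref{ax:W-assoc}, and also handling the degenerate arities with the unit \nameref{ax:W-unit}.

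The first step is to fix notation so that the W-nodes of arbitrary arity are unambiguously built from the binary W-node by associativity. Concretely, I would define (in $\zw$) the $k$-ary W-node as an iterated composition of binary W-nodes, and show, using \nameref{ax:W-assoc}, that the result is independent of the chosen bracketing. This is the flexsymmetry-style preamble already mentioned for W-nodes in the text. Using \nameref{ax:W-unit} (and the $n=1$ trivialization) also settles the edge cases $n=0,1$ and $m=0,1$.

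Next, I would perform the main induction on $m$ (say), keeping $n$ general. For $m=2$, I induct on $n$: the case $n=2$ is exactly the axiom \nameref{ax:bialgebra-W}. For the inductive step $n \to n+1$, I peel one leg off the $(n+1)$-ary W-node using associativity, so that the left-hand side becomes a binary W-node stacked on top of (a binary W-node stacked on top of) an $n$-ary W-node meeting an $m$-ary W-node. Applying the induction hypothesis to the inner part, then applying the base axiom \nameref{ax:bialgebra-W} to push the last W-node through the resulting column of binary W-nodes, and finally re-assembling the grid via associativity, gives the desired $(n+1) \times m$ grid. The outer induction on $m$ is analogous.

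The main obstacle is bookkeeping: after applying \nameref{ax:bialgebra-W} one column at a time, the intermediate diagrams carry many crossings (swaps) that need to be resolved using the $\dag$-compact prop axioms and the flexsymmetry of Z-spiders / co-commutativity of W-nodes, so that the final diagram genuinely matches the grid on the right-hand side. This is a routine but delicate combinatorial argument: the non-trivial content is entirely carried by \nameref{ax:bialgebra-W}, \nameref{ax:W-assoc}, and \nameref{ax:W-unit}, together with the topological axioms of the prop.
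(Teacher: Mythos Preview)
Your plan rests on a misreading of both the axiom \nameref{ax:bialgebra-W} and the lemma. You assume \nameref{ax:bialgebra-W} is the bare $2\times2$ W-bialgebra law and that the lemma is its $n\times m$ generalisation, to be obtained by a double induction using only \nameref{ax:W-assoc}, \nameref{ax:W-unit} and \nameref{ax:bialgebra-W}. But the paper is explicit that the raw W-bialgebra is \emph{not sound} after truncation to dimension $d$ (``While in $\cat{QPath}$ the triangle nodes satisfy a bialgebra, this is not the case when truncating to finite dimension''), and that \nameref{ax:bialgebra-W} is instead stated---already for general arities---inside a Z-spider \emph{context} (``give a context in which the bialgebra works''). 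Lemma~\ref{lem:W-bialgebra-gen} is then not an arity generalisation at all: it is a different presentation of the (already general) bialgebra, obtained by manipulating that Z-spider context.

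Accordingly, the paper's proof is a direct three-step argument with no induction: apply \nameref{ax:bialgebra-W} together with \nameref{ax:Z-spider} (to set up the required Z-context), then use \nameref{ax:Z-spider} and \nameref{ax:bialgebra-Z-W} to push the Z-spiders through the W-grid, and finally remove the now trivial $1\to1$ Z-spiders via \nameref{ax:Z-spider} and Lemma~\ref{lem:Z-id}. Your plan never touches Z-spiders, never invokes \nameref{ax:bialgebra-Z-W} or Lemma~\ref{lem:Z-id}, and relies on a ``base case $n=m=2$ is exactly the axiom'' step that is simply false here: without the Z-context, \nameref{ax:bialgebra-W} cannot be applied, so the induction never gets off the ground.
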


\begin{lemma}
\label{lem:vacuum}
\def\fig{W-counit}
\begin{align*}
\zw\vdash~\scalebox{\scaling}{
}
\eq{}\scalebox{\scaling}{
}
\end{align*}
\end{lemma}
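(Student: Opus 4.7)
The plan is to prove this W-counit identity by combining the \nameref{ax:W-unit} axiom with the associativity \nameref{ax:W-assoc} of the W-node. The statement, given its label and figure name (\emph{vacuum} / \emph{W-counit}), intuitively expresses that the vacuum state $\ket 0$ -- equivalently the $0\to 1$ W-node by the definition introduced before the interpretation subsection -- acts as a unit for the W-multiplication: applying it to one output of a W-node erases that output and leaves the remaining wires as identity.

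First I would unfold the notation by replacing the $\ket 0$ symbol by its definition as the $0\to 1$ W-node, so that the left-hand side contains only W-nodes (possibly together with some compact structure and a Z-spider wire). Next I would apply \nameref{ax:W-assoc} to re-associate the two adjacent W-nodes into the shape required by \nameref{ax:W-unit}; the co-commutativity of the W-node's outputs, asserted in the paragraph preceding Figure~\ref{fig:equational-theory}, lets me permute legs freely to match the axiom pattern. Then \nameref{ax:W-unit} fires and collapses the configuration, after which only topological manipulations (symmetry, snake equation) should be left to reach the right-hand side.

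The main obstacle I expect is the bookkeeping of arities: the associativity axiom \nameref{ax:W-assoc} is stated for a particular shape, and merging a binary W-node with a $0\to1$ W-node (the $\ket 0$) through associativity requires the input/output counts to line up exactly. If the direct shape does not match, I would first bend the $\ket 0$ leg using the compact structure to obtain an upside-down W-node of the right arity (as was done in the remark deriving the upside-down version of \nameref{ax:sum}), then apply \nameref{ax:W-assoc}. As a fallback, if a single application of \nameref{ax:W-unit} is insufficient, the earlier derived Lemma~\ref{lem:W-bialgebra-gen} and the Hopf rule \nameref{ax:hopf} give enough flexibility to reduce any residual W-structure, and soundness (\Cref{prop:soundness}) together with the straightforward semantic computation $\bra 0\!\cdot\! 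W_{1\to2}\ket k = \ket k$ guides the rewriting backwards from the target.
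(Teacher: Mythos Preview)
Your reading of the statement appears to be off. The remark the paper places immediately after its own proof --- ``in the case of $0$ inputs, we get $\braket{0}{0}=\text{empty}$'' --- shows that the lemma is parameterised by a number $n$ of inputs and that at $n=0$ it degenerates to $\braket{0}{0}$ being the empty diagram. The statement is therefore the counit law: applying $\bra 0$ to the output of the $n\!\to\!1$ W-node yields $\bra 0^{\otimes n}$ (equivalently, feeding $\ket 0$ into the $1\!\to\!n$ W-node yields $\ket 0^{\otimes n}$); the vacuum is \emph{copied} by W. Your interpretation --- $\bra 0$ on one of the $n$ outputs of a $1\!\to\!n$ W-node collapsing that leg --- is a different and strictly easier fact, obtainable directly from \nameref{ax:W-assoc} (take the inner W to be $0$-ary) followed by \nameref{ax:W-unit}.

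For the actual statement your main line ``\nameref{ax:W-assoc} then \nameref{ax:W-unit}'' cannot close: once $\ket 0$ is unfolded as the $0\!\to\!1$ W-node, the two W-nodes have their distinguished \emph{inputs} wired together (through a cap), which is a bialgebra configuration, not the output-on-input nesting that \nameref{ax:W-assoc} rewrites. The paper accordingly pivots on \Cref{lem:W-bialgebra-gen} --- which you list only as a fallback --- and, crucially, uses axiom \nameref{ax:one} (absent from your sketch) to conjure, and later remove, the Z-spider context that the contextual W-bialgebra of \zw requires. Neither your primary plan nor your fallback supplies that context, and invoking soundness to ``guide the rewriting backwards'' does not produce a syntactic derivation.
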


\begin{lemma}
\label{lem:braket-1-1}
\def\fig{braket-1-1-is-empty}
\begin{align*}
\zw\vdash~\scalebox{\scaling}{
}
\eq{}\scalebox{\scaling}{
}
\end{align*}
\end{lemma}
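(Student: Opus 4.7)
The statement to prove is that the closed scalar diagram $\braket{1}{1}$ is provably equal to the empty diagram in $\zw$. Semantically this is clear, since $\interp{\braket{1}{1}} = 1$, and by Lemma \ref{lem:scalar-1} the empty diagram denotes the scalar $1$.

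My approach begins by unfolding the definition of $\bra 1$ through the compact structure: by construction, $\bra 1$ is obtained by bending a $\ket 1$ with a cup, so $\braket{1}{1}$ becomes a closed diagram consisting of two copies of $\ket 1$ joined at the top by a cup. The goal is then to collapse this closed topology into the empty diagram.

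The natural route is to introduce an intermediate W-node that glues the two $\ket 1$s together. More precisely, I would first apply \nameref{ax:W-unit} to insert a trivial $1\to 1$ W-node on one of the legs, turning the diagram into one where a binary $W$-node sits between the two $\ket 1$s through the cup. Then \nameref{ax:hopf} — which merges a matching $\ket 1$/$\bra 1$ pair through the $W$–Z structure — should bring the diagram into a form matching the left-hand side of the erasing rule \nameref{ax:one}. A direct application of \nameref{ax:one} (possibly preceded by the derived W-counit Lemma \ref{lem:vacuum} to discharge any residual trivial W-nodes) then yields the empty diagram. The derived Lemma \ref{lem:braket-1-1} with its five intermediate steps (00--04) suggests exactly this kind of short but non-trivial chain of rewrites.

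The main obstacle is choosing the correct insertion point for the W-node and aligning the diagram precisely with the left-hand side of \nameref{ax:hopf}: the Hopf equation has a specific orientation and connectivity, and because $\braket{1}{1}$ is a closed diagram with loop topology, one has to first break the symmetry (by splitting the cup through an inserted identity, or equivalently expressing the cap/cup via the compact structure on either end) before the rule fires. Once this alignment is achieved, the remainder of the proof is purely local: \nameref{ax:hopf} followed by \nameref{ax:one}, with Lemma \ref{lem:scalar-1} tying the resulting scalar $1$ to the empty diagram. I expect the bookkeeping to be the only real work; no new axiom beyond those already cited should be needed.
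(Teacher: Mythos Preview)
Your plan has a genuine gap: it rests on a misreading of the Hopf axiom \nameref{ax:hopf}. That rule is not a local ``$\ket1/\bra1$ cancel'' law; it encodes the overflow/Hopf behaviour tied to the dimension $d$ (cf.\ its minimality argument via wire annotations, and its uses in \Cref{lem:skewed-hopf} and \Cref{lem:ket-sum}). There is no way to align the closed $\braket11$ diagram with the left-hand side of \nameref{ax:hopf} so that it fires to something \nameref{ax:one} can then erase. The dependency graph confirms this: \Cref{lem:braket-1-1} uses only axioms, and neither \nameref{ax:hopf} nor \nameref{ax:one} is among them.

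The step you are missing is the loop axiom \nameref{ax:loop}. The paper's proof first uses \nameref{ax:copy} in reverse to trade the $\ket1$ for a parameter-$1$ Z-spider; then \nameref{ax:Z-spider}, \nameref{ax:W-unit}, \nameref{ax:W-assoc} reshape the closed diagram so that the cup becomes a self-loop on a Z-spider sitting in exactly the context where \nameref{ax:loop} applies. After \nameref{ax:loop} removes that loop, \nameref{ax:W-assoc} and \nameref{ax:W-unit} clean up the residual W-structure, and a final \nameref{ax:copy} turns the remaining $0\to0$ Z-spider back into the scalar $1$, hence the empty diagram. So the ``break the loop topology'' intuition in your last paragraph is right, but the tool that does it is \nameref{ax:loop}, not \nameref{ax:hopf}.
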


\begin{lemma}
\label{lem:ket0}
\def\fig{ket-0}
\begin{align*}
\zw\vdash~\scalebox{\scaling}{
}
\eq{}\scalebox{\scaling}{
}
\end{align*}
\end{lemma}

\begin{lemma}
\label{lem:copy}
\def\fig{copy-k}
\begin{align*}
\zw\vdash~\scalebox{\scaling}{
}
\eq{}r^k\cdot\scalebox{\scaling}{
}
\end{align*}
\end{lemma}

\begin{lemma}
\label{lem:sum-gen}
\def\fig{sum-gen}
\begin{align*}
\zw\vdash~\scalebox{\scaling}{
}
\eq{}\scalebox{\scaling}{
}
\end{align*}
\end{lemma}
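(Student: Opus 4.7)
The plan is to prove this generalised sum lemma by induction on the number of Z-states being fused together through the W-node tree, treating \nameref{ax:sum} as the base case for two summands. Given the naming pattern with the previous derived lemmas (in particular \Cref{lem:copy} being the $k$-iterated version of \nameref{ax:copy}), I expect \Cref{lem:sum-gen} to state that an $n$-ary W-node with $0\to 1$ Z-states (of parameters $r_1,\ldots,r_n$) on each of its outputs equals a single Z-state of parameter $r_1+\cdots+r_n$, possibly up to scalars and in some ambient context.

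For the base case $n=2$, the result is exactly axiom \nameref{ax:sum}. For the inductive step, I would first apply axiom \nameref{ax:W-assoc} to reorganise the $n$-ary W-node into a $2$-ary W-node whose top output feeds into an $(n-1)$-ary W-node. The inductive hypothesis then applies to the $(n-1)$-ary subtree, replacing the first $n-1$ Z-states with a single Z-state of parameter $r_1+\cdots+r_{n-1}$. A final application of \nameref{ax:sum} combines this with the remaining $n$th Z-state, yielding the desired Z-state of parameter $r_1+\cdots+r_n$. Depending on the exact form of the lemma, I may additionally need \nameref{ax:W-unit} to handle the edge cases $n=0$ (empty sum giving a $0\to1$ Z-state with parameter $0$, i.e.~$\ket0$ up to normalisation via \Cref{lem:ket0}) or $n=1$ (trivial case).

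If instead the generalisation is about promoting \nameref{ax:sum} from a statement about states to a statement about Z-spiders of higher arity $1\to m$ or with arbitrary numbers of inputs, I would first employ \nameref{ax:Z-spider} to split each Z-spider into a $0\to1$ Z-state (carrying the parameter $r$) composed with an identity-like $1\to m$ Z-spider of parameter~$1$. The sum-of-states case above then applies, and the outer Z-spiders can be reassembled by another use of \nameref{ax:Z-spider}.

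The main obstacle I expect is the scalar bookkeeping. Because of the $\sqrt{k!}^{n+m-2}$ factor in the interpretation of the Z-spider, fusion and defusion via \nameref{ax:Z-spider} do not behave in a ``scalar-free'' way, and the normalisation of the black $k$-states (which secretly represent $\sqrt{k!}\ket k$) must be tracked throughout. I expect this to be handled cleanly at the syntactic level by staying close to the parameters $r_i$ themselves and only invoking \Cref{lem:scalar-1} and the preceding derivable equations once at the end to discard trivial scalar-$1$ components.
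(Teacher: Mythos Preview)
Your second interpretation is the correct one: \Cref{lem:sum-gen} generalises \nameref{ax:sum} from $0\to1$ Z-states to Z-spiders with several outputs (two spiders with parameters $r$ and $s$, sharing the same connection pattern through W-nodes, fuse into a single spider with parameter $r+s$). Your first interpretation (an $n$-ary sum of states through an $n\to1$ W-node) is not what the paper means here, so the induction on $n$ is not needed.

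Your sketch for the second interpretation is close to the paper's proof, but it elides the one non-obvious step. After you split each Z-spider via \nameref{ax:Z-spider} into a parameter-carrying $0\to1$ state composed with a parameter-$1$ spider, the W-nodes at the bottom are still separated from the two states by those parameter-$1$ spiders. You cannot apply \nameref{ax:sum} directly at that point. The paper uses \nameref{ax:bialgebra-Z-W} to push the bottom W-nodes through the parameter-$1$ spiders, collapsing the two parallel $1\to n$ spiders into a single $2\to1$ W-node followed by one $1\to n$ spider; only then does \nameref{ax:sum} apply to the two exposed states. A final \nameref{ax:Z-spider} fuses the resulting $(r+s)$-state with the parameter-$1$ spider. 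So the full chain is \nameref{ax:Z-spider}~$+$~\nameref{ax:bialgebra-Z-W}, then \nameref{ax:sum}, then \nameref{ax:Z-spider}.

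Your worry about scalar bookkeeping is unfounded here: the spider rule \nameref{ax:Z-spider} and the bialgebra \nameref{ax:bialgebra-Z-W} are stated without any scalar side-conditions, precisely because the $\sqrt{k!}^{n+m-2}$ normalisation was chosen to make these equations hold on the nose. No invocation of \Cref{lem:scalar-1} or any scalar manipulation is needed.
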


\begin{lemma}
\label{lem:skewed-hopf}
\def\fig{Z-W-loop-prf}
\begin{align*}
\zw\vdash~\scalebox{\scaling}{
}
\eq{}\scalebox{\scaling}{
}
\end{align*}
\end{lemma}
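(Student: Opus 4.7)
The identity labelled \emph{skewed Hopf} is a variant of the Hopf equation \nameref{ax:hopf} in which a Z-spider (presumably carrying a parameter $r$) sits on the loop formed between the W-node and its upside-down counterpart. My plan is to reduce the left-hand side to a configuration in which the unparameterised Hopf rule applies directly, and then clean up using the Z-spider equations to reintroduce the parameter on the correct side.

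First, I would use \nameref{ax:Z-spider} to split the Z-spider appearing on the loop into two parts: a ``parameter-carrying'' $1\to1$ core and a plain (parameter-$1$) $1\to2$ continuation that is directly adjacent to the W-node. This isolates the interaction between the Z- and W-nodes inside a pure ``Z meets W'' region, on which \nameref{ax:bialgebra-Z-W} can be applied to push the unparameterised Z-spider across the W-node. After the bialgebra step, one of the two loop wires will meet a trivial (empty) Z-spider opposite the W-node, which is exactly the pattern required by \nameref{ax:hopf}. Applying \nameref{ax:hopf} collapses that side of the loop, and what is left can be simplified using Lemma~\ref{lem:vacuum} (to eliminate a dangling W-node) and Lemma~\ref{lem:ket0} (to identify the $0\to1$ W-node as a $\ket 0$ state). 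Finally, the remaining parameter-carrying Z-spider on the other side of the loop is contracted onto the resulting $\ket 0$ via Lemma~\ref{lem:copy}, producing the scalar factor expected on the right-hand side.

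The main obstacle I expect is twofold. The first difficulty is combinatorial: the bialgebra rule \nameref{ax:bialgebra-Z-W} duplicates wires, so the bookkeeping of which wires form the loop and which are outputs must be tracked carefully, possibly by first invoking Lemma~\ref{lem:W-bialgebra-gen} (the ``generalised W-bialgebra'') to pre-arrange the W-node arities symmetrically. The second is the scalar matching: the parameter $r$ on the Z-spider should come out correctly on the right-hand side, and this requires that the Hopf collapse produces precisely the $\ket k$ that is then absorbed by Lemma~\ref{lem:copy} to yield the right $r^k$. If the straightforward bialgebra-then-Hopf route does not immediately close the gap, a back-up strategy is to decompose the identity on the loop using Equation~\ref{eq:id-decomp}, reducing the derivation to a finite case analysis on the possible particle counts, each case being dispatched by Lemma~\ref{lem:braket-1-1} together with Lemma~\ref{lem:copy}.
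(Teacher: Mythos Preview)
Your reading of the statement is off in a way that shapes the whole plan. The ``skewed Hopf'' lemma does not involve a parameterised Z-spider, and its right-hand side carries no scalar factor: it is a Hopf-style collapse of an \emph{unparameterised} Z-spider joined to a W-node by several parallel wires, with further legs of the Z-spider (and of the W-node) going elsewhere. The paper's proof reflects exactly this shape: it \emph{iterates} \nameref{ax:bialgebra-Z-W} to peel off those extra legs one at a time until the bare configuration of \nameref{ax:hopf} is reached, applies \nameref{ax:hopf} together with Lemma~\ref{lem:ket0}, and then tidies up with \nameref{ax:bialgebra-Z-W}, \nameref{ax:W-assoc} and \nameref{ax:W-unit}. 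None of \nameref{ax:Z-spider}, Lemma~\ref{lem:vacuum}, or Lemma~\ref{lem:copy} appears.

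Concretely, two things in your plan would not close. First, you treat the bialgebra step as a single application followed directly by \nameref{ax:hopf}; this only matches the special case with no extra Z-legs, whereas the lemma is stated (and used, e.g.\ in the proofs of Lemma~\ref{lem:dot-product} and Proposition~\ref{prop:NF-cup}) for an arbitrary number of them, and the iterated \nameref{ax:bialgebra-Z-W} is precisely what strips those off. Second, your back-up strategy of decomposing the identity via Equation~\eqref{eq:id-decomp} is a \emph{semantic} identity, not an equation derivable in \zw; the calculus has no internal sum of diagrams, so a ``finite case analysis on particle counts'' cannot be carried out syntactically. The detour through a parameter $r$ and Lemma~\ref{lem:copy} is therefore both unnecessary and unavailable here: drop the parameter, make the bialgebra step inductive, and the argument becomes the paper's.
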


\begin{lemma}
\label{lem:ket-sum}
\def\fig{ket-sum}
\begin{align*}
\zw\vdash~\scalebox{\scaling}{
	\input{./figures/\fig/\fig_12.tikz}%
}
\eq[]{}
\begin{cases}
\scalebox{\scaling}{
	\input{./figures/\fig/\fig_14.tikz}%
} & \text{ if }k+\ell<d\\[1em]
0\cdot\scalebox{\scaling}{
	\input{./figures/\fig/\fig_11.tikz}%
} & \text{ if }k+\ell\geq d
\end{cases}
\end{align*}
\end{lemma}

\begin{lemma}
\label{lem:qubit-hopf}
\def\fig{qubit-Hopf}
\begin{align*}
\zw\vdash~\scalebox{\scaling}{
}
\eq{}\scalebox{\scaling}{
	\input{./figures/\fig/\fig_05.tikz}%
}
\end{align*}
\end{lemma}

\begin{lemma}
\label{lem:cup-distrib-aux}
\def\fig{NF-cup-distrib-lemma}
\begin{align*}
\zw\vdash~\scalebox{\scaling}{
}
\eq{}\scalebox{\scaling}{
}
\end{align*}
\end{lemma}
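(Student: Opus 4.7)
The plan is to unfold both sides against the canonical decomposition of the identity given by \Cref{eq:id-decomp}, which expresses $\textit{id}=\sum_k\frac{1}{k!}\ket{k}\bra{k}$, and to reduce the equality to a family of equalities between ``pure'' diagrams indexed by $k$. Concretely, I will first insert such a decomposition along the internal wire(s) produced by the cup so that both sides become linear combinations of diagrams whose internal connections carry fixed $\ket{k}$'s rather than identities. On the left-hand side, the cup then becomes a $\braket{k}{\ell}$-like contraction, which by \Cref{eq:braket} (and its derivation \Cref{lem:ket0}, \Cref{lem:braket-1-1}) collapses to a scalar $k!$ times a Kronecker delta, killing all cross-terms in the double sum and leaving a single index $k$.

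Next, I will push the $\ket{k}$'s that now sit on internal wires through the surrounding Z-spiders and W-nodes using the ``transport'' lemmas that were established earlier. The Z-spider absorbs $\ket{k}$ via \Cref{lem:copy}, turning into the scalar $r^k$ times the empty-Z-spider (i.e.~$k$ copies propagating), and the W-node combines incoming $\ket{i}$ and $\ket{j}$ with the expected multinomial weight via \Cref{lem:ket-sum}, which in particular vanishes when the capacity is exceeded. After these moves, both sides are written as a sum, over the same index set $\{k\mid k<d\}$, of the same elementary diagrams each scaled by a combinatorial coefficient.

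The final step is to check that the coefficients match and to repackage the sum on the right-hand side using \Cref{lem:sum-gen} (the generalised \nameref{ax:sum} rule). The bookkeeping combines the $\frac{1}{k!}$ from the identity decomposition, the $k!$ from the bracket, the multinomial $\binom{k}{i_1,\ldots,i_n}$ from propagating through the W-node, and the rescaling factors $\sqrt{k!}^{\,n+m-2}$ hidden in each Z-spider. Because the interpretation is sound (\Cref{prop:soundness}) and because all the transport lemmas used are purely syntactic, once the coefficients are shown to coincide we are done by rewriting back to the compact right-hand side form, reading the derivation steps in reverse.

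The main obstacle will be the combinatorial bookkeeping of the square-root factorials: the non-standard $\sqrt{k!}^{\,n+m-2}$ normalisation of Z-spiders and the $\sqrt{\binom{k}{i_1,\ldots,i_n}}$ normalisation of W-nodes tend to leave stray $\sqrt{k!}$ factors at intermediate stages that only cancel after both halves of the diagram (the one absorbed by $\bra{k}$ and the one absorbed by $\ket{k}$) are combined through the identity decomposition. Getting the degree count right for each Z-spider after it has lost or gained legs during the rewrite, and checking that the ``$k+\ell\geq d$ gives zero'' case on the right-hand side is matched by the same vanishing on the left, is the delicate part; everything else is a sequence of applications of the already-established lemmas \Cref{lem:copy,lem:ket-sum,lem:sum-gen}.
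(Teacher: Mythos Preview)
Your plan has a genuine gap: the identity decomposition \Cref{eq:id-decomp} that you want to ``insert along the internal wires'' is a \emph{semantic} identity, not a syntactic one. The paper is explicit about this (see the footnote at that point): such sum-of-diagrams decompositions are useful for intuition and for checking soundness, but $\cat{ZW\!}_d$ has no formal sum of morphisms, so there is no rule of \zw that lets you replace an identity wire by $\sum_k\frac1{k!}\ketbra{k}$. The statement to be proved is $\zw\vdash D_1=D_2$, i.e.\ a derivation in the equational theory; arguing via $\interp{\cdot}$ and then ``rewriting back'' would require the very completeness theorem that this lemma is a stepping stone towards (cf.\ the dependency graph in \Cref{fig:lem-deps}). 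The lemmas you cite (\Cref{lem:copy}, \Cref{lem:ket-sum}, \Cref{lem:sum-gen}) are genuinely syntactic, but \Cref{lem:sum-gen} only sums \emph{parameters} inside a Z-spider; it does not let you repackage an external linear combination of diagrams.

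By contrast, the paper's proof is a three-line diagrammatic rewrite: apply \nameref{ax:bialgebra-Z-W} to push the Z-spider through the W-node, then use \Cref{lem:qubit-hopf} together with \Cref{lem:copy} to collapse the resulting parallel structure, and finish with \nameref{ax:W-unit}. No semantic decomposition, no coefficient bookkeeping. If you want to salvage your approach, you would need a purely syntactic analogue of ``resolving the identity'', which is essentially what the normal-form machinery eventually provides---but that comes \emph{after} this lemma, not before.
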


\begin{lemma}
\label{lem:Pascal}
\def\fig{lemma-Pascal-multinomial}
\begin{align*}
\zw\vdash~\scalebox{\scaling}{
}
\eq{}\scalebox{\scaling}{
}
\end{align*}
where on the right-hand-side, there are as many Z-spiders as there are ways to decompose $n$ as a sum of $m$ natural numbers: $n = i_1+...+i_m$.
\end{lemma}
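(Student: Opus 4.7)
The plan is to prove the identity by induction on $m$, the arity of the W-node on the right-hand side. In the base case $m=1$, there is only a single decomposition $n=i_1$, and the claim reduces to an application of \nameref{ax:W-unit} (which removes the trivial $1\to 1$ W-node). The key base case is $m=2$: here we want to re-express the left-hand diagram as a sum (in the sense implemented by the binary W-node) of $n+1$ terms, one per decomposition $n=i_1+i_2$. This is exactly the backward content of \nameref{ax:sum} applied iteratively, together with the Z-spider fusion rule \nameref{ax:Z-spider} to split and regroup parameters. Concretely, I would introduce $n-1$ ancillary copies by repeatedly using \nameref{ax:sum} to rewrite a single Z-spider connected to a binary W-node into a cascade of binary W-nodes feeding $n+1$ Z-spiders whose parameters range over the admissible decompositions; the correct binomial weighting appears automatically from the normalisation $\sqrt{k!}^{n+m-2}$ baked into the Z-spider interpretation.

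For the inductive step, I would use \nameref{ax:W-assoc} to rewrite an $m$-ary W-node as an $(m{-}1)$-ary W-node composed on one leg with a binary W-node. Applying the induction hypothesis on the $(m{-}1)$-ary piece yields a diagram indexed by $(m{-}1)$-tuples $(i_1,\dots,i_{m-2},j)$ with $i_1+\dots+i_{m-2}+j=n$, and the $m=2$ case applied to the remaining binary W-node splits each such term further into pairs $j=i_{m-1}+i_m$. Re-indexing the resulting double sum over $m$-tuples $(i_1,\dots,i_m)$ with $i_1+\dots+i_m=n$, and noting that the multinomial identity
\[
\binom{n}{i_1,\dots,i_{m-2},j}\binom{j}{i_{m-1},i_m} \;=\; \binom{n}{i_1,\dots,i_m}
\]
is the combinatorial content making the Z-spider and W-node normalisation factors fit, yields the claimed equality. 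I would also need to invoke the co-commutativity of the W-node to permute the output legs into the natural order after associativity.

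The main obstacle will be the bookkeeping of scalar factors. The Z-spider interpretation carries a $\sqrt{k!}^{n+m-2}$, the W-node interpretation carries a $\sqrt{\binom{k}{i_1,\dots,i_n}}$, and \nameref{ax:sum} only directly matches \emph{degree-$1$} Z-spiders, so extending to arbitrary arities requires careful combination with \nameref{ax:Z-spider} (and possibly \Cref{lem:sum-gen}, which is already available above \Cref{lem:Pascal} in the appendix). In practice, I expect \Cref{lem:sum-gen} to do most of the heavy lifting in the inductive step, turning this lemma into a purely combinatorial verification that the square-root factors recombine correctly without leaving a stray scalar, so the cleanest write-up is probably to perform the induction directly with \Cref{lem:sum-gen} as the principal rewriting tool rather than raw applications of \nameref{ax:sum}.
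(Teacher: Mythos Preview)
Your induction variable is different from the paper's, and this difference exposes a real gap in your base case. The paper inducts on $n$ (the ``weight''), not on the W-arity $m$. At step $n{+}1$ it peels off one wire using \nameref{ax:bialgebra-Z-W}, then invokes \Cref{lem:W-bialgebra-gen} to spread the extra contribution across all $m$ outputs simultaneously, applies the induction hypothesis for $n$, reorganises with \nameref{ax:bialgebra-Z-W}, \nameref{ax:W-assoc}, \nameref{ax:Z-spider}, and finally uses \Cref{lem:sum-gen} to merge Z-spiders that end up with identical connection patterns. The multinomial coefficients arise \emph{as Z-spider parameters} from that final merge, via the generalised Pascal identity $\sum_j \binom{n}{p_1,\dots,p_j-1,\dots,p_m}=\binom{n+1}{p_1,\dots,p_m}$.

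Your $m=2$ base case does not follow from ``repeatedly using \nameref{ax:sum}''. Rule \nameref{ax:sum} (and \Cref{lem:sum-gen}) only merges Z-spiders that \emph{already} share the same connection pattern by adding their parameters; it cannot manufacture the $n{+}1$ different connection patterns $(i_1,i_2)$ from a single one. The actual work of distributing the $n$ wires across the two W-outputs is done by the bialgebra rules \nameref{ax:bialgebra-Z-W} and \Cref{lem:W-bialgebra-gen}, which you never invoke. To establish your $m=2$ case you would still need an induction on $n$ peeling off one wire at a time---which is precisely the paper's argument, just restricted to $m=2$. Your subsequent induction on $m$ is then redundant, since the paper's induction on $n$ already handles all $m$ at once thanks to \Cref{lem:W-bialgebra-gen}. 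Also note that your remark about the binomial weighting ``appearing automatically from the normalisation'' is a semantic observation; the lemma asserts derivability in \zw, so the coefficients must be produced syntactically as Z-spider parameters, which is exactly what \Cref{lem:sum-gen} does in the paper's final step.
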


\begin{lemma}
\label{lem:Not-through-Z}
\def\fig{Not-through-Z-prf}
\begin{align*}
\zw\vdash~\scalebox{\scaling}{
}
\eq{}\scalebox{\scaling}{
	\input{./figures/\fig/\fig_10.tikz}%
}
\end{align*}
\end{lemma}

\begin{lemma}
\label{lem:partition}
With $0<k<d$:
\def\fig{partition-prf}
\begin{align*}
\zw\vdash~\scalebox{\scaling}{
}
\eq{}\scalebox{\scaling}{
	\input{./figures/\fig/\fig_05.tikz}%
}
\end{align*}
\end{lemma}

\begin{lemma}
\label{lem:ket-1}
If $0<k<d$:
\def\fig{ket-1}
\begin{align*}
\zw\vdash~ k!~\scalebox{\scaling}{
}
\eq{}\scalebox{\scaling}{
}
\end{align*}
\end{lemma}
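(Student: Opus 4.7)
The proof should go by induction on $k$. The base case $k=1$ is immediate: both sides of the equation reduce to a single $\tikzfig{ket-1}$, and the scalar $1!=1$ can be absorbed by \Cref{lem:scalar-1}.

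For the inductive step, the starting point is the recursive definition of $\tikzfig{ket-k}$ via the equation $\tikzfig{ket-k-def-induct}$, which presents $\ket{k}$ as the binary W-node applied to $\tikzfig{ket-1}$ and $\tikzfig{ket-k-minus-1}$. The plan is to multiply through by $k!$, then use the induction hypothesis to replace the $\tikzfig{ket-k-minus-1}$ sub-diagram by its $(k-1)$-copy $\tikzfig{ket-1}$ expansion (rescaled by $(k-1)!$). The remaining $k$ can be redistributed across the combined diagram. Then one can fuse the resulting nested W-nodes using Equation \nameref{ax:W-assoc} to obtain a single multi-output W-node fed by $k$ copies of $\tikzfig{ket-1}$, matching the right-hand side of the lemma.

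The bridge between the inductive expansion and the desired shape will likely require \Cref{lem:ket-sum} (to collapse or combine ket states that meet at W-nodes), \Cref{lem:Pascal} (to account for the multinomial combinatorics of how the $k$ indistinguishable $\ket 1$'s can be partitioned across the outputs), and potentially \Cref{lem:sum-gen} to aggregate the resulting scalar contributions. The restriction $0<k<d$ is essential so that \Cref{lem:ket-sum} stays in the non-vanishing case throughout the derivation.

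The principal obstacle will be the scalar bookkeeping. The rescaling convention on the Z-spider introduces $\sqrt{k!}$-type factors, the W-node introduces $\sqrt{\binom{k}{i_1,\ldots,i_n}}$ factors, and \Cref{lem:ket-sum} has been stated without explicit coefficient; all of these must conspire to produce exactly $k!$ on the correct side of the equation. Verifying that the multinomial sums match the factor of $k$ going from $(k-1)!$ to $k!$ is the delicate combinatorial heart of the argument, and may require factoring out a Z-spider step via \nameref{ax:Z-spider} to reorganise the scalars before applying \Cref{lem:copy} to clean things up.
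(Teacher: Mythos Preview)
Your inductive approach is different from the paper's, which is entirely direct: it absorbs the scalar $k!$ into a Z-spider parameter via \nameref{ax:copy}, reshapes the diagram with \nameref{ax:W-unit}, \nameref{ax:W-assoc}, \nameref{ax:bialgebra-Z-W} and \Cref{lem:vacuum}, then applies \Cref{lem:partition} as the single key step (this is where all the multinomial combinatorics lives, via \Cref{lem:Pascal} inside its proof), and finishes with one more use of \nameref{ax:W-assoc}. There is no induction on $k$.

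Your induction has a genuine gap at the step ``the remaining $k$ can be redistributed''. After unfolding $\ket{k}$ as the binary W-node applied to $\ket{1}$ and $\ket{k-1}$ inside the right-hand side, you are left with a back-to-back composition of an upside-down $2\to1$ W-node and a $1\to k$ W-node. To invoke the induction hypothesis you would need the sub-pattern ``$\ket{k-1}$ fed into a $1\to(k-1)$ W-node'' to appear, which requires commuting these two W-nodes past each other. The only axiom that performs such a commutation is \nameref{ax:bialgebra-W}, but in \zw it carries a mandatory Z-spider context that is absent here; \nameref{ax:W-assoc} only reassociates outputs within a single W-tree and cannot push a co-W through a W. The derived result that lifts this restriction when the inputs are kets, \Cref{lem:W-bialgebra-qubit-ctxt}, itself depends on \Cref{lem:ket-1}, so appealing to it would be circular. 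And if instead you reach for \Cref{lem:Pascal} to expand $\ket{k}$ through the W-node as a multinomial sum and then kill terms with the $\bra{1}$'s, you are no longer doing induction: that is precisely the computation packaged in \Cref{lem:partition}, i.e.\ the paper's direct proof. So either the inductive step does not close, or it collapses into the direct argument.
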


\begin{lemma}
\label{lem:dot-product}
If $k\neq \ell$:
\def\fig{braket-k}
\begin{align*}
\zw\vdash~\scalebox{\scaling}{
}
\eq[]{}k!
\quad\text{and}\quad
\def\fig{braket-k-l}
\zw\vdash~\scalebox{\scaling}{
}
\eq[]{}0
\end{align*}
\end{lemma}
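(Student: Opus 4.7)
The plan is to prove both equations simultaneously by strong induction on $k+\ell$, exploiting the inductive definition of $\tikzfig{ket-k}$ (and its dagger, obtainable via the compact structure).

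For the base cases: the $k=\ell=1$ case is exactly Lemma~\ref{lem:braket-1-1}. For $k=\ell=0$, the braket is the $0\to 1$ W-node composed with its upside-down version along a single wire; combining Lemma~\ref{lem:vacuum} (which provides a W-counit-like simplification) with Lemma~\ref{lem:scalar-1} collapses this to the empty diagram, i.e.~$0!=1$. For the asymmetric bases $(k,\ell)\in\{(0,1),(1,0)\}$, the braket reduces via Lemma~\ref{lem:ket0} and axiom \nameref{ax:hopf} (whose merged form, as the paper explicitly notes, was designed to handle such ``$\ket0$ meets $\ket 1$'' situations) to the null scalar $0$.

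For the inductive step, assume $k+\ell\geq 3$, so WLOG $\ell\geq 2$. Unfold $\tikzfig{ket-k}$ according to its inductive definition $\tikzfig{ket-k-def-induct}$, expressing it as a $2\to 1$ W-node applied to the tensor product of $\tikzfig{ket-1}$ with the ``ket'' of index $\ell-1$; do the same for $\tikzfig{bra-k}$ when $k\geq 1$ using the dagger obtained from the compact structure. The braket then contains a central pairing of a $2\to 1$ W-node with its upside-down version on a single wire. Rewriting this middle with the W-bialgebra axiom \nameref{ax:bialgebra-W}, and then invoking Lemma~\ref{lem:W-bialgebra-gen} to massage the resulting shape, decomposes the whole diagram into a sum of strictly smaller brakets together with scalar factors coming from multinomial coefficients. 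The induction hypothesis then resolves each smaller braket to the appropriate value.

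The main obstacle will be the combinatorial bookkeeping: the W-bialgebra rewrite generates several summands, each carrying a multinomial weight, and one must verify that these weights reconstruct exactly $k!$ when $k=\ell$ while cancelling (or individually vanishing by the inductive hypothesis) when $k\neq\ell$. Lemma~\ref{lem:Pascal} should supply the key diagrammatic Pascal/multinomial identity for this cancellation, and Lemma~\ref{lem:ket-1} the factorial extraction needed to normalise intermediate scalars. A cleaner alternative strategy we could try is to fully expand $\tikzfig{ket-k}$ into $\ell$ copies of $\tikzfig{ket-1}$ and $\tikzfig{bra-k}$ into $k$ copies of $\tikzfig{bra-1}$ via iterated applications of \nameref{ax:W-assoc}, so that the whole diagram reduces to a sum over combinatorial matchings of $\tikzfig{ket-1}$s with $\tikzfig{bra-1}$s; each surviving matching contributes $1$ by Lemma~\ref{lem:braket-1-1}, and when $k=\ell$ the $k!$ permutations produce the claimed scalar, whereas when $k\neq\ell$ a surplus $\tikzfig{bra-1}$ or $\tikzfig{ket-1}$ remains unmatched and is annihilated by a W-counit simplification from Lemma~\ref{lem:vacuum}.
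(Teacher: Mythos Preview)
Your inductive plan is genuinely different from the paper's route, but it runs into a structural obstacle you have not addressed. In qudit $\cat{ZW}$, axiom \nameref{ax:bialgebra-W} is \emph{not} a free bialgebra law: as the paper stresses, W--W bialgebra only holds in a specific syntactic context (this is precisely why the $\cat{FdHilb}$ version trades that context for capacity side-conditions). The bare sandwich of an upside-down $2\to1$ W with a $1\to2$ W, decorated only by $|1\rangle,|k{-}1\rangle$ and $\langle 1|,\langle\ell{-}1|$, is not that context, and Lemma~\ref{lem:W-bialgebra-gen} carries the same restriction. Even granting the rewrite, the output is a single diagram with a complete-bipartite shape and intervening Z-spiders, not a ``sum of strictly smaller brakets'': there is no summation primitive in the calculus for your induction hypothesis to consume term by term. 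Your alternative strategy inherits both problems, and in addition appeals to Lemma~\ref{lem:vacuum} to annihilate a surplus $\langle 1|$, but that lemma concerns $\langle 0|$ through a W-node, not $\langle 1|$.

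The paper avoids induction entirely. For $k=\ell$ it unfolds the bra via (the dagger of) Lemma~\ref{lem:ket-sum}, then invokes Lemma~\ref{lem:ket-1}---which already packages the whole ``extract $k!$ from $|k\rangle$'' computation through Lemma~\ref{lem:partition}---and closes with Lemma~\ref{lem:braket-1-1}. For $k>\ell$ it unfolds similarly, then uses Lemma~\ref{lem:copy} and the generalised Hopf law Lemma~\ref{lem:skewed-hopf} to strip away the matched part, leaving a residual that Lemma~\ref{lem:ket-1} turns into $(k{-}\ell)!$ times a configuration in which Lemma~\ref{lem:ket0} and \nameref{ax:copy} produce the scalar $0$. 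The lemma doing the combinatorial heavy lifting you were hoping to rebuild by induction is Lemma~\ref{lem:ket-1}; using it directly makes the induction unnecessary.
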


\begin{lemma}
\label{lem:not-through-phase}
\def\fig{Not-through-phase}
\begin{align*}
\zw\vdash~\scalebox{\scaling}{
}
\eq{}r\cdot\scalebox{\scaling}{
}
\end{align*}
\end{lemma}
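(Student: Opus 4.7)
The plan is to show that the factor $r$ on the right-hand side arises precisely because the substructure on the left picks out the $k=1$ summand of the Z-spider $\sum_{k=0}^{d-1} r^k \sqrt{k!}^{n+m-2}\ket{k^m}\bra{k^n}$. The core of the proof therefore consists in reducing the left-hand side to a configuration where \Cref{lem:copy} (i.e.~the graphical copy rule $\ket k$ through a Z-spider yields the scalar $r^k$) can be applied with $k=1$, and then cleaning up with the inner-product computations of \Cref{lem:dot-product} and \Cref{lem:braket-1-1}.

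First, I would insert a resolution of the identity along the wire that enters the $r$-labelled Z-spider. Concretely, I would appeal to \Cref{lem:partition} (which in turn relies on the identity decomposition, \Cref{eq:id-decomp}, made syntactic via Equations \nameref{ax:decomp-Z}, \nameref{ax:Z-spider} and \nameref{ax:sum}) to rewrite the wire into a sum of ket--bra terms indexed by the allowed values $0\leq k<d$. This exposes each summand as a diagram in which a pure $\ket k$ flows into the phase spider, which is exactly the left-hand side of \Cref{lem:copy}. Applying \Cref{lem:copy} in every summand turns the Z-spider into its copied (scalarless) form multiplied by $r^k$.

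Next I would use the inner-product computations of \Cref{lem:dot-product} together with \Cref{lem:braket-1-1} to kill all summands except the $k=1$ one: the other branches of the sum close up into brackets $\braket{k}{\ell}$ with $k\neq\ell$, which are equal to $0$ by \Cref{lem:dot-product}, while the $k=1$ branch contracts (possibly via \Cref{lem:ket-1} and \Cref{lem:vacuum}) to the diagram on the right-hand side, multiplied by the surviving coefficient $r^1 = r$. Scalar bookkeeping is then handled automatically by the conventions established after \Cref{lem:scalar-1}.

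The main obstacle I expect is the combinatorial coefficient tracking: the Z-spider carries the non-standard $\sqrt{k!}^{n+m-2}$ normalisation, the $\ket k$ in the partition are rescaled by $\sqrt{k!}$, and the bracket $\braket{k}{k}$ evaluates to $k!$ (\Cref{lem:dot-product}), so all the square roots and factorials must cancel cleanly to leave exactly $r$ and not some spurious $\sqrt{k!}$ factor. This is where I would be most careful, and where \Cref{lem:Pascal} together with the rescaling conventions built into Equations \nameref{ax:decomp-Z} and \nameref{ax:sum} do the bulk of the work. Once the coefficients align, the rewrite is a direct chain of applications of the lemmas already established earlier in the section.
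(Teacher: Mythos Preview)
Your proposed route is dramatically more involved than what is needed, and it rests on a step that is not actually available. You plan to ``insert a resolution of the identity'' on the wire entering the $r$-spider and then reason summand-by-summand; but formal sums are not syntax in \zw (the decompositions in Equations~\eqref{eq:ket-distrib-on-W}--\eqref{eq:id-decomp} are expressly semantic), and \Cref{lem:partition} does not provide a diagrammatic identity decomposition --- it is a single-diagram equality about how a $\ket k$ splits through a W/Z structure, not a basis expansion of $\tikzfig{id}$. So the very first move in your sketch, and hence the whole ``kill all summands except $k=1$'' narrative, has no syntactic carrier in the calculus as set up here.

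The paper's proof, by contrast, is a three-step local rewrite that bypasses any decomposition entirely. One first uses \Cref{lem:Z-id} together with \nameref{ax:Z-spider} to factor the $r$-spider so that the adjacent W-node is now against a parameter-$1$ Z-spider; then \nameref{ax:bialgebra-Z-W} pushes the W-node (and the $\ket1$ it carries) through that Z-spider, exposing a bare $\ket1$ incident on the remaining $r$-spider; finally the axiom \nameref{ax:copy} (not the derived \Cref{lem:copy}) fires directly, producing the scalar $r$ and the right-hand diagram. No coefficient bookkeeping is required: the only $k$ that ever appears is $k=1$, so the factorial worries you anticipate never arise. In short, the key idea you are missing is that bialgebra already isolates the single $\ket1$ needed for \nameref{ax:copy}; there is no need to ever range over all $k$.
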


\begin{lemma}
\label{lem:mult-to-Z-effect}
If $0\leq k<d$:
\def\fig{multiplier-to-Z-effect}
\begin{align*}
\zw\vdash~\scalebox{\scaling}{
}
\eq{}\scalebox{\scaling}{
	\input{./figures/\fig/\fig_06.tikz}%
}
\end{align*}
\end{lemma}

\begin{lemma}
\label{lem:Z-loop-on-not}
If $0\leq k<d$, we have:
\def\fig{Z-loop-on-P}
\begin{align*}
\zw\vdash~\scalebox{\scaling}{
}
\eq{}\scalebox{\scaling}{
	\input{./figures/\fig/\fig_10.tikz}%
}
\end{align*}
\end{lemma}

\begin{lemma}
\label{lem:W-bialgebra-qubit-ctxt}
The bialgebra between W-nodes can be used in the following context, if $0\leq k_i<d$ for all $i$:
\def\fig{W-bialgebra-context}
\begin{align*}
\zw\vdash~\scalebox{\scaling}{
}
\eq[]{}~\scalebox{\scaling}{
}
\end{align*}
\end{lemma}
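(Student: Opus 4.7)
The plan is to prove the lemma by induction on the multi-index $(k_1,\ldots,k_n)$, pushing every $\ket{k_i}$ effect down to a collection of $\ket 1$ effects via the inductive definition of the kets
$\def\fig{ket-k-def-induct}\tikzfigc{00}:=\tikzfigc{01}$,
until the equation directly matches an instance of the core axiom \nameref{ax:bialgebra-W}. The name of the lemma reflects precisely this idea: the axiom is only known to hold in the ``qubit'' context where $\ket 1$ effects are stacked, and we promote it to the full $\ket k$ context by reducing the latter to the former.

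In more detail, the base case is the one where every $k_i\in\{0,1\}$. If some $k_i=0$, then $\ket 0 = \tikzfig{ket-0}$ is a $0\to 1$ $W$-node, and \nameref{ax:W-unit} together with \nameref{ax:W-assoc} lets me delete the corresponding leg, reducing to a smaller instance. Once only $\ket 1$ effects remain, the equation is a direct application of the axiom \nameref{ax:bialgebra-W}, possibly preceded by a rearrangement using the co-commutativity of the $W$-node and \nameref{ax:W-assoc} to match the syntactic shape of the axiom. For the inductive step, pick an index $i$ with $k_i\geq 2$; by the inductive definition of $\ket{k_i}$, I can rewrite that effect as a binary $W$-node applied to $\ket 1$ and $\ket{k_i-1}$. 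Using \nameref{ax:W-assoc}, this new $W$-node is absorbed into the ambient $W$-node that fed $\ket{k_i}$, producing an equivalent diagram whose multi-index $(k_1,\ldots,k_i-1,1,\ldots,k_n)$ is strictly smaller in lexicographic order. The induction hypothesis applies, and a symmetric sequence of \nameref{ax:W-assoc} and definitional unfoldings on the other side of the equation reassembles the $\ket{k_i}$ effect, yielding the desired equality.

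The main obstacle is purely combinatorial: each time a $\ket{k_i}$ is split, the new $W$-node must be merged into the ambient $W$-structure in a way compatible with both sides of the equation simultaneously, and the associativity rewrites have to be tracked carefully so that the reassembly after applying the induction hypothesis mirrors the decomposition step by step. The hypothesis $0\leq k_i<d$ is used implicitly throughout: it guarantees that every intermediate $\ket{k}$ remains non-null (in the sense of Lemma~\ref{lem:ket-sum}), so that no rewrite accidentally introduces a zero scalar, and that the inductive definition of $\ket{k_i}$ is legitimate. Beyond this bookkeeping, the proof relies only on \nameref{ax:W-assoc}, the flexsymmetry of the $W$-node, \nameref{ax:W-unit}, and a single application of \nameref{ax:bialgebra-W} at the base of the induction, so no new axiom is needed.
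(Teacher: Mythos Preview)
Your inductive step does not reduce to a smaller instance of the lemma. After unfolding $\ket{k_i}=W(\ket 1,\ket{k_i-1})$ and absorbing via \nameref{ax:W-assoc}, the $i$-th bottom $W$-node acquires a third input: it is now connected to the leftmost $W$-node by one wire \emph{and} to both $\ket1$ and $\ket{k_i-1}$, whereas the lemma's left-hand side has each bottom $W$-node carrying exactly one wire to the leftmost $W$-node and a single ket input. To reach your claimed multi-index $(k_1,\dots,k_i{-}1,1,\dots,k_n)$ with $n{+}1$ bottom nodes, you would have to split this $3\to1$ node into two parallel $2\to1$ nodes, each with its \emph{own} wire to the leftmost $W$-node. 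But the leftmost $W$-node has only $n$ outputs; producing an $(n{+}1)$-st output routed to a new bottom node is precisely a $W$-bialgebra move---the very statement you are proving. Associativity and co-commutativity alone cannot duplicate that connection, so the induction is circular.

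The paper attacks the problem from the opposite end: instead of decomposing the $\ket{k_i}$, it uses Lemma~\ref{lem:ket-1} (with $k=d-1$) to rewrite the single $\ket1$ feeding the leftmost $W$-node, then \nameref{ax:bialgebra-Z-W}, \nameref{ax:Z-spider} and \nameref{ax:W-assoc} to fan the result into $d{-}1$ parallel wires to every bottom $W$-node. Since each $k_i<d$, this supplies enough connections for the generalised $W$-bialgebra (Lemma~\ref{lem:W-bialgebra-gen}) to apply in one shot, after which the transformation is undone. The ingredient your argument lacks is exactly a mechanism to increase the number of wires between the leftmost $W$-node and the bottom $W$-nodes without already assuming the bialgebra; Lemma~\ref{lem:ket-1} (itself resting on Lemma~\ref{lem:partition}) is what provides it.
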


\end{multicols}

As we have quite a number of them, we provide in \Cref{fig:lem-deps} a small (automatically generated) graph of all dependencies between lemmas, propositions and theorems, so as to convince the reader that there is no circular proof. Equations from the equational theories are not displayed in the graph, they are assumed to be available at any point. The graph also includes the dependencies for the proofs of \Cref{sec:fdhilb}.

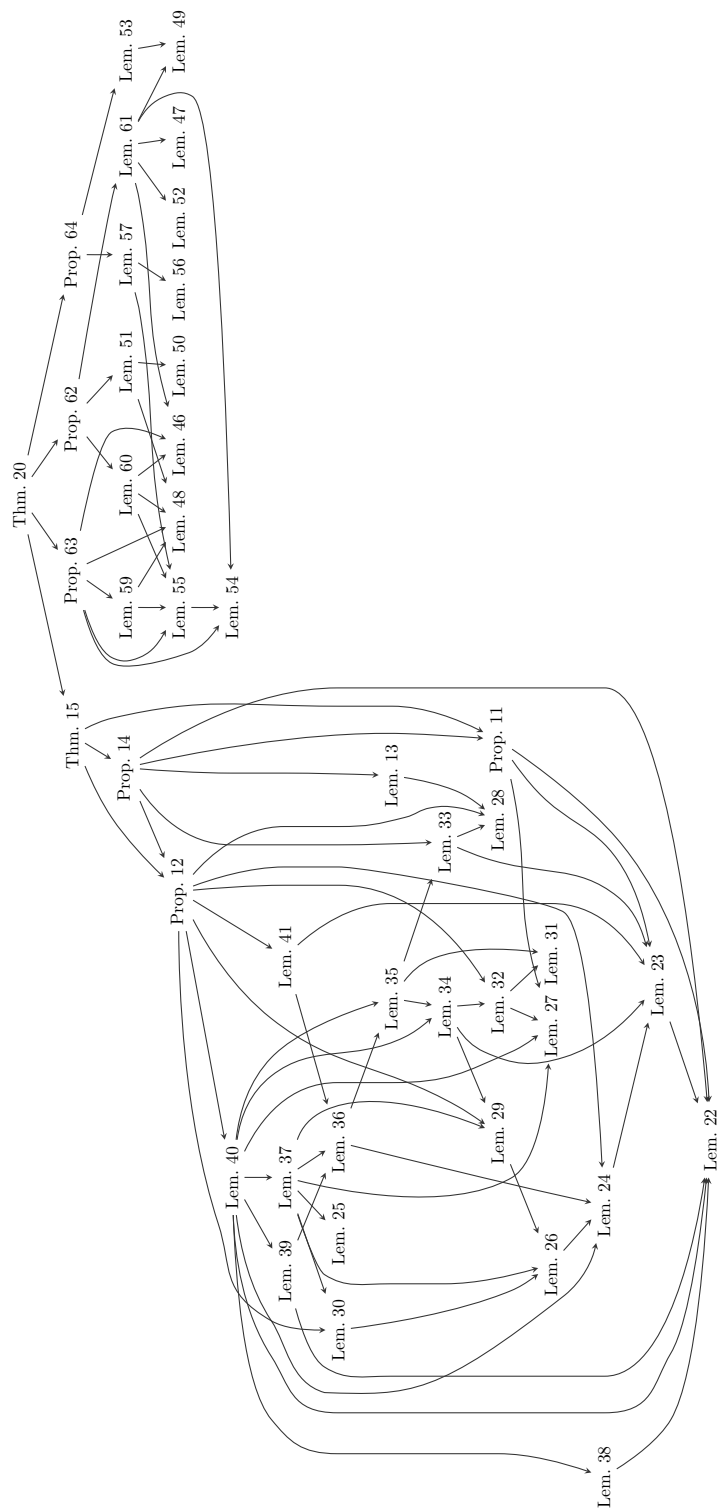
\begin{figure}[!htb]
\centering
\scalebox{0.67}{\rotatebox{90}{
\begin{tikzpicture}[>=stealth,line join=bevel,yscale=0.4,xscale=0.27]
\node (v0) at (1645.0bp,907.0bp) [draw,draw=none] {Thm.~\ref{thm:completeness-qudit}};
\node (v1) at (1578.0bp,833.0bp) [draw,draw=none] {Prop.~\ref{prop:NF-generators}};
\node (v2) at (1637.0bp,315.0bp) [draw,draw=none] {Prop.~\ref{prop:NF-tensor}};
\node (v3) at (1321.0bp,759.0bp) [draw,draw=none] {Prop.~\ref{prop:NF-cup}};
\node (v5) at (805.0bp,19.0bp) [draw,draw=none] {Lem.~\ref{lem:Z-id}};
\node (v13) at (1424.0bp,389.0bp) [draw,draw=none] {Lem.~\ref{lem:Pascal}};
\node (v24) at (1562.0bp,463.0bp) [draw,draw=none] {Lem.~\ref{lem:W21-to-NF}};
\node (v4) at (1127.0bp,93.0bp) [draw,draw=none] {Lem.~\ref{lem:W-bialgebra-gen}};
\node (v12) at (1042.0bp,241.0bp) [draw,draw=none] {Lem.~\ref{lem:copy}};
\node (v6) at (669.0bp,167.0bp) [draw,draw=none] {Lem.~\ref{lem:vacuum}};
\node (v8) at (819.0bp,315.0bp) [draw,draw=none] {Lem.~\ref{lem:skewed-hopf}};
\node (v9) at (410.0bp,537.0bp) [draw,draw=none] {Lem.~\ref{lem:ket-sum}};
\node (v10) at (1090.0bp,315.0bp) [draw,draw=none] {Lem.~\ref{lem:cup-distrib-aux}};
\node (v14) at (1470.0bp,315.0bp) [draw,draw=none] {Lem.~\ref{lem:sum-gen}};
\node (v22) at (727.0bp,685.0bp) [draw,draw=none] {Lem.~\ref{lem:Z-loop-on-not}};
\node (v23) at (1186.0bp,611.0bp) [draw,draw=none] {Lem.~\ref{lem:W-bialgebra-qubit-ctxt}};
\node (v7) at (546.0bp,241.0bp) [draw,draw=none] {Lem.~\ref{lem:ket0}};
\node (v11) at (1192.0bp,241.0bp) [draw,draw=none] {Lem.~\ref{lem:qubit-hopf}};
\node (v15) at (1081.0bp,389.0bp) [draw,draw=none] {Lem.~\ref{lem:Not-through-Z}};
\node (v16) at (1099.0bp,463.0bp) [draw,draw=none] {Lem.~\ref{lem:partition}};
\node (v17) at (800.0bp,537.0bp) [draw,draw=none] {Lem.~\ref{lem:ket-1}};
\node (v18) at (727.0bp,611.0bp) [draw,draw=none] {Lem.~\ref{lem:dot-product}};
\node (v19) at (610.0bp,537.0bp) [draw,draw=none] {Lem.~\ref{lem:braket-1-1}};
\node (v20) at (105.0bp,167.0bp) [draw,draw=none] {Lem.~\ref{lem:not-through-phase}};
\node (v21) at (533.0bp,611.0bp) [draw,draw=none] {Lem.~\ref{lem:mult-to-Z-effect}};
\node (v25) at (2422.0bp,833.0bp) [draw,draw=none] {Lem.~\ref{lem:W-bialgebra-ctxt}};
\node (v26) at (2413.0bp,759.0bp) [draw,draw=none] {Lem.~\ref{lem:W-assoc-gen}};
\node (v27) at (2090.0bp,759.0bp) [draw,draw=none] {Lem.~\ref{lem:embedding}};
\node (v28) at (3068.0bp,833.0bp) [draw,draw=none] {Lem.~\ref{lem:copy-0-qf}};
\node (v29) at (3087.0bp,759.0bp) [draw,draw=none] {Lem.~\ref{lem:ket-0-qf}};
\node (v30) at (1911.0bp,759.0bp) [draw,draw=none] {Lem.~\ref{lem:ket-k-dim-qf}};
\node (v31) at (1911.0bp,685.0bp) [draw,draw=none] {Lem.~\ref{lem:ket-k-forms}};
\node (v32) at (2644.0bp,833.0bp) [draw,draw=none] {Lem.~\ref{lem:NF-qf-other-dim-qf}};
\node (v33) at (2572.0bp,759.0bp) [draw,draw=none] {Lem.~\ref{lem:zcp-1}};
\node (v34) at (1911.0bp,833.0bp) [draw,draw=none] {Lem.~\ref{lem:Z-restrict-NF-qf}};
\node (v35) at (2165.0bp,833.0bp) [draw,draw=none] {Lem.~\ref{lem:Z-loop-removal-qf}};
\node (v36) at (2249.0bp,759.0bp) [draw,draw=none] {Lem.~\ref{lem:Z-id-qf}};
\node (v37) at (2871.0bp,833.0bp) [draw,draw=none] {Lem.~\ref{lem:discard-ket-qf}};
\node (v38) at (2887.0bp,759.0bp) [draw,draw=none] {Lem.~\ref{lem:scalar-1-qf}};
\node (v39) at (2720.0bp,759.0bp) [draw,draw=none] {Lem.~\ref{lem:Pascal-qf}};
\node (v40) at (2303.0bp,907.0bp) [draw,draw=none] {Prop.~\ref{prop:qufinite-derives-qudit}};
\node (v42) at (1988.0bp,907.0bp) [draw,draw=none] {Prop.~\ref{prop:qufinite-to-qudit}};
\node (v43) at (2644.0bp,907.0bp) [draw,draw=none] {Prop.~\ref{prop:qudit-NF-to-qufinite-NF}};
\node (v44) at (2145.0bp,981.0bp) [draw,draw=none] {Thm.~\ref{thm:completeness-finite}};
\draw [->] (v0) ..controls (1620.2bp,879.37bp) and (1610.7bp,869.15bp)  .. (v1);
\draw [->] (v0) ..controls (1668.1bp,877.86bp) and (1676.5bp,864.99bp)  .. (1681.0bp,852.0bp) .. controls (1726.9bp,719.73bp) and (1706.0bp,678.0bp)  .. (1706.0bp,538.0bp) .. controls (1706.0bp,538.0bp) and (1706.0bp,538.0bp)  .. (1706.0bp,462.0bp) .. controls (1706.0bp,416.52bp) and (1678.3bp,370.06bp)  .. (v2);
\draw [->] (v0) ..controls (1543.5bp,878.79bp) and (1506.2bp,866.53bp)  .. (1474.0bp,852.0bp) .. controls (1431.2bp,832.69bp) and (1385.4bp,804.01bp)  .. (v3);
\draw [->] (v1) ..controls (1603.1bp,758.33bp) and (1655.8bp,588.33bp)  .. (1649.0bp,444.0bp) .. controls (1647.4bp,409.67bp) and (1643.4bp,370.32bp)  .. (v2);
\draw [->] (v1) ..controls (1477.6bp,803.87bp) and (1432.7bp,791.31bp)  .. (v3);
\draw [->] (v1) ..controls (1641.2bp,770.43bp) and (1744.0bp,656.11bp)  .. (1744.0bp,538.0bp) .. controls (1744.0bp,538.0bp) and (1744.0bp,538.0bp)  .. (1744.0bp,166.0bp) .. controls (1744.0bp,77.18bp) and (1079.5bp,34.478bp)  .. (v5);
\draw [->] (v1) ..controls (1485.7bp,791.54bp) and (1418.0bp,748.61bp)  .. (1418.0bp,686.0bp) .. controls (1418.0bp,686.0bp) and (1418.0bp,686.0bp)  .. (1418.0bp,536.0bp) .. controls (1418.0bp,495.1bp) and (1420.4bp,447.82bp)  .. (v13);
\draw [->] (v1) ..controls (1572.5bp,785.94bp) and (1567.0bp,732.01bp)  .. (1567.0bp,686.0bp) .. controls (1567.0bp,686.0bp) and (1567.0bp,686.0bp)  .. (1567.0bp,610.0bp) .. controls (1567.0bp,569.29bp) and (1565.0bp,522.24bp)  .. (v24);
\draw [->] (v2) ..controls (1575.5bp,286.26bp) and (1550.3bp,273.5bp)  .. (1529.0bp,260.0bp) .. controls (1460.8bp,216.84bp) and (1459.2bp,182.02bp)  .. (1386.0bp,148.0bp) .. controls (1338.8bp,126.09bp) and (1282.8bp,113.0bp)  .. (v4);
\draw [->] (v2) ..controls (1583.7bp,262.57bp) and (1505.7bp,191.64bp)  .. (1429.0bp,148.0bp) .. controls (1245.0bp,43.287bp) and (987.15bp,23.607bp)  .. (v5);
\draw [->] (v2) ..controls (1555.2bp,296.92bp) and (1552.1bp,296.44bp)  .. (1549.0bp,296.0bp) .. controls (1357.1bp,268.66bp) and (1302.9bp,293.6bp)  .. (v12);
\draw [->] (v3) ..controls (1346.8bp,712.88bp) and (1372.0bp,660.33bp)  .. (1372.0bp,612.0bp) .. controls (1372.0bp,612.0bp) and (1372.0bp,612.0bp)  .. (1372.0bp,536.0bp) .. controls (1372.0bp,499.55bp) and (1305.3bp,245.0bp)  .. (1277.0bp,222.0bp) .. controls (1236.7bp,189.22bp) and (901.31bp,174.97bp)  .. (v6);
\draw [->] (v3) ..controls (1207.2bp,719.82bp) and (1119.3bp,682.95bp)  .. (1057.0bp,630.0bp) .. controls (949.33bp,538.48bp) and (965.46bp,477.72bp)  .. (874.0bp,370.0bp) .. controls (865.61bp,360.12bp) and (855.69bp,349.98bp)  .. (v8);
\draw [->] (v3) ..controls (1126.8bp,758.83bp) and (854.36bp,752.69bp)  .. (631.0bp,704.0bp) .. controls (537.58bp,683.64bp) and (491.44bp,702.41bp)  .. (429.0bp,630.0bp) .. controls (413.92bp,612.52bp) and (409.71bp,586.24bp)  .. (v9);
\draw [->] (v3) ..controls (1327.5bp,711.97bp) and (1334.0bp,658.06bp)  .. (1334.0bp,612.0bp) .. controls (1334.0bp,612.0bp) and (1334.0bp,612.0bp)  .. (1334.0bp,536.0bp) .. controls (1334.0bp,436.11bp) and (1219.8bp,369.93bp)  .. (v10);
\draw [->] (v3) ..controls (1395.8bp,716.34bp) and (1456.0bp,671.03bp)  .. (1456.0bp,612.0bp) .. controls (1456.0bp,612.0bp) and (1456.0bp,612.0bp)  .. (1456.0bp,536.0bp) .. controls (1456.0bp,494.25bp) and (1460.5bp,483.14bp)  .. (1475.0bp,444.0bp) .. controls (1481.4bp,426.84bp) and (1490.4bp,425.72bp)  .. (1495.0bp,408.0bp) .. controls (1499.2bp,391.65bp) and (1498.6bp,386.49bp)  .. (1495.0bp,370.0bp) .. controls (1493.0bp,361.01bp) and (1489.4bp,351.73bp)  .. (v14);
\draw [->] (v3) ..controls (1146.6bp,736.86bp) and (943.85bp,712.29bp)  .. (v22);
\draw [->] (v3) ..controls (1280.4bp,714.12bp) and (1237.0bp,667.11bp)  .. (v23);
\draw [->] (v4) ..controls (991.15bp,61.623bp) and (919.37bp,45.572bp)  .. (v5);
\draw [->] (v6) ..controls (810.49bp,143.76bp) and (931.59bp,124.72bp)  .. (v4);
\draw [->] (v7) ..controls (592.77bp,212.62bp) and (612.06bp,201.33bp)  .. (v6);
\draw [->] (v8) ..controls (707.77bp,284.66bp) and (653.16bp,270.26bp)  .. (v7);
\draw [->] (v9) ..controls (421.34bp,476.23bp) and (445.19bp,371.11bp)  .. (492.0bp,296.0bp) .. controls (498.51bp,285.56bp) and (507.4bp,275.59bp)  .. (v7);
\draw [->] (v10) ..controls (1128.5bp,286.79bp) and (1144.2bp,275.75bp)  .. (v11);
\draw [->] (v10) ..controls (1072.5bp,287.71bp) and (1066.0bp,277.96bp)  .. (v12);
\draw [->] (v13) ..controls (1405.2bp,359.7bp) and (1397.1bp,346.42bp)  .. (1391.0bp,334.0bp) .. controls (1351.2bp,253.11bp) and (1379.4bp,205.51bp)  .. (1310.0bp,148.0bp) .. controls (1288.5bp,130.22bp) and (1261.4bp,118.29bp)  .. (v4);
\draw [->] (v13) ..controls (1440.8bp,361.71bp) and (1447.0bp,351.96bp)  .. (v14);
\draw [->] (v15) ..controls (1015.6bp,361.53bp) and (998.69bp,349.76bp)  .. (988.0bp,334.0bp) .. controls (959.92bp,292.61bp) and (956.05bp,267.42bp)  .. (977.0bp,222.0bp) .. controls (998.61bp,175.14bp) and (1046.8bp,139.09bp)  .. (v4);
\draw [->] (v15) ..controls (978.63bp,359.87bp) and (932.92bp,347.31bp)  .. (v8);
\draw [->] (v15) ..controls (1084.2bp,362.13bp) and (1085.4bp,352.97bp)  .. (v10);
\draw [->] (v16) ..controls (1154.5bp,435.14bp) and (1170.2bp,423.21bp)  .. (1180.0bp,408.0bp) .. controls (1206.9bp,366.16bp) and (1203.1bp,305.51bp)  .. (v11);
\draw [->] (v16) ..controls (1223.5bp,434.42bp) and (1297.6bp,418.0bp)  .. (v13);
\draw [->] (v16) ..controls (1092.5bp,436.05bp) and (1090.2bp,426.77bp)  .. (v15);
\draw [->] (v17) ..controls (780.08bp,482.66bp) and (750.17bp,402.11bp)  .. (726.0bp,334.0bp) .. controls (708.9bp,285.8bp) and (689.72bp,229.45bp)  .. (v6);
\draw [->] (v17) ..controls (903.31bp,511.12bp) and (967.63bp,495.63bp)  .. (v16);
\draw [->] (v18) ..controls (593.53bp,580.58bp) and (533.74bp,565.26bp)  .. (525.0bp,556.0bp) .. controls (496.34bp,525.64bp) and (506.0bp,505.75bp)  .. (506.0bp,464.0bp) .. controls (506.0bp,464.0bp) and (506.0bp,464.0bp)  .. (506.0bp,388.0bp) .. controls (506.0bp,345.45bp) and (522.0bp,298.32bp)  .. (v7);
\draw [->] (v18) ..controls (830.58bp,588.29bp) and (852.48bp,575.76bp)  .. (866.0bp,556.0bp) .. controls (911.41bp,489.64bp) and (863.92bp,390.13bp)  .. (v8);
\draw [->] (v18) ..controls (598.6bp,580.84bp) and (536.38bp,566.7bp)  .. (v9);
\draw [->] (v18) ..controls (700.14bp,538.74bp) and (652.18bp,383.5bp)  .. (726.0bp,296.0bp) .. controls (757.31bp,258.89bp) and (892.72bp,247.28bp)  .. (v12);
\draw [->] (v18) ..controls (754.15bp,583.22bp) and (764.73bp,572.79bp)  .. (v17);
\draw [->] (v18) ..controls (682.51bp,582.62bp) and (664.16bp,571.33bp)  .. (v19);
\draw [->] (v20) ..controls (143.32bp,125.92bp) and (182.45bp,90.373bp)  .. (224.0bp,74.0bp) .. controls (316.68bp,37.475bp) and (609.87bp,25.2bp)  .. (v5);
\draw [->] (v21) ..controls (394.17bp,584.98bp) and (348.04bp,571.97bp)  .. (333.0bp,556.0bp) .. controls (304.38bp,525.6bp) and (314.0bp,505.75bp)  .. (314.0bp,464.0bp) .. controls (314.0bp,464.0bp) and (314.0bp,464.0bp)  .. (314.0bp,166.0bp) .. controls (314.0bp,77.918bp) and (607.98bp,38.732bp)  .. (v5);
\draw [->] (v21) ..controls (640.56bp,581.0bp) and (692.06bp,567.11bp)  .. (v17);
\draw [->] (v22) ..controls (562.47bp,681.04bp) and (447.06bp,669.49bp)  .. (355.0bp,630.0bp) .. controls (294.21bp,603.92bp) and (238.0bp,604.15bp)  .. (238.0bp,538.0bp) .. controls (238.0bp,538.0bp) and (238.0bp,538.0bp)  .. (238.0bp,166.0bp) .. controls (238.0bp,97.377bp) and (298.27bp,99.442bp)  .. (362.0bp,74.0bp) .. controls (429.38bp,47.104bp) and (635.23bp,30.662bp)  .. (v5);
\draw [->] (v22) ..controls (579.09bp,673.23bp) and (496.27bp,659.88bp)  .. (429.0bp,630.0bp) .. controls (379.77bp,608.13bp) and (300.77bp,604.34bp)  .. (280.0bp,556.0bp) .. controls (273.42bp,434.86bp) and (273.71bp,400.04bp)  .. (482.0bp,222.0bp) .. controls (512.78bp,196.45bp) and (555.07bp,182.99bp)  .. (v6);
\draw [->] (v22) ..controls (848.48bp,645.36bp) and (924.0bp,605.47bp)  .. (924.0bp,538.0bp) .. controls (924.0bp,538.0bp) and (924.0bp,538.0bp)  .. (924.0bp,462.0bp) .. controls (924.0bp,382.93bp) and (944.89bp,362.28bp)  .. (988.0bp,296.0bp) .. controls (994.71bp,285.68bp) and (1003.7bp,275.75bp)  .. (v12);
\draw [->] (v22) ..controls (855.33bp,673.0bp) and (903.59bp,659.29bp)  .. (938.0bp,630.0bp) .. controls (1006.7bp,571.55bp) and (965.7bp,516.72bp)  .. (1019.0bp,444.0bp) .. controls (1026.9bp,433.19bp) and (1037.4bp,423.04bp)  .. (v15);
\draw [->] (v22) ..controls (864.48bp,674.58bp) and (927.78bp,661.13bp)  .. (976.0bp,630.0bp) .. controls (1029.5bp,595.44bp) and (1068.0bp,528.92bp)  .. (v16);
\draw [->] (v22) ..controls (727.0bp,658.13bp) and (727.0bp,648.97bp)  .. (v18);
\draw [->] (v22) ..controls (515.78bp,679.72bp) and (293.31bp,668.79bp)  .. (224.0bp,630.0bp) .. controls (178.93bp,604.78bp) and (153.0bp,589.65bp)  .. (153.0bp,538.0bp) .. controls (153.0bp,538.0bp) and (153.0bp,538.0bp)  .. (153.0bp,314.0bp) .. controls (153.0bp,270.85bp) and (133.85bp,223.95bp)  .. (v20);
\draw [->] (v22) ..controls (651.64bp,656.03bp) and (618.87bp,643.87bp)  .. (v21);
\draw [->] (v23) ..controls (1244.9bp,567.1bp) and (1296.0bp,519.38bp)  .. (1296.0bp,464.0bp) .. controls (1296.0bp,464.0bp) and (1296.0bp,464.0bp)  .. (1296.0bp,240.0bp) .. controls (1296.0bp,180.08bp) and (1234.4bp,139.3bp)  .. (v4);
\draw [->] (v23) ..controls (1020.0bp,579.04bp) and (928.91bp,562.05bp)  .. (v17);
\draw [->] (v24) ..controls (1553.1bp,424.46bp) and (1543.5bp,393.24bp)  .. (1528.0bp,370.0bp) .. controls (1520.9bp,359.35bp) and (1511.3bp,349.32bp)  .. (v14);
\draw [->] (v25) ..controls (2418.8bp,806.13bp) and (2417.6bp,796.97bp)  .. (v26);
\draw [->] (v25) ..controls (2290.0bp,803.38bp) and (2228.7bp,790.09bp)  .. (v27);
\draw [->] (v28) ..controls (3074.8bp,806.05bp) and (3077.3bp,796.77bp)  .. (v29);
\draw [->] (v30) ..controls (1911.0bp,732.13bp) and (1911.0bp,722.97bp)  .. (v31);
\draw [->] (v32) ..controls (2535.6bp,814.65bp) and (2532.8bp,814.31bp)  .. (2530.0bp,814.0bp) .. controls (2299.6bp,788.01bp) and (2236.9bp,808.29bp)  .. (v30);
\draw [->] (v32) ..controls (2617.3bp,805.28bp) and (2606.9bp,794.94bp)  .. (v33);
\draw [->] (v34) ..controls (1980.3bp,804.12bp) and (2010.2bp,792.08bp)  .. (v27);
\draw [->] (v34) ..controls (1911.0bp,806.13bp) and (1911.0bp,796.97bp)  .. (v30);
\draw [->] (v35) ..controls (2137.0bp,805.13bp) and (2126.0bp,794.58bp)  .. (v27);
\draw [->] (v35) ..controls (2065.8bp,803.87bp) and (2021.4bp,791.31bp)  .. (v30);
\draw [->] (v35) ..controls (2196.4bp,805.05bp) and (2208.9bp,794.37bp)  .. (v36);
\draw [->] (v37) ..controls (2954.9bp,804.04bp) and (2991.8bp,791.73bp)  .. (v29);
\draw [->] (v37) ..controls (2947.0bp,805.7bp) and (2963.3bp,794.14bp)  .. (2973.0bp,778.0bp) .. controls (2981.7bp,763.53bp) and (2984.6bp,752.24bp)  .. (2973.0bp,740.0bp) .. controls (2939.8bp,705.02bp) and (2252.3bp,691.16bp)  .. (v31);
\draw [->] (v37) ..controls (2774.5bp,815.99bp) and (2766.6bp,814.92bp)  .. (2759.0bp,814.0bp) .. controls (2568.3bp,790.88bp) and (2515.5bp,809.16bp)  .. (v36);
\draw [->] (v37) ..controls (2876.7bp,806.13bp) and (2878.8bp,796.97bp)  .. (v38);
\draw [->] (v37) ..controls (2812.9bp,804.29bp) and (2788.2bp,792.51bp)  .. (v39);
\draw [->] (v40) ..controls (2348.2bp,878.62bp) and (2366.9bp,867.33bp)  .. (v25);
\draw [->] (v40) ..controls (2250.2bp,878.45bp) and (2228.1bp,866.91bp)  .. (v35);
\draw [->] (v40) ..controls (2514.8bp,883.02bp) and (2645.1bp,867.89bp)  .. (2759.0bp,852.0bp) .. controls (2763.4bp,851.39bp) and (2767.9bp,850.74bp)  .. (v37);
\draw [->] (v42) ..controls (2018.4bp,862.45bp) and (2050.7bp,816.31bp)  .. (v27);
\draw [->] (v42) ..controls (1854.0bp,878.9bp) and (1818.0bp,866.72bp)  .. (1807.0bp,852.0bp) .. controls (1796.9bp,838.46bp) and (1798.2bp,828.43bp)  .. (1807.0bp,814.0bp) .. controls (1815.1bp,800.66bp) and (1827.7bp,790.47bp)  .. (v30);
\draw [->] (v42) ..controls (1846.9bp,880.35bp) and (1807.0bp,867.87bp)  .. (1795.0bp,852.0bp) .. controls (1774.3bp,824.7bp) and (1816.2bp,744.73bp)  .. (1820.0bp,740.0bp) .. controls (1830.1bp,727.47bp) and (1844.0bp,717.18bp)  .. (v31);
\draw [->] (v42) ..controls (1959.3bp,879.13bp) and (1948.0bp,868.58bp)  .. (v34);
\draw [->] (v42) ..controls (2166.7bp,882.75bp) and (2263.8bp,866.7bp)  .. (2276.0bp,852.0bp) .. controls (2291.3bp,833.56bp) and (2280.9bp,806.77bp)  .. (v36);
\draw [->] (v43) ..controls (2817.8bp,877.68bp) and (2897.9bp,864.38bp)  .. (2969.0bp,852.0bp) .. controls (2974.4bp,851.05bp) and (2980.0bp,850.07bp)  .. (v28);
\draw [->] (v43) ..controls (2644.0bp,880.13bp) and (2644.0bp,870.97bp)  .. (v32);
\draw [->] (v44) ..controls (1955.3bp,952.69bp) and (1847.8bp,937.2bp)  .. (v0);
\draw [->] (v44) ..controls (2205.9bp,952.24bp) and (2231.9bp,940.4bp)  .. (v40);
\draw [->] (v44) ..controls (2084.5bp,952.24bp) and (2058.7bp,940.4bp)  .. (v42);
\draw [->] (v44) ..controls (2330.0bp,953.31bp) and (2429.5bp,938.95bp)  .. (v43);
\end{tikzpicture}
}}

\caption{Dependencies between lemmas, propositions and theorems. Arrows go from the consequences to their dependencies.}
\label{fig:lem-deps}
\end{figure}


\begin{proof}[Proof of \Cref{lem:scalar-1}]
\def\fig{empty-diag-is-one}
\begin{align*}
\zw\vdash~~\scalebox{\scaling}{
}
&\eq{\nameref{ax:one}}\scalebox{\scaling}{
}
\eq{\nameref{ax:copy}}1
\qedhere
\end{align*}
\end{proof}

\begin{proof}[Proof of \Cref{lem:Z-id}]
\def\fig{Z-id-prf}
\begin{align*}
\zw\vdash~\scalebox{\scaling}{
}
&\eq{\nameref{ax:Z-spider}}\scalebox{\scaling}{
}
\eq{\nameref{ax:W-unit}}\scalebox{\scaling}{
}
\eq{\nameref{ax:bialgebra-W}}\scalebox{\scaling}{
}
\eq{\nameref{ax:W-unit}}\scalebox{\scaling}{
}
\qedhere
\end{align*}
\end{proof}

\begin{proof}[Proof of \Cref{lem:W-bialgebra-gen}]
\def\fig{W-bialgebra-old}
\begin{align*}
\zw\vdash~\scalebox{\scaling}{
}
&\eq{\nameref{ax:bialgebra-W}\\\nameref{ax:Z-spider}}\scalebox{\scaling}{
}
\eq{\nameref{ax:Z-spider}\\\nameref{ax:bialgebra-Z-W}}\scalebox{\scaling}{
}
\eq{\nameref{ax:Z-spider}\\\ref{lem:Z-id}}\scalebox{\scaling}{
}
\qedhere
\end{align*}
\end{proof}

\begin{proof}[Proof of \Cref{lem:vacuum}]
\def\fig{W-counit}
\begin{align*}
\zw\vdash~\scalebox{\scaling}{
}
&\eq{\nameref{ax:one}\\\nameref{ax:W-unit}\\\nameref{ax:W-assoc}}\scalebox{\scaling}{
}
\eq{\ref{lem:W-bialgebra-gen}}\scalebox{\scaling}{
}
\eq{\nameref{ax:one}\\\nameref{ax:W-unit}}\scalebox{\scaling}{
}
\qedhere
\end{align*}
\end{proof}

Notice that in the case of $0$ inputs, we get $
\scalebox{\scaling}{
	\begin{tikzpicture}
	\begin{pgfonlayer}{nodelayer}
		\node [style=ket] (0) at (0, 0.25) {};
		\node [style=ket] (1) at (0, -0.25) {};
	\end{pgfonlayer}
	\begin{pgfonlayer}{edgelayer}
		\draw (0) to (1);
	\end{pgfonlayer}
\end{tikzpicture}
}\eq{}
\scalebox{\scaling}{
	\input{./figures/empty-diag.tikz}%
}$.

\begin{proof}[Proof of \Cref{lem:braket-1-1}]
\def\fig{braket-1-aux}
\begin{align*}
\zw\vdash~\scalebox{\scaling}{
}
&\eq{\nameref{ax:copy}}\scalebox{\scaling}{
}
\eq{\nameref{ax:Z-spider}\\\nameref{ax:W-unit}\\\nameref{ax:W-assoc}}\scalebox{\scaling}{
}
\eq{\nameref{ax:loop}}\scalebox{\scaling}{
}
\eq{\nameref{ax:W-assoc}\\\nameref{ax:W-unit}}\scalebox{\scaling}{
}
\eq{\nameref{ax:copy}}
\scalebox{\scaling}{
	\input{./figures/empty-diag.tikz}%
}
\qedhere
\end{align*}
\end{proof}

\begin{proof}[Proof of \Cref{lem:ket0}]
\def\fig{ket-0-prf}
\begin{align*}
\zw\vdash~\scalebox{\scaling}{
}
&\eq{\ref{lem:vacuum}}\scalebox{\scaling}{
}
\eq{\nameref{ax:hopf}}\scalebox{\scaling}{
}
\eq{\nameref{ax:bialgebra-Z-W}}\scalebox{\scaling}{
}
\eq{\nameref{ax:W-assoc}}\scalebox{\scaling}{
}
\qedhere
\end{align*}
\end{proof}

\begin{proof}[Proof of \Cref{lem:copy}]
When there are 0 outputs:
\def\fig{copy-k-0}
\begin{align*}
\zw\vdash~\scalebox{\scaling}{
}
\eq{\nameref{ax:W-assoc}}\scalebox{\scaling}{
}
\eq{\nameref{ax:Z-spider}\\\nameref{ax:bialgebra-Z-W}}\scalebox{\scaling}{
}
\eq{\nameref{ax:copy}}r^k\cdot\scalebox{\scaling}{
}
\eq{\nameref{ax:W-assoc}}r^k\cdot\scalebox{\scaling}{
}
\eq{\nameref{ax:one}}r^k
\end{align*}
and when there is at least one output:
\def\fig{copy-k}
\begin{align*}
\zw\vdash~\scalebox{\scaling}{
}
&\eq{\nameref{ax:W-assoc}}\scalebox{\scaling}{
}
\eq{\nameref{ax:bialgebra-Z-W}}\scalebox{\scaling}{
}
\eq{\nameref{ax:copy}}r^k\cdot\scalebox{\scaling}{
}
\eq{\nameref{ax:W-assoc}}r^k\cdot\scalebox{\scaling}{
}
\qedhere
\end{align*}
\end{proof}

\begin{proof}[Proof of \Cref{lem:sum-gen}]
\def\fig{sum-gen}
\begin{align*}
\zw\vdash~\scalebox{\scaling}{
}
&\eq{\nameref{ax:Z-spider}\\\nameref{ax:bialgebra-Z-W}}\scalebox{\scaling}{
}
\eq{\nameref{ax:sum}}\scalebox{\scaling}{
}
\eq{\nameref{ax:Z-spider}}\scalebox{\scaling}{
}
\qedhere
\end{align*}
\end{proof}

\begin{proof}[Proof of \Cref{lem:skewed-hopf}]
\def\fig{Z-W-loop-prf}
\begin{align*}
\zw\vdash~\scalebox{\scaling}{
}
&\eq{\nameref{ax:bialgebra-Z-W}}\scalebox{\scaling}{
}
\eq{\nameref{ax:bialgebra-Z-W}}...
\eq{\nameref{ax:bialgebra-Z-W}}\scalebox{\scaling}{
}
\eq{\nameref{ax:hopf}\\\ref{lem:ket0}}\scalebox{\scaling}{
}
\eq[~]{\nameref{ax:bialgebra-Z-W}\\\nameref{ax:W-assoc}\\\nameref{ax:W-unit}}\scalebox{\scaling}{
}
\qedhere
\end{align*}
\end{proof}

\begin{proof}[Proof of \Cref{lem:ket-sum}]
First, notice that if $k+\ell<d$, the result is a mere use of the definition of $
\scalebox{\scaling}{
}$, together with a use of Equation \nameref{ax:W-assoc}.
Otherwise, if $K\geq d$:
\def\fig{ket-sum}
\begin{align*}
\zw\vdash~\scalebox{\scaling}{
	\input{./figures/\fig/\fig_05.tikz}%
}
\eq[]{\nameref{ax:copy}}\scalebox{\scaling}{
	\input{./figures/\fig/\fig_06.tikz}%
}
\eq[]{\nameref{ax:W-assoc}\\\nameref{ax:Z-spider}}\scalebox{\scaling}{
	\input{./figures/\fig/\fig_07.tikz}%
}
\eq[]{\nameref{ax:hopf}\\\ref{lem:ket0}}\scalebox{\scaling}{
	\input{./figures/\fig/\fig_08.tikz}%
}
\eq[]{\nameref{ax:bialgebra-Z-W}}\scalebox{\scaling}{
	\input{./figures/\fig/\fig_09.tikz}%
}
\eq[]{\ref{lem:ket0}\\\nameref{ax:W-unit}}\scalebox{\scaling}{
	\input{./figures/\fig/\fig_10.tikz}%
}
\eq[]{\nameref{ax:copy}}0\cdot\scalebox{\scaling}{
	\input{./figures/\fig/\fig_11.tikz}%
}
\end{align*}
Hence:
\begin{align*}
\zw\vdash~\scalebox{\scaling}{
	\input{./figures/\fig/\fig_12.tikz}%
}
&\eq{\nameref{ax:W-assoc}}\scalebox{\scaling}{
	\input{./figures/\fig/\fig_13.tikz}%
}
\eq{}
\begin{cases}
\scalebox{\scaling}{
	\input{./figures/\fig/\fig_14.tikz}%
} & \text{ if }k+\ell<d\\[1.5em]
0\cdot\scalebox{\scaling}{
	\input{./figures/\fig/\fig_11.tikz}%
} & \text{ if }k+\ell\geq d
\end{cases}
\qedhere
\end{align*}
\end{proof}

\begin{proof}[Proof of \Cref{lem:qubit-hopf}]
\def\fig{qubit-Hopf-2}
\begin{align*}
\zw\vdash~\scalebox{\scaling}{
}
&\eq{\nameref{ax:one}}\scalebox{\scaling}{
}
\eq{\nameref{ax:copy}}\scalebox{\scaling}{
}
\eq{\nameref{ax:bialgebra-Z-W}\\\nameref{ax:W-assoc}\\\nameref{ax:Z-spider}}\scalebox{\scaling}{
}\\
&\eq{\nameref{ax:W-assoc}\\\nameref{ax:Z-spider}\\\nameref{ax:hopf}\\\nameref{ax:bialgebra-Z-W}}\scalebox{\scaling}{
}
\eq{\nameref{ax:W-assoc}\\\nameref{ax:W-unit}}\scalebox{\scaling}{
	\input{./figures/\fig/\fig_05.tikz}%
}
\eq{\nameref{ax:copy}}\scalebox{\scaling}{
	\input{./figures/\fig/\fig_06.tikz}%
}
\eq{\nameref{ax:one}}\scalebox{\scaling}{
	\input{./figures/\fig/\fig_07.tikz}%
}
\qedhere
\end{align*}
\end{proof}

\begin{proof}[Proof of \Cref{lem:cup-distrib-aux}]
\def\fig{NF-cup-distrib-lemma}
\begin{align*}
\zw\vdash~\scalebox{\scaling}{
}
&\eq{\nameref{ax:bialgebra-Z-W}}\scalebox{\scaling}{
}
\eq{\ref{lem:qubit-hopf}\\\ref{lem:copy}}\scalebox{\scaling}{
}
\eq{\nameref{ax:W-unit}}\scalebox{\scaling}{
}
\qedhere
\end{align*}
\end{proof}

\begin{proof}[Proof of \Cref{lem:Pascal}]
By induction on $n$. The base case $n=1$ is direct. Suppose we have proven the case $n$, then we can prove the case $n+1$ as follows:
\def\fig{lemma-Pascal-multinomial-prf}
\begin{align*}
\zw\vdash&~\scalebox{\scaling}{
}
\eq{\nameref{ax:bialgebra-Z-W}}\scalebox{\scaling}{
}
\eq{\ref{lem:W-bialgebra-gen}}\scalebox{\scaling}{
}\\
&\eq{\text{induction}\\\text{hypothesis}}\scalebox{\scaling}{
}\\
&\eq{\nameref{ax:bialgebra-Z-W}\\\nameref{ax:W-assoc}\\\nameref{ax:Z-spider}}\scalebox{\scaling}{
}
\eq{\ref{lem:sum-gen}}\scalebox{\scaling}{
	\input{./figures/\fig/\fig_05.tikz}%
}
\end{align*}
where $\sum\limits_j i_j = \sum\limits_j k_j = \sum\limits_j \ell_j = n$ for all $j$, and $\sum\limits_j p_j = n+1$. The last equation is obtained by summing together all white nodes with the same connections. The nodes with connections $(p_1,...,p_m)$ are exactly the ones carrying parameters $\binom{n}{p_1-1,...,p_m},...,\binom{n}{p_1,...,p_m-1}$, which sum to $\binom{n+1}{p_1,...,p_m}$ thanks to the generalised Pascal formula. Assuming $\binom{n}{...,-1,...}=0$, the formula still holds when some of the $p_i$ are zero.
\end{proof}

\begin{proof}[Proof of \Cref{lem:Not-through-Z}]
\def\fig{Not-through-Z-prf}
\begin{align*}
\zw\vdash~\scalebox{\scaling}{
}
&\eq{\nameref{ax:copy}}\scalebox{\scaling}{
}
\eq{\nameref{ax:bialgebra-Z-W}}\scalebox{\scaling}{
}
\eq{\ref{lem:W-bialgebra-gen}}\scalebox{\scaling}{
}\\
&
\eq{\ref{lem:cup-distrib-aux}}\scalebox{\scaling}{
	\input{./figures/\fig/\fig_07.tikz}%
}
\eq{\ref{lem:skewed-hopf}\\\nameref{ax:W-assoc}}\scalebox{\scaling}{
	\input{./figures/\fig/\fig_08.tikz}%
}
\eq{\nameref{ax:W-unit}\\\nameref{ax:loop}}\scalebox{\scaling}{
	\input{./figures/\fig/\fig_10.tikz}%
}
\qedhere
\end{align*}
\end{proof}

\begin{proof}[Proof of \Cref{lem:partition}]
\def\fig{partition-prf}
\begin{align*}
\zw\vdash~\scalebox{\scaling}{
}
&\eq{\nameref{ax:W-assoc}}\scalebox{\scaling}{
}
\eq{\nameref{ax:bialgebra-W}}\scalebox{\scaling}{
}
\eq{\nameref{ax:copy}\\\nameref{ax:one}}\scalebox{\scaling}{
}\\
&\eq{\ref{lem:Pascal}}\scalebox{\scaling}{
}
\eq{\ref{lem:qubit-hopf}\\\nameref{ax:W-assoc}}\scalebox{\scaling}{
	\input{./figures/\fig/\fig_05.tikz}%
}
\eq{\ref{lem:Not-through-Z}}\scalebox{\scaling}{
	\input{./figures/\fig/\fig_06.tikz}%
}
\qedhere
\end{align*}
\end{proof}

\begin{proof}[Proof of \Cref{lem:ket-1}]
\def\fig{ket-1}
\begin{align*}
\zw\vdash~k!\cdot\scalebox{\scaling}{
}
&\eq{\nameref{ax:copy}}\scalebox{\scaling}{
}
\eq{\nameref{ax:W-unit}\\\nameref{ax:W-assoc}\\\nameref{ax:bialgebra-Z-W}\\\ref{lem:vacuum}}\scalebox{\scaling}{
}
\eq{\ref{lem:partition}}\scalebox{\scaling}{
}
\eq{\nameref{ax:W-assoc}}\scalebox{\scaling}{
}
\qedhere
\end{align*}
\end{proof}

\begin{proof}[Proof of \Cref{lem:dot-product}]
First:
\def\fig{braket-k}
\begin{align*}
\zw\vdash~\scalebox{\scaling}{
}
\eq{\ref{lem:ket-sum}}\scalebox{\scaling}{
}
\eq{\nameref{ax:copy}}\scalebox{\scaling}{
}
\eq{\ref{lem:ket-1}}k!\cdot\scalebox{\scaling}{
}
\eq{\ref{lem:braket-1-1}}k!
\end{align*}
Then, supposing w.l.o.g.~that $k>\ell$:
\def\fig{braket-k-l}
\begin{align*}
\zw\vdash~\scalebox{\scaling}{
}
&\eq{\ref{lem:ket-sum}}\scalebox{\scaling}{
}
\eq{\ref{lem:copy}}\scalebox{\scaling}{
}
\eq{\ref{lem:skewed-hopf}}\scalebox{\scaling}{
}
\eq{\nameref{ax:W-unit}\\\nameref{ax:bialgebra-Z-W}}\scalebox{\scaling}{
}\\
&\eq{\ref{lem:ket-1}}(k-l)!\cdot\scalebox{\scaling}{
	\input{./figures/\fig/\fig_05.tikz}%
}
\eq{\ref{lem:ket0}}(k-l)!\cdot\scalebox{\scaling}{
	\input{./figures/\fig/\fig_06.tikz}%
}
\eq{\nameref{ax:copy}}0
\qedhere
\end{align*}
\end{proof}

\begin{proof}[Proof of \Cref{lem:not-through-phase}]
\def\fig{Not-through-phase}
\begin{align*}
\zw\vdash~\scalebox{\scaling}{
}
&\eq{\ref{lem:Z-id}\\\nameref{ax:Z-spider}}\scalebox{\scaling}{
}
\eq{\nameref{ax:bialgebra-Z-W}}\scalebox{\scaling}{
}
\eq{\nameref{ax:copy}}r\cdot\scalebox{\scaling}{
}
\qedhere
\end{align*}
\end{proof}

\begin{proof}[Proof of \Cref{lem:mult-to-Z-effect}]
First notice that the equation derives from \nameref{ax:bialgebra-Z-W} when $k=0$. Then, for $0<k<d$:
\def\fig{multiplier-to-Z-effect}
\begin{align*}
\zw\vdash~\scalebox{\scaling}{
}
&\eq{\ref{lem:ket-1}}\frac1{k!}\cdot\scalebox{\scaling}{
}
\eq{\nameref{ax:bialgebra-Z-W}\\\nameref{ax:Z-spider}\\\nameref{ax:W-assoc}}\frac1{k!}\cdot\scalebox{\scaling}{
}
\eq{\nameref{ax:W-assoc}\\`\nameref{ax:Z-spider}\\\nameref{ax:bialgebra-Z-W}}\frac1{k!}\cdot\scalebox{\scaling}{
}\\
&\eq{\nameref{ax:Z-spider}\\\ref{lem:Z-id}\\\nameref{ax:W-assoc}}\frac1{k!}\cdot\scalebox{\scaling}{
}
\eq{\nameref{ax:bialgebra-Z-W}\\\nameref{ax:Z-spider}\\\nameref{ax:W-assoc}}\frac1{k!}\cdot\scalebox{\scaling}{
	\input{./figures/\fig/\fig_05.tikz}%
}
\eq{\ref{lem:ket-1}}\scalebox{\scaling}{
	\input{./figures/\fig/\fig_06.tikz}%
}
\qedhere
\end{align*}
\end{proof}

\begin{proof}[Proof of \Cref{lem:Z-loop-on-not}]
If $k=0$, the equation derives from \Cref{lem:copy} and \Cref{lem:vacuum}. Else, if $k<d$:
\def\fig{Z-loop-on-P}
\begin{align*}
\zw\vdash~&\scalebox{\scaling}{
}
\eq[~]{\nameref{ax:bialgebra-Z-W}\\\nameref{ax:Z-spider}}\scalebox{\scaling}{
}
\eq[~]{\ref{lem:Not-through-Z}}\scalebox{\scaling}{
}
\eq[~]{\nameref{ax:Z-spider}\\\ref{lem:partition}}\scalebox{\scaling}{
}\\
&\eq[~]{\ref{lem:Z-id}\\\nameref{ax:bialgebra-Z-W}\\\nameref{ax:Z-spider}\\\nameref{ax:W-assoc}}\scalebox{\scaling}{
}
\eq[~]{\nameref{ax:bialgebra-Z-W}\\\nameref{ax:W-assoc}}\scalebox{\scaling}{
	\input{./figures/\fig/\fig_05.tikz}%
}
\eq[~]{\ref{lem:copy}\\\ref{lem:dot-product}} k!\scalebox{\scaling}{
	\input{./figures/\fig/\fig_06.tikz}%
}
\eq[~]{\nameref{ax:Z-spider}\\\ref{lem:partition}} k!\scalebox{\scaling}{
	\input{./figures/\fig/\fig_07.tikz}%
}\\
&\eq[~]{\ref{lem:mult-to-Z-effect}\\\ref{lem:not-through-phase}}\scalebox{\scaling}{
	\input{./figures/\fig/\fig_08.tikz}%
}
\eq[~]{\ref{lem:Not-through-Z}}\scalebox{\scaling}{
	\input{./figures/\fig/\fig_09.tikz}%
}
\eq[~]{\nameref{ax:Z-spider}}\scalebox{\scaling}{
	\input{./figures/\fig/\fig_10.tikz}%
}
\qedhere
\end{align*}
\end{proof}

\begin{proof}[Proof of \Cref{lem:W-bialgebra-qubit-ctxt}]
To apply \nameref{ax:bialgebra-W}, we would need to have $k_i$ connections between the leftmost W-node and the $i$-th W-node of the bottom of the bialgebra, but so far we only have one. We can get more, in the following way:
\def\fig{W-bialgebra-context}
\begin{align*}
\zw\vdash~\scalebox{\scaling}{
}
\eq{\ref{lem:ket-1}}\frac1{(d-1)!}\scalebox{\scaling}{
}
\eq{\nameref{ax:bialgebra-Z-W}\\\nameref{ax:Z-spider}\\\nameref{ax:W-assoc}}\frac1{(d-1)!}\scalebox{\scaling}{
}
\end{align*}
as $k_i<d$, we now have enough connections to apply \nameref{ax:bialgebra-W}. Doing so and undoing the transformations on the left part, we get:
\begin{align*}
\zw\vdash~\frac1{(d-1)!}\scalebox{\scaling}{
}
&\eq[]{\ref{lem:W-bialgebra-gen}}\frac1{(d-1)!}\scalebox{\scaling}{
}
\eq[]{}\scalebox{\scaling}{
}
\qedhere
\end{align*}
\end{proof}

\begin{proof}[Proof of \Cref{prop:NF-preserves-semantics}]
First notice that:
\begin{align*}
\interp{
\scalebox{\scaling}{
	\input{./figures/qubit-W.tikz}%
}} 
&= \ket{1,0,...,0}+\ket{0,1,...,0}+\ldots+\ket{0,0,...,1}\\
&=\interp{
\scalebox{\scaling}{
	\begin{tikzpicture}
	\begin{pgfonlayer}{nodelayer}
		\node [style=ket] (0) at (0.5, 0.25) {$1$};
		\node [style=none] (1) at (0.5, -0.25) {};
	\end{pgfonlayer}
	\begin{pgfonlayer}{edgelayer}
		\draw (1.center) to (0);
	\end{pgfonlayer}
\end{tikzpicture}
}
\scalebox{\scaling}{
	\begin{tikzpicture}
	\begin{pgfonlayer}{nodelayer}
		\node [style=dot] (0) at (0.5, 0.25) {};
		\node [style=none] (1) at (0.5, -0.25) {};
	\end{pgfonlayer}
	\begin{pgfonlayer}{edgelayer}
		\draw (1.center) to (0);
	\end{pgfonlayer}
\end{tikzpicture}
}...
\scalebox{\scaling}{
}}
+\interp{
\scalebox{\scaling}{
}
\scalebox{\scaling}{
}...
\scalebox{\scaling}{
}}
+ \ldots
+\interp{
\scalebox{\scaling}{
}
\scalebox{\scaling}{
}...
\scalebox{\scaling}{
}}
\end{align*}
In a given term, each $\ket0$ will merely cancel the Z-spider connected to it, and bring no contribution to the resulting state. Every contribution is brought by the $\ket1$. The number of parallel edges dictates to which basis state the contribution will go to. Consider for instance the $i$th term:
\def\fig{NF-explained}
\begin{align*}
\interp{\scalebox{\scaling}{
}}
&=\interp{\scalebox{\scaling}{
}}
=\interp{\scalebox{\scaling}{
}}\\
&=\frac{r_i}{\sqrt{x_1^{i}!...x_n^{i}!}}\interp{\scalebox{\scaling}{
}}
=\frac{r_i}{\sqrt{x_1^{i}!...x_n^{i}!}}\interp{\scalebox{\scaling}{
}}\\
&=r_i\ket{x^i_1,...,x^i_n}
\end{align*}
We do indeed recover the $i$th term in $\ket\psi$. We have something similar for all terms, so in the end $\interp{\mathcal N(\ket\psi)} = \ket\psi$.
\end{proof}

\begin{proof}[Proof of \Cref{prop:NF-tensor}]
\def\scaling{0.8}
Putting two diagrams in normal form side by side, we get:
\def\fig{NF-tensor}
\begin{align*}
\zw\vdash~~&\scalebox{\scaling}{
}
\eq{\ref{lem:copy}}\scalebox{\scaling}{
}
\eq{\nameref{ax:bialgebra-Z-W}}\scalebox{\scaling}{
}\\
&\eq{\ref{lem:W-bialgebra-gen}}\scalebox{\scaling}{
}
\eq{\nameref{ax:bialgebra-Z-W}\\\nameref{ax:W-unit}}\scalebox{\scaling}{
}\\
&\eq{\nameref{ax:bialgebra-Z-W}\\\nameref{ax:Z-spider}}\scalebox{\scaling}{
	\input{./figures/\fig/\fig_05.tikz}%
}
\eq{\nameref{ax:Z-spider}\\\nameref{ax:bialgebra-Z-W}}\scalebox{\scaling}{
	\input{./figures/\fig/\fig_06.tikz}%
}\\
&\eq{\nameref{ax:W-unit}\\\nameref{ax:bialgebra-Z-W}}\scalebox{\scaling}{
	\input{./figures/\fig/\fig_07.tikz}%
}
\eq{\nameref{ax:Z-spider}\\\ref{lem:Z-id}}\scalebox{\scaling}{
	\input{./figures/\fig/\fig_08.tikz}%
}
\end{align*}
The diagram obtained at the end of this process is indeed in normal form.
\end{proof}

\begin{proof}[Proof of \Cref{prop:NF-cup}]
First, we can show that the cup can ``distribute'' to each white node in the normal form:
\def\fig{NF-cup-distrib}
{\def\scaling{0.8}
\begin{align*}
\zw\vdash~\scalebox{\scaling}{
}
&\eq{\ref{lem:W-bialgebra-qubit-ctxt}}\scalebox{\scaling}{
}
\eq{\ref{lem:cup-distrib-aux}}\scalebox{\scaling}{
}\\
&\eq{\ref{lem:W-bialgebra-qubit-ctxt}}\scalebox{\scaling}{
}
\eq{}\scalebox{\scaling}{
}
\end{align*}}
We can then treat each white node independently. If a white node had $k$ connections with the first output, and $\ell$ with the second one, with $k\neq \ell$, then it can simply be removed as:
\def\fig{NF-cup-k-l}
\begin{align*}
\zw\vdash~~
\scalebox{\scaling}{
}
\eq{\nameref{ax:Z-spider}\\\nameref{ax:W-unit}\\\nameref{ax:bialgebra-Z-W}}\scalebox{\scaling}{
}
\eq{\ref{lem:skewed-hopf}}\scalebox{\scaling}{
}
\eq{\ref{lem:vacuum}\\\nameref{ax:bialgebra-Z-W}\\\ref{lem:ket-sum}}\scalebox{\scaling}{
}
\end{align*}
This $\scalebox{\scaling}{
}$ then copies through the white node (with \nameref{ax:bialgebra-Z-W}) and gets absorbed by the top and bottom W-nodes of the normal form (by \nameref{ax:W-unit}). If $k=\ell$, however the "cup gadget" disappears and the white node gets a new parameter, as:
\def\fig{NF-cup-k}
\begin{align*}
\zw\vdash~~
\scalebox{\scaling}{
}
\eq{\nameref{ax:Z-spider}\\\nameref{ax:bialgebra-Z-W}}\scalebox{\scaling}{
}
\eq{\ref{lem:Z-loop-on-not}}\scalebox{\scaling}{
}
\end{align*}
After the last two simplifications, it is possible that two white nodes end up with exactly the same connections. It is then possible to merge them (performing the sum of their parameters), using \Cref{lem:sum-gen}. 
After doing all these simplifications, the diagram is again in normal form.
\end{proof}

\begin{proof}[Proof of \Cref{lem:W21-to-NF}]
\def\fig{app-W21-to-NF}
By simply using \nameref{ax:W-assoc}, we get:
{\def\scaling{0.8}
\begin{align*}
\zw\vdash~~
\scalebox{\scaling}{
}
\eq{\nameref{ax:W-assoc}}\scalebox{\scaling}{
}
\end{align*}}
then, if a white node ends up with more than $d-1$ connections with the resulting W-node, we can remove it by \nameref{ax:hopf} (again the \def\fig{NF-cup-k-l}$\scalebox{\scaling}{
}$ generated by the rule copies through the white node and gets absorbed by the top and bottom W-nodes of the normal form, and white nodes with the same connections can be merged with \Cref{lem:sum-gen}). If two nodes end up with the same connections, they can be merged using \Cref{lem:sum-gen}. We end up with a diagram in normal form.
\end{proof}

\begin{proof}[Proof of \Cref{prop:NF-generators}]
We deal with each generator one at a time:\\[1em]
\labelitemi~~ \def\fig{Z-flex}$\scalebox{\scaling}{
}$: 
First, if $n=1$:
\def\fig{Z-NF-prf}
\begin{align*}
\zw\vdash~&~~\scalebox{\scaling}{
}
\eq{\nameref{ax:decomp-Z}}\frac1{(d-1)!}\cdot\scalebox{\scaling}{
}
\eq{\nameref{ax:copy}}\frac1{(d-1)!}\cdot\scalebox{\scaling}{
}
\eq{\ref{lem:Pascal}}\frac1{(d-1)!}\cdot\scalebox{\scaling}{
}\\
&\eq{\nameref{ax:bialgebra-Z-W}\\\nameref{ax:Z-spider}}\frac1{(d-1)!}\cdot\scalebox{\scaling}{
}
\eq{\textit{Prop}.~\ref{prop:NF-cup}}\frac1{(d-1)!}\cdot\scalebox{\scaling}{
	\input{./figures/\fig/\fig_05.tikz}%
}\\
&\eq{\nameref{ax:copy}}\scalebox{\scaling}{
	\input{./figures/\fig/\fig_06.tikz}%
}
\eq{\nameref{ax:bialgebra-Z-W}\\\nameref{ax:Z-spider}}\scalebox{\scaling}{
	\input{./figures/\fig/\fig_07.tikz}%
}
\end{align*}
Then, if $n>1$:
\def\fig{NF-Z}
\begin{align*}
\zw\vdash~~
\scalebox{\scaling}{
}
\eq{\nameref{ax:Z-spider}}\scalebox{\scaling}{
}
\eq{}\scalebox{\scaling}{
}
\eq{\nameref{ax:bialgebra-Z-W}\\\nameref{ax:Z-spider}}\scalebox{\scaling}{
}
\end{align*}
If $n=0$, then $\scalebox{\scaling}{
}$ can be obtained by compositions: $\scalebox{\scaling}{
}\eq{}\scalebox{\scaling}{
	\input{./figures/\fig/\fig_05.tikz}%
}$. It can hence be turned in normal form by Propositions \ref{prop:NF-tensor} and \ref{prop:NF-cup}.\\[1em]
\labelitemi~~ $
\scalebox{\scaling}{
	\input{./figures/W-bent.tikz}%
}$: First, if $n=0$:
\def\fig{NF-ket-0}
\begin{align*}
\zw\vdash~~
\scalebox{\scaling}{
}
\eq{\nameref{ax:one}\\\nameref{ax:W-unit}}\scalebox{\scaling}{
}
\end{align*}
Then, when $n=1$:
\def\fig{NF-W-1}
\begin{align*}
\zw\vdash~\scalebox{\scaling}{
}
\eq{\nameref{ax:W-unit}\\\ref{lem:Z-id}}\scalebox{\scaling}{
}
\end{align*}
which can be put in normal form thanks to the previous item. For the general case, thanks to:
\def\fig{NF-W}
\begin{align*}
\zw\vdash~~
\scalebox{\scaling}{
}
\eq{}\scalebox{\scaling}{
}
\end{align*}
the $n$-ary W-node can be seen as applying $
\scalebox{\scaling}{
	\input{./figures/W21.tikz}%
}$ to pairs of outputs (until only one is left) of the normal form of $
\scalebox{\scaling}{
	\begin{tikzpicture}
	\begin{pgfonlayer}{nodelayer}
		\node [style=none] (0) at (-0.25, -0.125) {};
		\node [style=none] (1) at (0.25, -0.125) {};
		\node [style=white dot] (2) at (0, 0.125) {};
	\end{pgfonlayer}
	\begin{pgfonlayer}{edgelayer}
		\draw [in=90, out=-180, looseness=1.25] (2) to (0.center);
		\draw [in=90, out=0, looseness=1.25] (2) to (1.center);
	\end{pgfonlayer}
\end{tikzpicture}
}...
\scalebox{\scaling}{
}$ (notice that the diagram obtained by swapping 2 outputs of a normal form is directly in normal form). Thanks to \Cref{lem:W21-to-NF}, the obtained diagram can be put in normal form.\\[1em]
\labelitemi~~ \def\fig{NF-ket-1}$\scalebox{\scaling}{
}$: we simply have
\begin{align*}
\zw\vdash~~
\scalebox{\scaling}{
}
\eq{\ref{lem:Z-id}}\scalebox{\scaling}{
}
\end{align*}
By \Cref{prop:NF-tensor} and \Cref{lem:W21-to-NF}, all the $
\scalebox{\scaling}{
}$ can be put in normal form as well.
\end{proof}

\section{Lemmas and Proofs for Minimality of Qudit}
\label{sec:proof_minimality}
In this section, we detail the proof of \Cref{thm:minimality}. For each equation $(eq)$, we define a non-standard semantics $\interp{\dots}_{(eq)}$ for which every equation is sound except $(eq)$. Contrary to $\interp{\dots}$, some of those semantics will be non-compositional, meaning that it is not enough to check that the equations are sound ``alone'', one needs to check explicitly that the equations are sound when applied to a fragment of a bigger diagram. 
Before that, we recall the definition of effective Z-path,  clarify what we mean by ``up to a scalar factor'', and prove a lemma that will be relevant to handle \nameref{ax:bialgebra-W}.

\begin{definition}[Sole effective output of a W-node]
	Let $D$ be a diagram with a collections of $n$ distinguished $W$-node. Let's call `$a_1$',$\dots$,`$a_n$' one output of each, as shown in \Cref{eq:single-out-W-app} below.
	We say that wires `$a_1$',$\dots$,`$a_n$' are \emph{jointly} the \emph{sole effective outputs} of the W-nodes if \Cref{eq:sole-effective-output-app} is satisfied:\\[-1em]
		\begin{equation}
		\label{eq:single-out-W-app}
		
\scalebox{\scaling}{
	\input{./figures/D.tikz}%
} = 
\scalebox{\scaling}{
	\input{./figures/sole-effective-output-with.tikz}%
}
		\end{equation}
		\begin{equation}
		\interp{
\scalebox{\scaling}{
	\input{./figures/sole-effective-output-without.tikz}%
}}=\interp{
\scalebox{\scaling}{
	\input{./figures/sole-effective-output-ket0.tikz}%
}}
		\end{equation}
\end{definition}
In particular, if we only consider one W-node that has a sole effective output -- we call such a node a \emph{trivial W-node} -- it follows that:
\def\fig{sole-effective-output}
\begin{equation}
\label{eq:sole-effective-output-app}
\interp{\scalebox{\scaling}{
}} = \interp{\scalebox{\scaling}{
}}
\end{equation}
\begin{definition}[Effective Z-path]
	An effective Z-path of a diagram $D$ is a path going from a boundary to another (inputs and/or outputs), such that:
	\begin{itemize}
		\item For each W-node it goes through, it cannot use two outputs of the W-node (so it must use its input). And considering all those W-nodes at once, those outputs must be jointly the sole effective outputs of those W-nodes.
		\item There exists a state $\ket{\phi}$ that is not of the form $\lambda \ket{0}$ or $\lambda \ket{1}$ for some $\lambda \in \mathbb{C}$, that \emph{inhabits} the path. That is, if we feed to $\interp{D}$ the state $\ket{\phi}$ and/or the effect $\bra{\phi}$ to the two inputs and/or outputs of the path, then the result is not $\vec 0$. 
	\end{itemize}
\end{definition}

\begin{definition}[Up to scalar factors]
	The category ``$\cat{Qudit}_d$ up to scalar factors'' has the same objects as $\cat{Qudit}_d$ and its morphisms are equivalence classes of morphisms of $\cat{Qudit}_d$ for the following relation: $f \equiv g \iff \exists \lambda \neq 0, f = \lambda \cdot g$. 
\end{definition}

\begin{lemma}\label{lem:double-W-simplification-app}~\\
	Let $D$ be a diagram of the form shown in \Cref{eq:double-W-app}, such that we have \Cref{eq:double-W-simplification-app}.
		\begin{equation}
		\label{eq:double-W-app}
		
\scalebox{\scaling}{
	\input{./figures/D.tikz}%
} = 
\scalebox{\scaling}{
	\input{./figures/lem-double-W.tikz}%
}
		\end{equation}
		\begin{equation}
		\label{eq:double-W-simplification-app}
		\interp{
\scalebox{\scaling}{
	\input{./figures/D.tikz}%
}} = \interp{
\scalebox{\scaling}{
	\input{./figures/lem-W-above.tikz}%
}} = \interp{
\scalebox{\scaling}{
	\input{./figures/lem-W-below.tikz}%
}}
		\end{equation}
		Then \Cref{eq:double-W-without-app} is satisfied:
		\begin{equation}
		\label{eq:double-W-without-app}
		\interp{
\scalebox{\scaling}{
	\input{./figures/lem-W-without.tikz}%
}} = \interp{
\scalebox{\scaling}{
	\input{./figures/lem-W-ket0.tikz}%
}}
		\end{equation}
\end{lemma}
\begin{proof}
	We sometimes write $
\scalebox{\scaling}{
	\begin{tikzpicture}
	\begin{pgfonlayer}{nodelayer}
		\node [style=ket] (0) at (-0.25, 0.25) {$0$};
		\node [style=none] (1) at (-0.25, -0.25) {};
	\end{pgfonlayer}
	\begin{pgfonlayer}{edgelayer}
		\draw (1.center) to (0);
	\end{pgfonlayer}
\end{tikzpicture}
}$ for	\def\fig{ket-0-def}$\scalebox{\scaling}{
}$. 
	For the sake of readability, we will make the proof in the case where there are only four W-nodes, two above and two below. But the same proof holds for any number. Using linearity, it is enough to prove that for any $a_1,\dots,a_n$ and any $b_1,\dots,b_m$, we have
	\[ \interp{
\scalebox{\scaling}{
	\input{./figures/lem-proof-W-without.tikz}%
}} = \interp{
\scalebox{\scaling}{
	\input{./figures/lem-proof-W-ket0.tikz}%
}}\]
	where
	\[ 
\scalebox{\scaling}{
	\input{./figures/lem-proof-W-without.tikz}%
} := 
\scalebox{\scaling}{
	\input{./figures/lem-proof-a1an-b1bn.tikz}%
}\]
	
	Decomposing $\interp{D''}$, there exist some coefficients $\lambda_{\substack{x_1,x_2\\y_1,y_2}} \in \mathbb{C}$ such that:
	\[ \interp{D''} = \sum_{\substack{x_1,x_2\\y_1,y_2}} \lambda_{\substack{x_1,x_2\\y_1,y_2}} \interp{
\scalebox{\scaling}{
	\input{./figures/lem-proof-x1x2-y1y2.tikz}%
}} \]
	In the following, we extend the notation $\lambda_{\substack{x_1,x_2\\y_1,y_2}}$ to be $0$ when at least one index is out of range. By simple computation, we obtain:
	\[ \interp{
\scalebox{\scaling}{
	\input{./figures/lem-proof-double-W.tikz}%
}} =  \sum_{\substack{x_1,x_2\\y_1,y_2}} \sum_{\substack{k_{11}\\k_{12}\\k_{21}\\k_{22}}} \frac{1}{k_{11}!k_{12}!k_{21}!k_{22}!} \lambda_{\substack{x_1+k_{11}+k_{12},x_2+k_{21}+k_{22}\\y_1+k_{11}+k_{21},y_2+k_{12}+k_{22}}} \interp{
\scalebox{\scaling}{
	\input{./figures/lem-proof-x1x2-y1y2.tikz}%
}} \]
	
	\[ \interp{
\scalebox{\scaling}{
	\input{./figures/lem-proof-W-above.tikz}%
}}=  \sum_{x_1,x_2} \sum_{\substack{k_{11}\\k_{12}\\k_{21}\\k_{22}}} \frac{1}{k_{11}!k_{12}!k_{21}!k_{22}!} \lambda_{\substack{x_1+k_{11}+k_{12},x_2+k_{21}+k_{22}\\k_{11}+k_{21},k_{12}+k_{22}}} \interp{
\scalebox{\scaling}{
	\input{./figures/lem-proof-x1x2-00.tikz}%
}} \]
	
	\[\interp{
\scalebox{\scaling}{
	\input{./figures/lem-proof-W-below.tikz}%
}} =  \sum_{y_1,y_2} \sum_{\substack{k_{11}\\k_{12}\\k_{21}\\k_{22}}} \frac{1}{k_{11}!k_{12}!k_{21}!k_{22}!} \lambda_{\substack{k_{11}+k_{12},k_{21}+k_{22}\\y_1+k_{11}+k_{21},y_2+k_{12}+k_{22}}} \interp{
\scalebox{\scaling}{
	\input{./figures/lem-proof-00-y1y2.tikz}%
}} \]
	Then, using \Cref{eq:double-W-simplification} they must all be equal, so comparing them on each element of the basis, we have the following identity for every $x_1,x_2,y_1,y_2$ except when $x_1 = x_2 = y_1 = y_2 = 0$:
	\[ \sum_{k_{11},k_{12},k_{21},k_{22}} \lambda_{\substack{x_1+k_{11}+k_{12},x_2+k_{21}+k_{22}\\k_{11}+k_{21},k_{12}+k_{22}}} = 0 \]
	Looking at the case where $x_1 = x_2 = y_1 = y_2 = d-1$, we now obtain that $\lambda_{\substack{d-1,d-1\\d-1,d-1}} = 0$. Similarly if all but one of the four are equal to $d-1$, we also have $\lambda_{\substack{x_1,x_2\\y_1,y_2}} = 0$. Then, by descending induction we can prove that we have $\lambda_{\substack{x_1,x_2\\y_1,y_2}} = 0$ for all $x_1,x_2,y_1,y_2$ except when $x_1 = x_2 = y_1 = y_2 = 0$. Said otherwise, we proved that we have: 
	\[ \interp{
\scalebox{\scaling}{
	\input{./figures/lem-proof-W-without.tikz}%
}} = \interp{
\scalebox{\scaling}{
	\input{./figures/lem-proof-W-ket0.tikz}%
}}\]
\end{proof}
\noindent We can now prove the minimality of every equation of the equational theory.
\begin{itemize}
	\setlength{\itemsep}{0.2em}
	\item[\nameref{ax:Z-spider}] The non-standard semantics $\interp{\dots}_{\nameref{ax:Z-spider}}$ is defined like $\interp{\dots}$ except for the semantics of the Z-spider and the global scalars where $r$ is replaced by its real part $\textup{Re}(r)$:
	\[ \interp{
\scalebox{\scaling}{
	\input{./figures/Z-spider.tikz}%
}}_{\nameref{ax:Z-spider}} = \sum_{k=0}^{d-1} (\textup{Re}(r))^k \sqrt{k!}^{n+m-2}\ketbra{k^m}{k^n} \qquad \qquad \interp{r}_{\nameref{ax:Z-spider}} = \textup{Re}(r) \]
	The Equation~\nameref{ax:Z-spider} is unsound by taking $s = r = \textbf{i}$. We remind the reader that in order to merge two global scalars into one ($r \otimes s$ into $rs$), we rely on Equation~\nameref{ax:Z-spider}, hence this operation is no longer sound either.
	All the equations where the coefficients of Z-spiders is $0$ or $1$ are trivially sound. Equations~\nameref{ax:bialgebra-Z-W}, \nameref{ax:sum} and \nameref{ax:copy} are easily checked to be sound.
	\item[\nameref{ax:W-assoc}] The non-standard semantics $\interp{\dots}_{\nameref{ax:W-assoc}} : \cat{ZW\!}_d \to \{0,1\}$ sends a diagram to the integer $1$ if it contains a non-trivial W-node with arity $>d$ and $0$ if it has only non-trivial W-nodes with arity $\leq d$. The Equation~\nameref{ax:W-assoc} is unsound by taking a diagram with two W-nodes of arity $d$ which are merged into one W-node of arity $2d-1$. Then, an important remark is that when applying any of the Equations to a fragment of a diagram, W-nodes that are not modified directly by the equation will keep their trivial or non-trivial status, and keep their arity. As such, we only need to check the W-nodes modified directly by the equations. All the equations that never manipulate W-nodes of arity $>d$ are trivially sound. For Equations~\nameref{ax:bialgebra-W} and \nameref{ax:bialgebra-Z-W}, we simply check that if all the W-nodes on the left-hand-side have a sole effective output, then so are all the W-nodes on the right-hand-side, and reciprocally.
	\item[\nameref{ax:W-unit}] The non-standard semantics $\interp{\dots}_{\nameref{ax:W-unit}}$ sends diagrams $n \to m$ to disjoints sets of pairs of $\{-1,\dots,-n,$ $1,\dots,m\}$, where each pair means that the corresponding inputs/outputs are linked to one another \emph{without} any node between them. In particular for the generators:	
	\begin{align*}
	\interp{~
\scalebox{\scaling}{
}~}_{\nameref{ax:W-unit}}
	&=\{\{-1,1\}\} & 
	\interp{
\scalebox{\scaling}{
	\input{./figures/swap.tikz}%
}}_{\nameref{ax:W-unit}}
	&=\{\{-1,2\},\{-2,1\}\} \\
	\interp{
\scalebox{\scaling}{
}}_{\nameref{ax:W-unit}}
	&=\{\{1,2\}\} & \interp{
\scalebox{\scaling}{
}}_{\nameref{ax:W-unit}} & =\{\{-1,-2\}\}\\
	\interp{
\scalebox{\scaling}{
	\input{./figures/Z-spider.tikz}%
}}_{\nameref{ax:W-unit}}
	&= \varnothing&
	\interp{
\scalebox{\scaling}{
	\input{./figures/W-n.tikz}%
}}_{\nameref{ax:W-unit}}
	&= \varnothing\\
	\interp{
\scalebox{\scaling}{
}}_{\nameref{ax:W-unit}}
	&= \varnothing&
	\interp{r}_{\nameref{ax:W-unit}}&= \varnothing\\
	\end{align*}
	This semantics is compositional, although writing the formula for the composition is quite technical because of the cups and caps:
	\[ 
	\interp{D_1 \otimes D_2}_{\nameref{ax:W-unit}} = \interp{D_1}_{\nameref{ax:W-unit}} \cup \left\{ \{f(a),f(b)\} ~\middle|~ \{a,b\} \in \interp{D_2}_{\nameref{ax:W-unit}}, f(a) = \begin{cases} a-n_1 &\text{if } a < 0 \\ a+m_1 &\text{if } a > 0 \end{cases} \right\}
\]
\[ 
\interp{D_2 \circ  D_1}_{\nameref{ax:W-unit}} = \left\{ \{a,b\} ~\middle|~ \begin{array}{l} \exists c_1,\dots,c_n \text{ with } a = c_1, b= c_n \text{ such that } \\ \forall k, \{c_k,c_{k+1}\} \in  F(\interp{D_1}_{\nameref{ax:W-unit}}) \cup G(\interp{D_2}_{\nameref{ax:W-unit}}) \end{array} \right\}
\]
\[ \text{ where } F(\interp{D_1}_{\nameref{ax:W-unit}}) = \left\{ \{f(a),f(b)\} ~\middle|~ \{a,b\} \in \interp{D_1}_{\nameref{ax:W-unit}}, f(x) = \begin{cases} x &\text{if } x < 0 \\ (0,x) &\text{if } x > 0 \end{cases} \right\}  \]
\[ \text{ and } G(\interp{D_2}_{\nameref{ax:W-unit}}) = \left\{ \{g(a),g(b)\} ~\middle|~ \{a,b\} \in \interp{D_2}_{\nameref{ax:W-unit}}, g(x) = \begin{cases} (0,-x) &\text{if } x < 0 \\ x &\text{if } x > 0 \end{cases} \right\}  \]
	For all the equations except Equation~\nameref{ax:W-unit}, their soundness is trivial as both side are $\varnothing$, and Equation~\nameref{ax:W-unit} is trivially unsound as $\varnothing \neq  \{\{1,1\}\}$.
	\item[\nameref{ax:hopf}] To each wire in a diagram $D$, we associate a number $0\leq k < d$ (or more graphically we annotate each wire by some number $k$)\footnote{The annotations produced here are upper bounds on the value of the kets that can go through the wires. As such, they are very closely related to the \emph{capacities} from \Cref{sec:fdhilb}.}. The procedure to do so is as follows:
	\begin{enumerate}
		\item annotate all wires with $d-1$
		\item rewrite the annotations using the following rules, until a fixed point is reached:\\
		$
		\labelitemii~~ \def\fig{ket-0-ofs}\scalebox{\scaling}{
}~\overset{a\neq0}\to~\scalebox{\scaling}{
}
		\hspace*{3.5em}
		\labelitemii~~ \def\fig{ket-1-ofs}\scalebox{\scaling}{
}~\overset{a>1}\to~\scalebox{\scaling}{
}
		\hspace*{3.5em}
		\labelitemii~~ \def\fig{W-node-ofs}\scalebox{\scaling}{
}\to\scalebox{\scaling}{
}
		$\\
		$
		\labelitemii~~ \def\fig{Z-spider-ofs}\scalebox{\scaling}{
}~\to~\scalebox{\scaling}{
}\quad \text{ with } a:=\min(a_i)
		$
	\end{enumerate}
	This simple procedure obviously terminates, as a step is only applied if at least one of the annotations is decreased. We then define $\interp{D}_{\nameref{ax:hopf}}$ as the $(n+m)$-tuple of all the final annotations of its inputs and outputs. Equation~\nameref{ax:hopf} is trivially unsound. 
	 When we apply any of the other equations to a fragment of $D$, the annotations of the wires outside of this rewritten fragment are unmodified as both sides of the equation behave the same with respect to the above procedure. Said otherwise, those equations are sound.	 
	 \item[\nameref{ax:bialgebra-Z-W}] Consider diagrams as graphs, and define an ``effective Z-path'' in the diagram as defined above. We consider the existence of such a path, that is $\interp{\dots}_{\nameref{ax:bialgebra-Z-W}} : \cat{ZW\!}_d \to \{0,1\}$ sends diagrams to $1$ if such a path exists and $0$ otherwise. We consider a diagram $D$ with an effective Z-path, and we apply one of the equations to a fragment of $D$:
	 \begin{itemize}
	 	\item If we apply the equation outside of the path, then that path remains an effective Z-path. In particular, Equations~\nameref{ax:hopf} (only inhabited by $\ket{0}$), \nameref{ax:sum}, \nameref{ax:one}, \nameref{ax:copy} (only inhabited by $\ket{1}$), \nameref{ax:loop} and \nameref{ax:decomp-Z} can never be on such a path, so they are trivially sound.
	 	\item If we apply the equation on the path, then we need to check that this path can still be drawn through the rewritten version of the diagram. Equations~\nameref{ax:Z-spider}, \nameref{ax:W-assoc}, \nameref{ax:W-unit}, trivially preserve such an effective path. Equation~\nameref{ax:bialgebra-W} follows from \Cref{lem:double-W-simplification-app}.
	 	Lastly, Equation~\nameref{ax:bialgebra-Z-W} is unsound in the simple case of $D$ being the left-hand side of the equation with $n=m=2$: the effective Z-path going from the first input to the second input (through the Z-spider) does not exists on the right-hand-side as the W-nodes are not trivial.
	 \end{itemize}
	 \item[\nameref{ax:bialgebra-W}] Consider diagrams as graphs, and define a ``W-path'' in the diagram as a path 1) that goes from a boundary to another boundary, 2) which cannot use two outputs of a W-node (if it goes through a W-node, it has to use the input edge) and 3) that does not go through a Z-spider.  We consider the existence of such a path, that is $\interp{\dots}_{\nameref{ax:bialgebra-Z-W}} : \cat{ZW\!}_d \to \{0,1\}$ sends diagrams to $1$ if such a path exists and $0$ otherwise.  We consider a diagram $D$ with a W-path, and we apply one of the equations to a fragment of $D$:
	 \begin{itemize}
	 	\item If we apply the equation outside of the path, then that path remains a W-path. In particular, Equations~\nameref{ax:Z-spider}, \nameref{ax:hopf}, \nameref{ax:bialgebra-Z-W}, \nameref{ax:sum}, \nameref{ax:one}, \nameref{ax:copy}, \nameref{ax:loop} and \nameref{ax:decomp-Z} can never be on such a path, so they are trivially sound.	 	
	 	\item If we apply the equation on the path, then we need to check that this path can still be drawn through the rewritten version of the diagram. Equations~\nameref{ax:W-assoc} and \nameref{ax:W-unit} trivially preserve such a W-path. Lastly, Equation~\nameref{ax:bialgebra-W} is unsound in the simple case of $D$ being the left-hand-side of the equation with $n=m=2$: the W-path going from the first input to the first output (through two W-nodes) does not exist on the right-hand-side because of the Z-spider blocking it.
	 \end{itemize}
 	\item[\nameref{ax:sum}] The non-standard semantics $\interp{\dots}_{\nameref{ax:sum}}$ is defined like $\interp{\dots}$ except for the semantics of the Z-spider and the global scalars where $r$ is replaced by its absolute value $|r|$:
 	\[ \interp{
\scalebox{\scaling}{
	\input{./figures/Z-spider.tikz}%
}}_{\nameref{ax:Z-spider}} = \sum_{k=0}^{d-1} |r|^k \sqrt{k!}^{n+m-2}\ketbra{k^m}{k^n} \qquad \qquad \interp{r}_{\nameref{ax:Z-spider}} = |r|\]
 	The Equation~\nameref{ax:sum} is unsound by taking $r = 1$ and $s = -1$.
 	All the equations where the coefficients of Z-spiders are $0$ or $1$ are trivially sound. Equations~\nameref{ax:Z-spider}, \nameref{ax:bialgebra-Z-W} and \nameref{ax:copy} are easily checked to be sound.
	\item[\nameref{ax:copy}] The non-standard semantics $\interp{\dots}_{\nameref{ax:copy}} : \cat{ZW\!}_d \to \{0,1\}$ sends a diagram to the integer $1$ if it contains a global non-real scalar and $0$ if it does not. This semantics is compositional. All the equations other than Equation~\nameref{ax:copy} are trivially sound, and Equation~\nameref{ax:copy} is unsound by taking $r = \textbf{i}$.  We remind the reader that in order to merge two global scalars into one ($r \otimes s$ into $rs$), we rely on Equation~\nameref{ax:copy}, hence this operation is no longer sound either.
	\item[\nameref{ax:one}] The non-standard semantics $\interp{\dots}_{\nameref{ax:one}} : \cat{ZW\!}_d \to \{0,1\}$ sends a diagram to the integer $1$ if it is non-empty and $0$ if it is empty. This semantics is compositional. All the equations other than Equation~\nameref{ax:one} are trivially sound, and Equation~\nameref{ax:one} is unsound with, for example, $k = 1$.
	\item[\nameref{ax:loop}] Let $\varpi:=e^{i\frac\pi{d-1}}$ be the first $(2d-2)$-th root of unity. Consider the $\dagger$-compact monoidal functor $F$ (i.e.~functor that preserves compositions, symmetry and compact structure) that maps the generators as follows:\\
	$
	\labelitemii~~ 
\scalebox{\scaling}{
}\mapsto \varpi
\scalebox{\scaling}{
}
	\hspace*{5em}
	\labelitemii~~ 
\scalebox{\scaling}{
	\input{./figures/Z-spider.tikz}%
}\mapsto 
\scalebox{\scaling}{
	\input{./figures/minimality-l-Z-spider.tikz}%
}
	\hspace*{5em}
	\labelitemii~~ 
\scalebox{\scaling}{
	\input{./figures/W-n.tikz}%
}\mapsto
\scalebox{\scaling}{
	\input{./figures/W-n.tikz}%
}
	$\\
	When $d>2$, all equations but \nameref{ax:loop} are preserved by $F$. Hence, we can simply define $\interp{\dots}_{\nameref{ax:loop}}$ to be $\interp{F(\dots)}$ and all the equations except Equation~\nameref{ax:loop} will be sound, and Equation~\nameref{ax:loop} is trivially unsound. We can make the argument work when $d=2$ by choosing any $\varpi \neq 0$ such that $\varpi^2\neq1$ and by working up to scalar factors.
	\item[\nameref{ax:decomp-Z}] The non-standard semantics $\interp{\dots}_{\nameref{ax:Z-spider}}$ is defined like $\interp{\dots}$ except for the semantics of the states where $
\scalebox{\scaling}{
}$ (and subsequently all $
\scalebox{\scaling}{
}$) is sent to $\ket{0}$, and except the fact that we work in $\cat{Qudit}_d$ up to scalar factors.
	Equation~\nameref{ax:decomp-Z} is trivially unsound. All the equations where no state appears are trivially sound, and Equations~\nameref{ax:one}, \nameref{ax:copy} and \nameref{ax:loop} are easily checked to be sound.
	\qedhere
\end{itemize}

\section{Lemmas and Proofs for Completeness of FdHilb}
\label{sec:appendix-fdhilb}
In this section, for ease of notation, we will omit the box over the capacity annotations. 
To get to the proof of completeness, we again require some lemmas beforehand.

\begin{lemma}
\label{lem:Z-id-qf}
\def\fig{Z-id-qf-prf}
\begin{align*}
\zwf\vdash~\scalebox{\scaling}{
}
\eq{}\scalebox{\scaling}{
	\input{./figures/\fig/\fig_05.tikz}%
}
\end{align*}
\end{lemma}

\begin{proof}
\def\fig{Z-id-qf-prf}
\begin{align*}
\zwf\vdash~\scalebox{\scaling}{
}
&\eq{\nameref{ax:Z-spider-qf}}\scalebox{\scaling}{
}
\eq{\nameref{ax:id-qf}}\scalebox{\scaling}{
}
\eq{\nameref{ax:W-assoc-qf}\\\nameref{ax:bialgebra-Z-W-qf}}\scalebox{\scaling}{
}
\eq{\nameref{ax:W-assoc-2-qf}}\scalebox{\scaling}{
}
\eq{\nameref{ax:W-assoc-qf}\\\nameref{ax:id-qf}}\scalebox{\scaling}{
	\input{./figures/\fig/\fig_05.tikz}%
}
\qedhere
\end{align*}
\end{proof}

\begin{lemma}
\label{lem:scalar-1-qf}
\def\fig{1-is-empty}
\begin{align*}
\zwf\vdash~1
\def\fig{scalar-1-is-idempotent}
\eq{}\scalebox{\scaling}{
}
\def\fig{1-is-empty}
\eq{}\scalebox{\scaling}{
}
\eq{}\scalebox{\scaling}{
}
\def\fig{braket-1-qf-aux}
\eq{}\scalebox{\scaling}{
}
\eq{}
\scalebox{\scaling}{
	\input{./figures/empty.tikz}%
}
\end{align*}
\end{lemma}

\begin{proof}
First, let us prove:
\def\fig{scalar-1-is-idempotent}
\begin{align*}
\zwf\vdash~1&\eq{\nameref{ax:scalar-qf}}
\scalebox{\scaling}{
}
\eq{\nameref{ax:Z-spider-qf}}\scalebox{\scaling}{
}
\eq{\nameref{ax:scalar-qf}}1\cdot\scalebox{\scaling}{
}
\eq{\nameref{ax:scalar-qf}}1\otimes1
\end{align*}
Then:
\def\fig{1-is-empty}
\begin{align*}
\zwf\vdash~ 1\eq[~]{}1\otimes1
\eq[~]{\nameref{ax:scalar-qf}}1\cdot\scalebox{\scaling}{
}
&\eq[~]{\nameref{ax:W-unit}\\\nameref{ax:W-assoc-qf}\\\nameref{ax:loop-removal-qf}}1\cdot\scalebox{\scaling}{
}
\eq[~]{\nameref{ax:decomp-Z-qf}}\scalebox{\scaling}{
}
\eq[~]{\nameref{ax:loop-removal-qf}}\!\!\scalebox{\scaling}{
}\!\!
\eq[~]{\nameref{ax:bialgebra-Z-W-qf}}\scalebox{\scaling}{
}
\eq[~]{\nameref{ax:bialgebra-W-qf}}
\scalebox{\scaling}{
	\input{./figures/empty.tikz}%
}
\end{align*}
Finally:
\def\fig{braket-1-qf-aux}
\begin{align*}
\zwf\vdash~ \scalebox{\scaling}{
}
&\eq{\nameref{ax:scalar-qf}}\scalebox{\scaling}{
}
\eq{\nameref{ax:loop-removal-qf}}\scalebox{\scaling}{
}
\eq{}
\scalebox{\scaling}{
	\input{./figures/empty.tikz}%
}
\qedhere
\end{align*}
\end{proof}

\begin{lemma}
\label{lem:embedding}
If $b\geq a$:
\def\fig{embedding}
\begin{align*}
\zwf\vdash~\scalebox{\scaling}{
}
\eq{\nameref{ax:bialgebra-W-qf}}\scalebox{\scaling}{
}
\eq{\nameref{ax:id-qf}}\scalebox{\scaling}{
}
\end{align*}
\end{lemma}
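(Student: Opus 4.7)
The plan is to derive the equation via a single application of the W-bialgebra axiom followed by a Z-spider simplification. The key observation is that the hypothesis $b \geq a$ is precisely what unlocks the side conditions of \nameref{ax:bialgebra-W-qf} in this configuration of mixed capacities, so the whole proof should be a two-step rewrite with no detour through other lemmas.

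First I would rewrite the left-hand side using \nameref{ax:bialgebra-W-qf}. Its left-hand side requires a bridge capacity $c \geq \min(\sum a_i, \sum b_j)$; in the present setting the relevant sum collapses to $a$, so the condition reduces to $b \geq a$, which is the hypothesis. On the right-hand side of the bialgebra, the bridges take capacities $\ell_{ij} = \min(a_i, b_j)$, and for our configuration each $\ell_{ij}$ equals $a$, matching the target annotations cleanly. Hence the bialgebra applies in both directions and produces an intermediate diagram isomorphic to a parameter-$1$ degree-two Z-spider sitting on a wire of capacity $b$ (possibly composed with a small W-fan that matches the right-hand diagram of the lemma).

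Second, I would collapse that parameter-$1$ Z-spider using \nameref{ax:id-qf}, which asserts that a degree-two Z-spider with trivial parameter is just the identity wire at the given capacity. This immediately gives the target on the right-hand side.

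The main subtlety, rather than a genuine obstacle, is bookkeeping on the capacity annotations across the rewrite: one must check that every wire entering or leaving the rewritten fragment still carries a capacity consistent with $b \geq a$, so that the instance of \nameref{ax:bialgebra-W-qf} used is well-typed in $\cat{ZW_{\!f}}$ and that the Z-spider appearing after the bialgebra step has its legs at capacity $b$ as required by \nameref{ax:id-qf}. Once this typing is verified, no further lemmas from the $\cat{ZW\!}_d$ toolkit are needed, which is why I expect this result to appear as a basic building block supporting the later embedding/completeness arguments in $\cat{ZW_{\!f}}$.
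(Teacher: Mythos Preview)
Your proposal is correct and matches the paper's proof exactly: the paper gives the proof inline in the lemma statement via the annotated equalities, applying \nameref{ax:bialgebra-W-qf} once (using $b\geq a$ to satisfy the side condition) and then \nameref{ax:id-qf} to collapse the resulting trivial Z-spider. The only minor slip is that the intermediate Z-spider sits at capacity $a$ (since $\ell_{ij}=\min(a,b)=a$), not $b$, but this does not affect the argument.
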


\begin{lemma}
\label{lem:ket-0-qf}
\def\fig{ket-0-qf-prf}
\begin{align*}
\zwf\vdash~\scalebox{\scaling}{
}
\eq{}\scalebox{\scaling}{
}
\end{align*}
\end{lemma}

\begin{proof}
\def\fig{ket-0-qf-prf}
\begin{align*}
\zwf\vdash~\scalebox{\scaling}{
}
&\eq{\nameref{ax:bialgebra-W-qf}}\scalebox{\scaling}{
}
\eq{\nameref{ax:hopf-qf}}\scalebox{\scaling}{
}
\eq{\nameref{ax:bialgebra-Z-W-qf}}\scalebox{\scaling}{
}
\eq{\nameref{ax:W-assoc-qf}}\scalebox{\scaling}{
}
\qedhere
\end{align*}
\end{proof}

\begin{lemma}
\label{lem:W-assoc-gen}
Equation \nameref{ax:W-assoc-2-qf} can be generalised to:
\def\fig{W-assoc-qf2-gen}
\begin{align*}
\zwf\vdash~\scalebox{\scaling}{
}
\eq{}\scalebox{\scaling}{
	\input{./figures/\fig/\fig_07.tikz}%
}
\end{align*}
\end{lemma}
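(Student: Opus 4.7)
The plan is to proceed by induction on the number of ``internal'' W-nodes appearing on the left-hand side, i.e.~the size of the tree of W-nodes being merged into a single W-node on the right-hand side. The base case, where only two internal W-nodes are being combined, is precisely the axiom \nameref{ax:W-assoc-2-qf} itself, possibly after a preliminary use of \nameref{ax:W-assoc-qf} (in its $b = c$ branch) to bring the subdiagram into the exact shape required by \nameref{ax:W-assoc-2-qf}.

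For the inductive step, I would identify a W-node sitting at a leaf of the internal tree, namely one whose input is directly connected to the output of another W-node higher up in the tree, and whose outputs all go to the boundary of the local subdiagram. Using the induction hypothesis, the rest of the tree (of strictly smaller size) can be collapsed to a single W-node. We are then left with exactly the two-W-node configuration of \nameref{ax:W-assoc-2-qf}, which finishes the step.

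The main obstacle will be the bookkeeping of capacity annotations so that the side conditions of \nameref{ax:W-assoc-qf} and \nameref{ax:W-assoc-2-qf} are satisfied at every stage of the recursion. The intermediate wires produced by the collapse may carry capacities that do not immediately match those needed to apply the axioms: in particular, the input capacity of a merged W-node must dominate all its outputs, and the ``intermediate'' capacity produced by the induction must be compatible with the leaf W-node we set aside. To manage this, I expect to rely on \Cref{lem:embedding} to raise or lower the capacity of a wire through a $1\to1$ W-node when the native capacity is too small or too large, and, when necessary, to temporarily widen a wire using the $b \geq \sum_i a_i$ branch of \nameref{ax:W-assoc-qf} to bring the diagram into the shape needed by \nameref{ax:W-assoc-2-qf}, then shrink it back afterwards.

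Finally, one should verify that the order of leaf-peeling can always be chosen so that all intermediate capacities are legal (in particular, positive and bounded by the relevant surrounding capacities). A clean way to do this is to peel leaves from the bottom up, so that at each step the wire on which the axiom is applied already has the maximal legal capacity, namely the input capacity of the subtree being collapsed; this makes the side condition of \nameref{ax:W-assoc-2-qf} automatic and keeps the proof purely structural.
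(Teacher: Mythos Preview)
Your proposal would work, but it is considerably heavier than what is needed. The paper's proof is a direct three-step rewrite: first apply \nameref{ax:W-assoc-qf} to regroup the outputs on one side into a single intermediate W-node (so that the configuration matches the binary shape of \nameref{ax:W-assoc-2-qf}), then apply \nameref{ax:W-assoc-2-qf} once, then apply \nameref{ax:W-assoc-qf} again to unfold the other side. No induction on the tree of W-nodes, no leaf-peeling, and no appeal to \Cref{lem:embedding} are required.

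The reason your bookkeeping concern about intermediate capacities never arises in the paper's argument is that \nameref{ax:W-assoc-qf} is already stated for arbitrary arity (the ``\ldots'' in the figure), so a single application collapses or expands an entire fan of outputs at once; you do not have to peel leaves one by one and worry about the legality of each intermediate capacity. Your plan is not wrong, and your instinct to use the $b \geq \sum_i a_i$ branch of \nameref{ax:W-assoc-qf} is exactly what the paper exploits, but the induction and the capacity-adjustment machinery via \Cref{lem:embedding} are superfluous here: the whole lemma is a three-line computation using only the two associativity axioms.
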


\begin{proof}
\def\fig{W-assoc-qf2-gen-bis}
\begin{align*}
\zwf\vdash~\scalebox{\scaling}{
}
&\eq{\nameref{ax:W-assoc-qf}}\scalebox{\scaling}{
}
\eq{\nameref{ax:W-assoc-2-qf}}\scalebox{\scaling}{
}
\eq{\nameref{ax:W-assoc-qf}}\scalebox{\scaling}{
}
\qedhere
\end{align*}
\end{proof}

\begin{lemma}
\label{lem:W-bialgebra-ctxt}
\def\fig{W-bialgebra-qudit-from-qf}
\begin{align*}
\zwf\vdash~\scalebox{\scaling}{
}
\eq{}\scalebox{\scaling}{
	\input{./figures/\fig/\fig_06.tikz}%
}
\end{align*}
\end{lemma}

\begin{proof}
With $n$ the smallest between the number of inputs and the number of outputs:
\def\fig{W-bialgebra-qudit-from-qf}
\begin{align*}
\zwf\vdash~\scalebox{\scaling}{
}
&\eq{\ref{lem:embedding}}\scalebox{\scaling}{
}
\eq{\ref{lem:W-assoc-gen}}\scalebox{\scaling}{
}
\eq{\nameref{ax:bialgebra-W-qf}}\scalebox{\scaling}{
}\\
&\eq{\nameref{ax:bialgebra-Z-W-qf}\\\nameref{ax:W-assoc-qf}\\\nameref{ax:Z-spider-qf}}\scalebox{\scaling}{
}
\eq{\nameref{ax:bialgebra-Z-W-qf}\\\nameref{ax:Z-spider-qf}}\scalebox{\scaling}{
	\input{./figures/\fig/\fig_05.tikz}%
}
\eq{\nameref{ax:bialgebra-Z-W-qf}\\\nameref{ax:W-assoc-qf}}\scalebox{\scaling}{
	\input{./figures/\fig/\fig_06.tikz}%
}
\qedhere
\end{align*}
\end{proof}

\begin{lemma}
\label{lem:Pascal-qf}
\def\fig{lemma-Pascal-multinomial-qf}
\begin{align*}
\zwf\vdash~\scalebox{\scaling}{
}
\eq{}\scalebox{\scaling}{
}
\end{align*}
\end{lemma}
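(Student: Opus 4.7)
The plan is to mirror the qudit proof of Lemma~\ref{lem:Pascal} by induction on $n$, with the combinatorial core (the Pascal identity for multinomials) unchanged and only the bookkeeping of capacities adapted to the annotated setting. Concretely, I would proceed by induction on $n$, where the unique difficulty over the qudit version lies in justifying the intermediate steps with the qufinite bialgebra and sum rules, whose side conditions involve capacities.

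The base case $n=1$ reduces to a direct application of \nameref{ax:bialgebra-Z-W-qf}: a parameter-$1$ Z-spider feeding into the $1\to m$ W-node distributes as $m$ parallel parameter-$1$ Z-spiders, one per output, which is precisely the expected right-hand side (the only multinomial coefficient $\binom{1}{0,\ldots,1,\ldots,0}$ equals $1$). A small amount of reshuffling with \nameref{ax:id-qf} and \nameref{ax:W-assoc-qf} absorbs the trivial unary W-nodes.

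For the inductive step, I would first split the parameter $n+1$ as $n+1 = n+1$ by using \nameref{ax:sum-qf} (reading it ``backwards'') to replace the Z-spider by two parallel Z-spiders of parameters $n$ and $1$ merged with the W-node. The qufinite W-bialgebra lemma \ref{lem:W-bialgebra-ctxt} then lets us rearrange the two W-layers so that the top layer becomes one $1\to m$ W-node applied to the parameter-$n$ Z-spider, on top of which sits the parameter-$1$ Z-spider, all flowing into a bottom layer of W-nodes feeding the $m$ outputs. Applying the induction hypothesis (on the $n$-layer) and the base case (on the $1$-layer) yields, on each of the $m$ outputs, two parallel Z-spiders with parameters $k_j$ and $\ell_j$, where $\sum_j k_j = n$ and $\sum_j \ell_j = 1$, each term weighted by $\binom{n}{k_1,\ldots,k_m}\binom{1}{\ell_1,\ldots,\ell_m}$. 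Merging these parallel Z-spiders via \nameref{ax:Z-spider-qf} gives a single Z-spider of parameter $p_j = k_j + \ell_j$ on each output, with $\sum_j p_j = n+1$; the generalised Pascal identity $\sum_{j} \binom{n}{p_1,\ldots,p_j - 1,\ldots,p_m} = \binom{n+1}{p_1,\ldots,p_m}$ then combines all terms sharing the same output signature into the right-hand side of the claim, using \nameref{ax:sum-qf} to perform the summation of coefficients.

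The main obstacle will be capacity bookkeeping rather than combinatorics. Unlike in the qudit proof where every wire carries the implicit capacity $d-1$ and no constraints are ever violated, here we must verify that every intermediate diagram is well-formed: the W-nodes have input capacities at least as large as their output capacities, the duplicated wires introduced by \nameref{ax:bialgebra-Z-W-qf} and \ref{lem:W-bialgebra-ctxt} respect the side condition $c\geq \min(\sum a_i, \sum b_i)$, and the merging Z-spiders at each output share the single capacity $a$ inherited from the outputs of the original Z-spider. Since all bialgebra reshuffles only temporarily introduce wires carrying values bounded by $n+1 \leq a$, and all output wires retain capacity $a$, Lemma~\ref{lem:embedding} can be invoked if we ever need to widen an intermediate wire's capacity to satisfy the preconditions of \ref{lem:W-bialgebra-ctxt}. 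Once this bookkeeping is checked, the derivation is term-for-term the qudit one.
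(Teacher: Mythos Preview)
Your proposal is correct and matches the paper's approach: the paper's proof of Lemma~\ref{lem:Pascal-qf} is literally one sentence, namely that the proof is the same as that of Lemma~\ref{lem:Pascal} with Lemma~\ref{lem:W-bialgebra-ctxt} used in place of the W-bialgebra axiom~\nameref{ax:bialgebra-W}. Your identification of the key substitution (Lemma~\ref{lem:W-bialgebra-ctxt} for the raw bialgebra) and of the combinatorial core (the Pascal recursion for multinomials) is exactly right; the capacity bookkeeping you worry about is already absorbed into Lemma~\ref{lem:W-bialgebra-ctxt}, so no separate invocation of Lemma~\ref{lem:embedding} is needed at this level.
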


\begin{proof}
The proof is the same as that of \Cref{lem:Pascal}, where we can use \Cref{lem:W-bialgebra-ctxt} in place of \nameref{ax:bialgebra-W}.
\end{proof}

\begin{lemma}
\label{lem:copy-0-qf}
\def\fig{copy-0}
\begin{align*}
\zwf\vdash~\scalebox{\scaling}{
}
\eq{}\scalebox{\scaling}{
}
\end{align*}
\end{lemma}

\begin{proof}
The case where there are no outputs is dealt with
\def\fig{scalar-1-with-ket-0}
\begin{align*}
\zwf\vdash~\scalebox{\scaling}{
}
\eq{\ref{lem:ket-0-qf}}\scalebox{\scaling}{
}
\eq{\nameref{ax:Z-spider-qf}}\scalebox{\scaling}{
}
\eq{\nameref{ax:Z-spider-qf}}\scalebox{\scaling}{
}
\eq{\ref{lem:ket-0-qf}}\scalebox{\scaling}{
}
\eq{\nameref{ax:bialgebra-W-qf}}
\scalebox{\scaling}{
	\input{./figures/empty-diag.tikz}%
}
\end{align*}
The general case is a direct application of \nameref{ax:bialgebra-Z-W-qf} with $m=0$.
\end{proof}

\begin{lemma}
\label{lem:ket-k-forms}
\def\fig{ket-k-several-forms}
\begin{align*}
\zwf\vdash~\scalebox{\scaling}{
}
\eq{}\scalebox{\scaling}{
}
\eq{}\scalebox{\scaling}{
}
\end{align*}
\end{lemma}

\begin{proof}
\def\fig{ket-k-several-forms}
\begin{align*}
\zwf\vdash~\scalebox{\scaling}{
}
\eq{}\scalebox{\scaling}{
}
\eq{\nameref{ax:W-assoc-qf}}\scalebox{\scaling}{
}
\eq{\nameref{ax:W-assoc-qf}}\scalebox{\scaling}{
}
\eq{\nameref{ax:ket-1}}\scalebox{\scaling}{
}
\end{align*}
\end{proof}

\begin{lemma}
\label{lem:ket-k-dim-qf}
If $b\geq a\geq k \geq b_i$:
\def\fig{ket-k-larger-dim}
\begin{align*}
\zwf\vdash~\scalebox{\scaling}{
}
\eq{}\scalebox{\scaling}{
}
\qquad\text{and}\qquad
\def\fig{ket-k-smaller-dim-gen}
\zwf\vdash~\scalebox{\scaling}{
}
\eq{}\scalebox{\scaling}{
}
\end{align*}
\end{lemma}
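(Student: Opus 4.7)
The plan is to prove both equations by induction on $k$, exploiting the inductive definition of $\ket{k}$ recorded in \Cref{lem:ket-k-forms}, together with the dimension-change axiom \nameref{ax:ket-1} (which handles exactly the case $k=1$). The general pattern is: unfold $\ket{k}$ on capacity $a$ as a binary $W$-node combining $\ket{k-1}$ on capacity $a$ with $\ket{1}$ on capacity $a$; move the dimension-changing $W$-node(s) past this top binary $W$-node using the associativity laws; apply the induction hypothesis on the $\ket{k-1}$ subdiagram and \nameref{ax:ket-1} on the $\ket{1}$ subdiagram; finally repack by folding via \Cref{lem:ket-k-forms} again.

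For the first equation (embedding $\ket{k}$ from capacity $a$ into a larger capacity $b$), the base case $k=1$ is exactly \nameref{ax:ket-1}. For the inductive step, after unfolding with \Cref{lem:ket-k-forms}, the embedding $W$-node can be commuted past the binary $W$-node using \nameref{ax:W-assoc-qf} (whose side condition $b\geq a \geq \dots$ is satisfied by hypothesis). The two resulting branches each admit a dimension change: one by the induction hypothesis (for $\ket{k-1}$, with $b \geq a \geq k-1$ still holding), the other by \nameref{ax:ket-1}. Refolding via \Cref{lem:ket-k-forms} on capacity $b$ yields $\ket{k}$ on capacity $b$.

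For the second equation (``smaller dim gen''), the same inductive scheme applies, but one must additionally route $\ket{k-1}$ and $\ket{1}$ to the appropriate output legs of the $W$-node using \Cref{lem:W-assoc-gen} and \nameref{ax:W-assoc-2-qf}. The hard part will be the combinatorial bookkeeping: one must ensure that, under the assumption $a \geq k \geq b_i$, every intermediate $W$-node produced along the rewrite satisfies the $\cat{ZW_f}$ well-formedness constraint (input capacity $\geq$ output capacities), and that the particle distribution matches the target diagram. The condition $k \geq b_i$ is precisely what allows each output leg of capacity $b_i$ to receive at most $b_i$ of the $k$ ``particles'' of $\ket{k}$ without violating well-formedness during the re-association.

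Once both equations are established for some $k$, the induction step to $k+1$ reuses the same toolkit (unfold, commute, apply IH and \nameref{ax:ket-1}, refold), so no new ingredient is needed beyond \Cref{lem:ket-k-forms}, \Cref{lem:W-assoc-gen}, the two associativities \nameref{ax:W-assoc-qf}/\nameref{ax:W-assoc-2-qf}, and \nameref{ax:ket-1} itself. The overall proof is therefore a routine but careful diagrammatic induction; the main risk is miscounting capacities when the two $W$-nodes are re-associated, which is why the explicit side conditions $b\geq a\geq k\geq b_i$ are included in the statement.
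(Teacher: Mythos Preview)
Your inductive plan for the first equation is correct, but the paper avoids induction entirely. The point is that \Cref{lem:ket-k-forms} does not just record the binary inductive unfolding of $\ket{k}$: it also gives the \emph{fully unfolded} form, namely $k$ parallel copies of $\ket 1$ (on capacity $1$) fed into a single $k$-ary W-node with input capacity $a$. From that form, the embedding $W:\obfhilb a\to\obfhilb b$ merges with the $k$-ary W-node in one application of \nameref{ax:W-assoc-qf}, and a second use of \Cref{lem:ket-k-forms} refolds the result as $\ket k$ on capacity $b$. Three steps, no induction, no capacity bookkeeping. Your scheme reaches the same place but re-derives at each step what \Cref{lem:ket-k-forms} already packages.

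For the second equation your plan has a genuine gap. After unfolding $\ket k$ as a binary W-node applied to $\ket{k-1}$ and $\ket 1$, you are in the configuration ``$n$-ary W on top of a $2\to1$ W'': an $n$-output W-node consuming the \emph{input} wire of a merging W-node. Associativity (\nameref{ax:W-assoc-qf}, \nameref{ax:W-assoc-2-qf}, \Cref{lem:W-assoc-gen}) only rewrites a W-node sitting on an \emph{output} of another W-node; it cannot commute the two nodes in the configuration you need, so there is no way to ``route $\ket{k-1}$ and $\ket 1$ to the appropriate output legs'' with those tools alone. The paper instead proves the second equation directly with \nameref{ax:bialgebra-W-qf} followed by \nameref{ax:id-qf}: one bialgebra step (whose side condition is guaranteed by $a\geq k\geq b_i$) and one identity removal, again with no induction. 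If you want to salvage your inductive route you will need \nameref{ax:bialgebra-W-qf} in the step where you currently invoke only associativity.
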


\begin{proof}
\def\fig{ket-k-larger-dim}
\begin{align*}
\zwf\vdash~\scalebox{\scaling}{
}
\eq{\ref{lem:ket-k-forms}}\scalebox{\scaling}{
}
\eq{\nameref{ax:W-assoc-qf}}\scalebox{\scaling}{
}
\eq{\ref{lem:ket-k-forms}}\scalebox{\scaling}{
}
\end{align*}
\def\fig{ket-k-smaller-dim-gen}
\begin{align*}
\zwf\vdash~\scalebox{\scaling}{
}
&\eq{}\scalebox{\scaling}{
}
\eq{\nameref{ax:bialgebra-W-qf}}\scalebox{\scaling}{
}
\eq{\nameref{ax:id-qf}}\scalebox{\scaling}{
}
\qedhere
\end{align*}
\end{proof}

\begin{lemma}
\label{lem:zcp-1}
\def\fig{zcp-1}
\begin{align*}
\zwf\vdash~~\scalebox{\scaling}{
}
\eq{}\scalebox{\scaling}{
	\input{./figures/\fig/\fig_08.tikz}%
}
\end{align*}
\end{lemma}

\begin{proof}
First, we have:
\def\fig{zcp-1}
\begin{align*}
\zwf\vdash~~\scalebox{\scaling}{
}
\eq{\nameref{ax:bialgebra-Z-W-qf}}\scalebox{\scaling}{
}
\eq{\ref{lem:embedding}}\scalebox{\scaling}{
}
\eq{\nameref{ax:loop-removal-qf}\\\ref{lem:Z-id-qf}}\scalebox{\scaling}{
}
\end{align*}
Then:
\begin{align*}
\zwf\vdash~~\scalebox{\scaling}{
}
&\eq{}\scalebox{\scaling}{
	\input{./figures/\fig/\fig_05.tikz}%
}
\eq{\nameref{ax:bialgebra-Z-W-qf}}\scalebox{\scaling}{
	\input{./figures/\fig/\fig_06.tikz}%
}
\eq{\ref{lem:embedding}}\scalebox{\scaling}{
	\input{./figures/\fig/\fig_07.tikz}%
}
\eq{\nameref{ax:loop-removal-qf}}\scalebox{\scaling}{
	\input{./figures/\fig/\fig_08.tikz}%
}
\qedhere
\end{align*}
\end{proof}

\begin{lemma}
\label{lem:NF-qf-other-dim-qf}
\def\fig{NF-qf-other-dim}
\begin{align*}
\zwf\vdash~\scalebox{\scaling}{
}
\eq{}\scalebox{\scaling}{
}
\end{align*}
\end{lemma}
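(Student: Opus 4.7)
The plan is to show that a normal form diagram in $\cat{ZW_{\!f}}$ can be rewritten so that the capacity appearing on its internal wires (the wires connecting the central Z-spiders through the W-nodes toward the boundary kets) is replaced by a different value, provided it remains at least as large as the maximum ket that actually occurs at the boundary. This is exactly the flexibility needed in the main completeness proof, where we want to transport a $\cat{ZW\!}_d$-diagram obtained from qudit-completeness through $\iota_d$ into a $\cat{ZW_{\!f}}$-diagram with arbitrary (larger) internal capacity.

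First, I would reduce to analyzing a single summand of the normal form. Since the outer structure on both sides is the same $n$-ary W-fan at each output, it suffices to prove the equality branch-by-branch: for each configuration of kets $\ket{x_1^i, \ldots, x_n^i}$ at the outputs, together with the central Z-spider carrying scalar $r_i$ and its capacity annotation, the diagram obtained with capacity $a$ on the central wire equals that obtained with capacity $b$ (with $a, b \geq \max_j x_j^i$). Here I would use Lemma~\ref{lem:Pascal-qf} (backwards) to split the central Z-spider into several Z-spiders, one per summand, each of degree $1\to n$ feeding into one ket configuration, so that the argument localizes cleanly.

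The key ingredient is Lemma~\ref{lem:ket-k-dim-qf}, which already tells us that a ket $\ket k$ on a wire of capacity $a$ can be promoted to capacity $b\geq a$ (and conversely, a ket produced through a higher-capacity W-node can be pushed down). Combined with \ref{lem:embedding} to insert capacity-adjusting projections on each central wire, and Equations \nameref{ax:W-assoc-qf} and \nameref{ax:W-assoc-2-qf} to absorb these projections into the surrounding W-nodes, we can change the capacity of the central Z-spider from $a$ to $b$ without altering anything else. The Z-spider itself transforms correctly under capacity change thanks to \nameref{ax:Z-spider-qf} and \ref{lem:Z-id-qf}, which together allow us to realize a capacity-$b$ Z-spider with parameter $r_i$ as a capacity-$a$ Z-spider with parameter $r_i$ composed with the appropriate identity/embedding.

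The main obstacle is keeping track of the interaction between the central scalar $r_i^k$ (whose summation index $k$ ranges up to the current capacity) and the ket weights on the outputs: changing the capacity formally enlarges the summation range, and one must check that all the extra terms vanish because the W-nodes below can only route values $k \leq \max_j x_j^i$ to the fixed ket configuration. This is precisely the content of \ref{lem:ket-0-qf} and \ref{lem:copy-0-qf}: any ``excess'' summand gets killed off by a $\ket 0$ somewhere in the branch. Once this cancellation is set up explicitly through the rewrites above, reassembling the summands via \ref{lem:Pascal-qf} (forwards this time) yields the desired normal form with the new capacity annotation, and we are done.
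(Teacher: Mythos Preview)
Your plan has a genuine gap at the step where you change the capacity annotation on the Z-spider itself. You propose to do this ``thanks to \nameref{ax:Z-spider-qf} and \ref{lem:Z-id-qf}'', but neither of those changes capacity: \nameref{ax:Z-spider-qf} only merges spiders that already share the same capacity, and \ref{lem:Z-id-qf} is capacity-preserving. A capacity-$a$ Z-spider sandwiched between $a\!\to\!b$ embeddings is \emph{not} equal to a capacity-$b$ Z-spider with the same parameter, so the rewrite you describe does not exist in \zwf as stated. The paper closes exactly this gap with Lemma~\ref{lem:zcp-1}, whose proof hinges on \nameref{ax:loop-removal-qf}: one uses \nameref{ax:bialgebra-Z-W-qf} to push the capacity-changing W-node through the Z-spider, creating a self-loop that is then removed by \nameref{ax:loop-removal-qf}. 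You never invoke \nameref{ax:loop-removal-qf} (directly or via \ref{lem:zcp-1}), and without it there is no equational route to re-label the spider's capacity.

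Separately, the Pascal detour is misguided. The normal form is already a parallel composition of one Z-spider per basis term, so there is no ``central Z-spider'' to split; Lemma~\ref{lem:Pascal-qf} is irrelevant here. The paper's proof is a direct four-step rewrite on the whole normal form: apply \ref{lem:ket-k-dim-qf} to shift the ket's capacity, use \nameref{ax:bialgebra-W-qf} to reorganise the W-nodes, then \nameref{ax:bialgebra-Z-W-qf} together with \ref{lem:zcp-1} to move the capacity change across the Z-spiders, and finally \nameref{ax:W-assoc-qf} to clean up. No per-summand decomposition, no ``excess term'' cancellation via \ref{lem:ket-0-qf}/\ref{lem:copy-0-qf}, is needed.
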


\begin{proof}
\def\fig{NF-qf-other-dim}
\begin{align*}
\zwf\vdash~\scalebox{\scaling}{
}
&\eq{\ref{lem:ket-k-dim-qf}}\scalebox{\scaling}{
}
\eq{\nameref{ax:bialgebra-W-qf}}\scalebox{\scaling}{
}\\
&\eq{\nameref{ax:bialgebra-Z-W-qf}\\\ref{lem:zcp-1}}\scalebox{\scaling}{
}
\eq{\nameref{ax:W-assoc-qf}}\scalebox{\scaling}{
}
\qedhere
\end{align*}
\end{proof}

\begin{definition}
With $d\geq a$, we define the ``$a$-restricted Z-spider'' as:
\def\fig{Z-spider-restrict-alt}
\begin{align*}
\scalebox{\scaling}{
}
~~:=~~\frac1{a!}\cdot\scalebox{\scaling}{
}
\left(\eq{}\frac1{a!}\cdot\scalebox{\scaling}{
}\right)
\end{align*}
\end{definition}
Notice that the $a$-restricted Z-spider by default only uses capacity $d$ on all its wires.

\begin{lemma}
With $d\geq a$:
\label{lem:Z-restrict-NF-qf}
\def\fig{Z-restrict-qudit}
\begin{align*}
\zwf\vdash~\scalebox{\scaling}{
}
\eq{\ref{lem:ket-k-forms}}\scalebox{\scaling}{
}
\end{align*}
\end{lemma}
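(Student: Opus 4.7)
The plan is to treat this lemma as a one-step rewrite directly justified by \Cref{lem:ket-k-forms}. By the very definition of the $a$-restricted Z-spider, the right-hand side unfolds to $\tfrac{1}{a!}$ times a diagram in which a $\ket{a}$-subdiagram is plugged into a W-node together with a degree-$1$ Z-spider carrying the parameter $r$. The left-hand side presents the same structural content, but with the $\ket{a}$-subdiagram rendered in one of its alternate equivalent graphical forms.

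First I would expand both sides so that the $\ket{a}$-subdiagrams appear explicitly. Then I would invoke \Cref{lem:ket-k-forms}, which supplies the equality between the three graphical presentations of $\ket{a}$ --- in particular between the recursive definition via a $\ket{a-1}$ sitting on a W-node and the flat rendering using a single W-node fed with $\ket{1}^{\otimes a}$ --- to substitute one rendering for the other. Since the surrounding context (the remaining W-node and the $r$-labelled degree-$1$ Z-spider, together with the global $\tfrac{1}{a!}$) is identical on both sides, this single substitution closes the equation.

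The only care needed is bookkeeping of the capacity annotations: one must verify that the internal wires on both sides carry compatible capacities so that \Cref{lem:ket-k-forms} can be applied as stated. Because the definition of the $a$-restricted Z-spider already forces every internal wire to carry the ambient capacity $d \geq a$, no auxiliary embedding (of the kind provided by \Cref{lem:embedding}) is required, and the rewrite goes through verbatim. Consequently there is no genuine obstacle here; this lemma is really just a convenient packaging of the canonical normal-form rendering of the restricted Z-spider, to be consumed later in the completeness argument of \Cref{thm:completeness-finite} where normal forms using the fixed capacity $d$ need to simulate arbitrary Z-spiders of smaller capacity $a$.
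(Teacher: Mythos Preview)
Your proposal rests on a misreading of what the two diagrams are. The annotation $\ref{lem:ket-k-forms}$ appearing under the equals sign in the lemma \emph{statement} is stray (the paper's own automatically generated dependency graph in \Cref{fig:lem-deps} shows that \Cref{lem:Z-restrict-NF-qf} depends on \Cref{lem:embedding} and \Cref{lem:ket-k-dim-qf}, not on \Cref{lem:ket-k-forms}), and you have built your entire argument around it.

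Concretely, the left-hand diagram is \emph{not} the $a$-restricted Z-spider with a differently rendered $\ket a$-subdiagram; it contains a genuine $0\to1$ Z-spider of capacity $a$, whose output is then pushed up to capacity $d$ by a W-node. There is no $\ket a$ sitting there for \Cref{lem:ket-k-forms} to rewrite. The paper's proof therefore has real work to do: it multiplies through by $a!$ and applies Equation~\nameref{ax:decomp-Z-qf} to unfold that capacity-$a$ Z-state into something that \emph{does} exhibit a $\ket a$ (still at capacity $a$); then uses \Cref{lem:embedding} together with \nameref{ax:bialgebra-Z-W-qf} and \nameref{ax:bialgebra-W-qf} to migrate this $\ket a$ across the $a$-to-$d$ capacity boundary; and finally invokes \Cref{lem:ket-k-dim-qf} twice to realign the capacities so that the defining form of the $a$-restricted spider can be read off. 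Your explicit claim that ``no auxiliary embedding (of the kind provided by \Cref{lem:embedding}) is required'' is exactly where the argument breaks: the embedding and bialgebra steps are the substance of the proof, since the two sides live at different internal capacities until those steps are carried out.
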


\begin{proof}
\def\fig{Z-restrict-qudit}
\begin{align*}
\zwf\vdash~a!\cdot\scalebox{\scaling}{
}
&\eq{\nameref{ax:decomp-Z-qf}}\scalebox{\scaling}{
}
\eq{\ref{lem:embedding}}\scalebox{\scaling}{
}
\eq{\nameref{ax:bialgebra-Z-W-qf}}\scalebox{\scaling}{
}
\eq{\nameref{ax:bialgebra-W-qf}}\scalebox{\scaling}{
	\input{./figures/\fig/\fig_05.tikz}%
}\\
&\eq{\ref{lem:ket-k-dim-qf}}\scalebox{\scaling}{
	\input{./figures/\fig/\fig_06.tikz}%
}
\eq{\ref{lem:ket-k-dim-qf}}\scalebox{\scaling}{
	\input{./figures/\fig/\fig_07.tikz}%
}
\eq{}a!\cdot\scalebox{\scaling}{
}
\qedhere
\end{align*}
\end{proof}

\begin{lemma}
\label{lem:Z-loop-removal-qf}
\def\fig{Z-loop-removal-qf-prf}
\begin{align*}
\zwf\vdash~\scalebox{\scaling}{
}
\eq{}\scalebox{\scaling}{
	\input{./figures/\fig/\fig_06.tikz}%
}
\end{align*}
\end{lemma}

\begin{proof}
\def\fig{Z-loop-removal-qf-prf}
\begin{align*}
\zwf\vdash~\scalebox{\scaling}{
}
&\eq{\ref{lem:ket-k-dim-qf}\\\nameref{ax:id-qf}}\scalebox{\scaling}{
}
\eq{\nameref{ax:bialgebra-W-qf}}\scalebox{\scaling}{
}
\eq{\nameref{ax:bialgebra-Z-W-qf}}\scalebox{\scaling}{
}
\eq{\nameref{ax:Z-spider-qf}\\\ref{lem:embedding}}\scalebox{\scaling}{
}\\
&\eq{\nameref{ax:loop-removal-qf}\\\nameref{ax:Z-spider-qf}\\\ref{lem:Z-id-qf}}\scalebox{\scaling}{
	\input{./figures/\fig/\fig_05.tikz}%
}
\eq{\nameref{ax:bialgebra-W-qf}\\\nameref{ax:id-qf}\\\ref{lem:ket-k-dim-qf}}\scalebox{\scaling}{
	\input{./figures/\fig/\fig_06.tikz}%
}
\qedhere
\end{align*}
\end{proof}

\begin{lemma}
\label{lem:discard-ket-qf}
With $a\geq k$:
\def\fig{braket-k-qf}
\begin{align*}
\zwf\vdash~\scalebox{\scaling}{
}
\eq{}k!
\qquad\text{and}\qquad
\def\fig{discard-ket-qf}
\zwf\vdash~\scalebox{\scaling}{
}
\eq{}\scalebox{\scaling}{
}
\end{align*}
\end{lemma}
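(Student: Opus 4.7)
The plan is to prove both identities by induction on $k$, mirroring the strategy of \Cref{lem:dot-product} from the qudit setting and making essential use of the inductive expansion of $\ket k$ provided by \Cref{lem:ket-k-forms}.

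For the first identity $\braket{k}{k}=k!$, I would first treat the base cases. The case $k=0$ follows by using \Cref{lem:ket-0-qf} to rewrite both $\ket 0$ and $\bra 0$ as W-nodes, closing the resulting structure with \nameref{ax:bialgebra-W-qf} and \Cref{lem:scalar-1-qf} to obtain $1=0!$. The case $k=1$ is immediate from \Cref{lem:scalar-1-qf}, giving $1=1!$. For the inductive step $k\geq 2$, I would use \Cref{lem:ket-k-forms} to write $\ket k$ as a W-node joining $\ket 1$ and $\ket{k-1}$, and dualise $\bra k$ similarly; the composition $\braket{k}{k}$ then presents as a two W-node ``diamond''. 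Applying \Cref{lem:W-bialgebra-ctxt} decomposes this diamond into a sum of parallel brackets, and, using the second half of the present lemma to eliminate mismatched pairings, only the diagonal contribution survives. A careful accounting of the bialgebra's combinatorial multiplicity (a factor of $k$) combined with the induction hypothesis $\braket{k-1}{k-1}=(k-1)!$ then yields $k\cdot(k-1)!=k!$.

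For the discard-ket identity, I would reason analogously by expanding the relevant ket/bra via \Cref{lem:ket-k-forms}, applying \Cref{lem:W-bialgebra-ctxt} to push the W-node past the surrounding structure, and then identifying either a sub-diagram that vanishes by \Cref{lem:copy-0-qf} (when the indices mismatch) or a configuration that telescopes via \Cref{lem:ket-k-dim-qf} to the claimed right-hand side. The alternative inductive form of $\ket k$ given in \Cref{lem:ket-k-forms} is useful for selecting the most convenient decomposition at each step.

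The main obstacle will be managing the capacity annotations consistently across the derivation. Since \Cref{lem:W-bialgebra-ctxt} and \Cref{lem:Pascal-qf} require specific capacity alignments, every application must be preceded by a lifting step using \Cref{lem:embedding}, \Cref{lem:NF-qf-other-dim-qf}, or \Cref{lem:ket-k-dim-qf} to bring all wires to a common sufficient capacity. This dimensional bookkeeping is the principal additional complication compared to the qudit case of \Cref{lem:dot-product}, but it is a mechanical consequence of the lemmas already established rather than requiring any fundamentally new ideas.
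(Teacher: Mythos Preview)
Your proposal diverges from the paper and, more importantly, has real gaps.

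First, the paper's proof of $\braket{k}{k}=k!$ is \emph{direct}, not inductive. It uses the third form in \Cref{lem:ket-k-forms} to expand $\ket k$ (and $\bra k$) fully as a $k$-ary W-node over $k$ copies of $\ket1$, all at capacity~$1$. This puts the computation into the exact shape needed for a single application of \nameref{ax:bialgebra-W-qf} (the side condition $c\geq\min(\sum a_i,\sum b_j)=k$ holds since $a\geq k$, and all $\ell_{ij}=1$). Then \Cref{lem:Pascal-qf} supplies the combinatorics in one shot, \nameref{ax:hopf-qf} with \Cref{lem:ket-0-qf} kills the off-diagonal contributions, and the surviving scalar is $k!$. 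No induction, no orthogonality of mismatched kets is needed. For the second identity the paper simply re-uses the first: it introduces a $\frac1{k!}$ factor, inserts a $\braket{k}{k}$ using what was just proved, and then collapses everything via \nameref{ax:bialgebra-Z-W-qf}, \nameref{ax:Z-spider-qf} and \Cref{lem:Z-id-qf}.

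Your plan, by contrast, does not hold together as written. The qudit \Cref{lem:dot-product} that you cite as a template is itself direct (it goes through \Cref{lem:ket-sum} and \Cref{lem:ket-1}, the latter being where the $k!$ appears), so you are not actually mirroring it. In your inductive step, applying \Cref{lem:W-bialgebra-ctxt} to the binary decomposition $W(\ket1,\ket{k-1})$ yields a single diagram with a Z-spider in the middle, not ``a sum of parallel brackets''; you would still need \Cref{lem:Pascal-qf} (or an analogue) to extract any multiplicities, and the promised ``factor of $k$'' has no evident source in what you describe. You also invoke ``the second half of the present lemma'' to eliminate mismatched pairings, but the second half is the discard identity (Z-effect on $\ket k$ is empty), not the orthogonality $\braket{k}{\ell}=0$ for $k\neq\ell$; the latter is not part of this lemma at all, so you have nothing to appeal to for those cross terms. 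Finally, using the second half inside the proof of the first half while leaving the second half's proof equally sketchy risks circularity that you have not resolved by setting up an explicit mutual induction.
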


\begin{proof}
First:
\def\fig{braket-k-qf}
\begin{align*}
\zwf\vdash~\scalebox{\scaling}{
}
&\eq{\ref{lem:ket-k-forms}}\scalebox{\scaling}{
}
\eq{\nameref{ax:scalar-qf}}\scalebox{\scaling}{
}
\eq{\nameref{ax:bialgebra-W-qf}}\scalebox{\scaling}{
}
\eq{\ref{lem:Pascal-qf}}\scalebox{\scaling}{
}\\
&\eq{\nameref{ax:W-assoc-qf}\\\nameref{ax:hopf-qf}\\\ref{lem:ket-0-qf}}\scalebox{\scaling}{
	\input{./figures/\fig/\fig_05.tikz}%
}
\eq{\nameref{ax:scalar-qf}\\\ref{lem:scalar-1-qf}}k!
\end{align*}
\def\fig{discard-ket-qf}
\begin{align*}
\zwf\vdash~\scalebox{\scaling}{
}
&\eq{}\frac1{k!}\cdot\scalebox{\scaling}{
}
\eq{\ref{lem:ket-k-forms}\\\nameref{ax:scalar-qf}\\\nameref{ax:bialgebra-Z-W-qf}}\frac1{k!}\cdot\scalebox{\scaling}{
}
\eq{\nameref{ax:Z-spider-qf}\\\ref{lem:Z-id-qf}}\frac1{k!}\cdot\scalebox{\scaling}{
}
\eq{}1
\eq{\ref{lem:scalar-1-qf}}\scalebox{\scaling}{
}
\qedhere
\end{align*}
\end{proof}

\begin{proposition}
\label{prop:qufinite-derives-qudit}
All equations of \zw are derivable from \zwf, i.e.
\[\forall d\geq 2,~\zw\vdash D_1=D_2 \implies \zwf\vdash \iota_d(D_1) = \iota_d(D_2)\]
\end{proposition}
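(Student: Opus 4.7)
The plan is to proceed axiom by axiom through Figure~\ref{fig:equational-theory} and verify that the $\iota_d$-image of each $\zw$-axiom is derivable in $\zwf$. Since $\iota_d$ is defined by decorating every wire with capacity $d-1$, it is a $\dag$-compact monoidal functor, so it suffices to handle the finitely many axiom schemas: the implicit identities coming from the symmetric $\dag$-compact prop structure, the flexsymmetry of the Z-spider and the co-commutativity of the W-node are immediate from the analogous structure of $\zwf$.

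Most axioms translate directly. Once every wire carries capacity $d-1$, the side conditions of the $\zwf$ rules are automatically met: the constraint ``$b=c$ or $b\geq\sum_i a_i$'' appearing in \nameref{ax:W-assoc-qf} holds because $b=c=d-1$, so \nameref{ax:W-assoc} reduces to a single application of \nameref{ax:W-assoc-qf}. In the same way, \nameref{ax:Z-spider}, \nameref{ax:W-unit}, \nameref{ax:hopf}, \nameref{ax:bialgebra-Z-W}, \nameref{ax:sum}, \nameref{ax:copy}, \nameref{ax:loop} and \nameref{ax:decomp-Z} reduce respectively to \nameref{ax:Z-spider-qf}, \nameref{ax:id-qf}, \nameref{ax:hopf-qf}, \nameref{ax:bialgebra-Z-W-qf}, \nameref{ax:sum-qf}, \nameref{ax:scalar-qf}, \nameref{ax:loop-removal-qf} and \nameref{ax:decomp-Z-qf} with $a=d-1$, which produces exactly the $\frac1{(d-1)!}$ factor.

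The two genuinely non-routine cases are \nameref{ax:one} and \nameref{ax:bialgebra-W}, which have no direct counterpart in $\zwf$. The erasing axiom \nameref{ax:one} is precisely the content of Lemma~\ref{lem:scalar-1-qf}, which derives the relevant equality with the empty diagram using only $\zwf$-axioms. The contexted qudit W-bialgebra \nameref{ax:bialgebra-W}, whose two sides carry the characteristic $\ket1$-contexts, is exactly Lemma~\ref{lem:W-bialgebra-ctxt} once all capacities are set to $d-1$ (a value that satisfies the capacity constraint of \nameref{ax:bialgebra-W-qf}).

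I expect this last step to be the principal obstacle, since one has to align the $\ket1$-contexts of the qudit bialgebra with the intermediate-capacity wires used by the mixed-dimensional rule. That alignment is precisely the bookkeeping carried out inside Lemma~\ref{lem:W-bialgebra-ctxt}, which combines \nameref{ax:bialgebra-W-qf}, \nameref{ax:W-assoc-qf}, \nameref{ax:Z-spider-qf} and \nameref{ax:bialgebra-Z-W-qf} in the required way; once this is in hand, assembling the individual axiom-by-axiom derivations yields the desired implication.
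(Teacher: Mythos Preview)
Your overall strategy---check each axiom of Figure~\ref{fig:equational-theory} and verify that its $\iota_d$-image is derivable in \zwf---is exactly the paper's. Most of your case analysis is also correct. There are, however, two places where you diverge from the paper and where the argument as written has a gap.

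First, the qudit loop axiom \nameref{ax:loop} does \emph{not} become an instance of the \zwf axiom \nameref{ax:loop-removal-qf} simply by setting all capacities to $d-1$: the two rules do not have the same shape (the mixed-dimensional version is stated with a different capacity pattern than the uniform $d-1$ one obtains from $\iota_d$). The paper therefore does not list \nameref{ax:loop} among the ``directly translated'' axioms; it invokes Lemma~\ref{lem:Z-loop-removal-qf}, whose proof adjusts capacities via \nameref{ax:bialgebra-W-qf} and the embedding lemma before the \zwf loop axiom can be applied. Your claim that \nameref{ax:loop} ``reduces to \nameref{ax:loop-removal-qf} with $a=d-1$'' thus skips a genuine derivation.

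Second, Lemma~\ref{lem:scalar-1-qf} is not the right citation for \nameref{ax:one}. That lemma shows that the global scalar $1$ (and a handful of specific diagrams) equals the empty diagram, whereas \nameref{ax:one} is the schema stating that the Z-effect applied to $\ket{k}$ yields the empty diagram for every $k$. The paper handles \nameref{ax:one} with Lemma~\ref{lem:discard-ket-qf}, which proves precisely that family; Lemma~\ref{lem:scalar-1-qf} appears only at the final step there, to identify the resulting scalar $1$ with the empty diagram. Citing Lemma~\ref{lem:scalar-1-qf} alone does not cover the general-$k$ case.
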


\begin{proof}
First, notice that the result is obviously true for all axioms of compact-closed props. Then, it is enough to show the result for the equations in \Cref{fig:equational-theory}, as all equations provable with \zw derive from them. Equations \nameref{ax:Z-spider}, \nameref{ax:W-unit}, \nameref{ax:W-assoc}, \nameref{ax:bialgebra-Z-W}, \nameref{ax:sum}, \nameref{ax:hopf}, \nameref{ax:copy} and \nameref{ax:decomp-Z} are directly translated to an equation of \zwf through $\iota_d$. Then, \nameref{ax:bialgebra-W} is exactly \Cref{lem:W-bialgebra-ctxt}, \nameref{ax:loop} is \Cref{lem:Z-loop-removal-qf}, and \nameref{ax:one} is \Cref{lem:discard-ket-qf}.
\end{proof}

\begin{proposition}
\label{prop:qufinite-to-qudit}
Let $D\in\cat{ZW_f}$ be a diagram (state) with its largest capacity being $d-1$. Then, there exists $D_d\in\cat{ZW}_d$ such that:
\[\def\fig{qufinite-to-qudit}
\zwf\vdash~~\scalebox{\scaling}{
}
\eq{}\scalebox{\scaling}{
}\]
\end{proposition}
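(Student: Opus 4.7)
The plan is to prove a slightly stronger statement by structural induction on $D$: namely, that every morphism $D\colon \obfhilb{\vec a}\to\obfhilb{\vec b}$ of $\cat{ZW_f}$ whose wires all have capacity at most $d-1$ can be rewritten in $\zwf$ as $\iota_d(D_d)$ sandwiched, on each boundary wire, between the appropriate $1\to 1$ W-nodes converting capacities to and from $d-1$. Specialising to states (no inputs) recovers the proposition.

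For the base cases I would handle each generator individually. The identity, swap, cup and cap at capacity $a$ yield to \Cref{lem:embedding}, which expresses each of these as its capacity-$(d-1)$ counterpart composed with the desired boundary $1\to 1$ W-nodes. A $\ket 1$ of capacity $a$ is settled directly by Equation~\nameref{ax:ket-1}. A Z-spider at capacity $a$ becomes, via the definition of the $a$-restricted Z-spider used just before \Cref{lem:Z-restrict-NF-qf}, a capacity-$(d-1)$ Z-spider already decorated with exactly the kind of boundary gadgets that the statement asks for. For a W-node with input capacity $a$ and output capacities $b_i$, I would promote each leg to capacity $d-1$ by inserting embeddings via \Cref{lem:embedding,lem:ket-k-dim-qf}, leaving a capacity-$(d-1)$ W-node (in the image of $\iota_d$) surrounded by the required boundary $1\to 1$ W-nodes; scalars are trivial.

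For the inductive step, parallel composition $D=D_1\otimes D_2$ is handled by juxtaposing the decompositions given by the inductive hypothesis. Sequential composition $D=D_2\circ D_1$ is the tricky case: after applying the hypothesis we obtain, on every internal wire of intermediate capacity $a$, a composition of a projection from $d-1$ down to $a$ followed by an embedding from $a$ back up to $d-1$. This gadget must be absorbed into the $\iota_d$ part so that only boundary gadgets remain. My plan is to rewrite this composite---semantically the projector onto the span of $\ket 0,\ldots,\ket a$---using \Cref{lem:Z-restrict-NF-qf,lem:ket-k-dim-qf} so that the intermediate capacity-$a$ wire disappears and only a capacity-$(d-1)$ diagram remains. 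The resulting rewriting can then be pulled back into $\cat{ZW}_d$ via \Cref{prop:qufinite-derives-qudit}, which lets me carry any $\cat{ZW}_d$-equation into $\zwf$ through $\iota_d$.

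The hard part will be this absorption step in the sequential composition case: the intermediate gadget produced by the inductive hypothesis crosses a capacity strictly less than $d-1$, which is not a legal wire in $\cat{ZW}_d$. Eliminating that lower-capacity wire---equivalently, showing that restriction to capacity $a$ can always be realised purely at capacity $d-1$---is precisely what the $a$-restricted Z-spider machinery provides, and is the reason \Cref{lem:Z-restrict-NF-qf} was established in advance.
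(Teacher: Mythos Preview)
Your approach is correct and relies on the same key lemmas as the paper, but the organisation differs. The paper does not argue by structural induction. Instead it invokes \Cref{lem:embedding} once, globally, to replace every wire of capacity $a$ in $D$ by the identity factored through capacity $\delta=d-1$; this surrounds every original generator with exactly the W-node gadgets that appear in your base cases. Each generator-with-gadgets is then converted locally to a pure capacity-$\delta$ diagram (Z-spiders via \nameref{ax:bialgebra-Z-W-qf} and \Cref{lem:Z-restrict-NF-qf}, W-nodes via \nameref{ax:W-assoc-qf} and \nameref{ax:bialgebra-W-qf}, kets via \Cref{lem:ket-k-dim-qf} and \Cref{lem:ket-k-forms}), and this consumes all the inserted W-nodes except those touching the outputs. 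The payoff of this organisation is that your ``hard part'' never arises: because the embedding/projection pairs on internal wires were inserted \emph{as} identities, no leftover projector has to be absorbed in a composition step. Your inductive route is equally valid, and the absorption you describe can indeed be carried out with \Cref{lem:Z-restrict-NF-qf} after inserting a trivial Z-spider on the capacity-$a$ wire via \Cref{lem:Z-id-qf}. One small point: you do not need \Cref{prop:qufinite-derives-qudit} here at all---once every wire carries capacity $d-1$ the diagram is literally in the image of $\iota_d$, which is all the proposition claims.
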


\begin{proof}
Let us write $\delta := d-1$ for the maximum capacity in $D$. Using \Cref{lem:embedding}, we can ``force'' capacity $\delta$ on all wires:
\def\fig{embedding-delta}
\begin{align*}
\zwf\vdash~\scalebox{\scaling}{
}
\eq{\ref{lem:embedding}}\scalebox{\scaling}{
}
\end{align*}
Every original generator in $D$ is then surrounded by W-nodes (connected to them by their output). These can be turned into purely qudit systems as follows, first for Z-spiders with degree $\geq1$:
\def\fig{embedding-Z}
\begin{align*}
\zwf\vdash~\scalebox{\scaling}{
}
\eq{\nameref{ax:Z-spider-qf}}\scalebox{\scaling}{
}
\eq{\nameref{ax:bialgebra-Z-W-qf}}\scalebox{\scaling}{
}
\eq{\ref{lem:Z-restrict-NF-qf}}\scalebox{\scaling}{
}
\end{align*}
The case of the Z-spider with no leg has to be dealt with differently:
\def\fig{embedding-Z-0}
\begin{align*}
\zwf\vdash~\scalebox{\scaling}{
}
\eq{\nameref{ax:Z-spider-qf}}\scalebox{\scaling}{
}
\eq{\ref{lem:embedding}}\scalebox{\scaling}{
}
\eq{\ref{lem:Z-restrict-NF-qf}}\scalebox{\scaling}{
}
\end{align*}
The case of the W-node is given by:
\def\fig{embedding-W}
\begin{align*}
\zwf\vdash~\scalebox{\scaling}{
}
\eq{\nameref{ax:W-assoc-qf}}\scalebox{\scaling}{
}
\eq{\nameref{ax:bialgebra-W-qf}}\scalebox{\scaling}{
}
\eq{\nameref{ax:W-assoc-qf}\\\ref{lem:Z-id-qf}}\scalebox{\scaling}{
}
\eq{}\scalebox{\scaling}{
}
\end{align*}
The case of $
\scalebox{\scaling}{
}$ is dealt with Lemmas \ref{lem:ket-k-dim-qf} and \ref{lem:ket-k-forms}. 
Finally, using \nameref{ax:Z-spider-qf} with \ref{lem:Z-restrict-NF-qf}, all ``$a$-restricted'' Z-spiders $\def\fig{embedding-Z}\scalebox{\scaling}{
}$ can be turned into diagrams that only use capacity $\delta$.
Doing so for all generators, we consume all the W-nodes created at the beginning of the proof through \Cref{lem:embedding}, except the ones that are connected to the outputs of the diagram. We hence end up in the form $\def\fig{qufinite-to-qudit}\scalebox{\scaling}{
}$.
\end{proof}

\begin{proposition}
\label{prop:qudit-NF-to-qufinite-NF}
Let $D_d$ be a $\cat{ZW}_d$-state in (qudit) normal form. Then $\def\fig{qufinite-to-qudit}\scalebox{\scaling}{
}$ can be put in (mixed-dimensional) normal form.
\end{proposition}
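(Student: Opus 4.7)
The plan is to exhibit $\iota_d(D_d)$ and the canonical $\cat{ZW_f}$-normal form $\mathcal{N}(\interp{D_d})$ as diagrams with essentially the same topology, differing only in the capacity annotations on internal wires, and then bridge this gap using the capacity-change \Cref{lem:NF-qf-other-dim-qf}.

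First I would recall the shape of the qudit normal form: for each nonzero coefficient $r_i$ of $\ket{x_1^i, \dots, x_n^i}$ in $\interp{D_d}$, there is a ``slice'' consisting of a scalar $r_i$-labelled Z-spider whose unique output feeds a W-structure that routes $x_j^i$ parallel edges to the $j$-th output, and these parallel edges are merged at each output by the output-gluing W-nodes of the normal form. Applying $\iota_d$ merely annotates every one of these wires with capacity $d-1$.

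Second, I would observe that for the state $\interp{D_d} \in \cat{FdHilb}[\obfhilb{}, \obfhilb{d-1, \dots, d-1}]$, the canonical qufinite normal form $\mathcal N(\interp{D_d})$ has the same slice-per-term topology, with the same Z-spider parameters $r_i$, the same exponents $x_j^i$, and the same output-merging W-nodes. Since the output capacities are all $d-1$ on both sides, the only place the two diagrams can differ is on the internal wires of each slice, where $\mathcal N$ may assign smaller, term-specific capacities than the uniform $d-1$ coming from $\iota_d$. I would apply \Cref{lem:NF-qf-other-dim-qf} slice by slice to convert the internal capacities of $\iota_d(D_d)$ into those prescribed by $\mathcal N(\interp{D_d})$; this lemma is precisely designed to perform this internal-capacity shift inside a normal-form branch.

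The main obstacle will be handling slices with vanishing exponents $x_j^i = 0$: in the qudit encoding these correspond to $0$-ary outputs of the distributing W-node (interpreted as $\ket 0$), while in the qufinite normal form they may be expressed via the explicit $\ket 0$ symbol on a positive-capacity wire. I would reconcile these using \Cref{lem:ket-0-qf} and \Cref{lem:copy-0-qf}, together with \Cref{lem:ket-k-forms} to rewrite the various syntactic presentations of $\ket k$ on different capacities into one another. Since the coefficients $r_i$ and exponents $x_j^i$ are read off directly from the state $\interp{D_d} = \interp{\iota_d(D_d)}$ (by commutativity of the square with $\iota_d$), the resulting diagram is literally $\mathcal N(\interp{D_d})$, completing the proof.
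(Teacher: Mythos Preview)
You have misread the diagram in the statement. The object to be normalised is not $\iota_d(D_d)$ alone; it is $\iota_d(D_d)$ post-composed on each output with a $1\!\to\!1$ W-node $\obfhilb{d{-}1}\to\obfhilb{a_i}$ that projects down to the original (possibly smaller) output capacity $a_i$. Consequently your claim that ``the output capacities are all $d-1$ on both sides'' is false in general, and your argument stops exactly where the real work begins.

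Your use of \Cref{lem:NF-qf-other-dim-qf} does correctly convert the internal capacities of $\iota_d(D_d)$ and produces a qufinite normal form with outputs of capacity $d-1$; this is precisely the paper's first step. But after that you must still (i) absorb the dangling projections $\obfhilb{d{-}1}\to\obfhilb{a_i}$ into the output-merging W-nodes of the normal form, and (ii) discard every term whose multiplicity $x_j^i$ on output $j$ exceeds the target capacity $a_j$. Step (i) requires pushing the projection through the W-comultiplication using \nameref{ax:W-assoc-qf} and \nameref{ax:bialgebra-W-qf}, which changes the capacity on each parallel edge from $d{-}1$ to $a_j$. Step (ii) is then forced: any Z-spider still connected $k>a_j$ times to output $j$ must be annihilated, which the paper does via \nameref{ax:hopf-qf} together with \Cref{lem:copy-0-qf} and \nameref{ax:id-qf}. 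Your proposal addresses neither of these, so as written it does not establish the proposition.
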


\begin{proof}
First, we may use \Cref{lem:NF-qf-other-dim-qf} to turn all ``internal'' capacities in $\iota_d(D_d)$ into $1$. $\iota_d(D_d)$ is hence technically in mixed-dimensional normal form. It remains to remove the W-nodes $\obfhilb{\delta} \to \obfhilb{a_i}$ at the bottom of the diagram. This can be done as follows, considering each output individually:
\def\fig{qudit-NF-to-qufinite-NF}
\begin{align*}
\zwf\vdash~\scalebox{\scaling}{
}
&\eq{\nameref{ax:W-assoc-qf}}\scalebox{\scaling}{
}
\eq{\nameref{ax:bialgebra-W-qf}}\scalebox{\scaling}{
}\\
&\eq{\nameref{ax:bialgebra-W-qf}}\scalebox{\scaling}{
}
\eq{\nameref{ax:id-qf}\\\nameref{ax:W-assoc-qf}}\scalebox{\scaling}{
}
\end{align*}
Finally, we may remove Z-spiders connected $k$ times to output $a_i$ when $k>a_i$ using Equation \nameref{ax:hopf-qf} followed by \Cref{lem:copy-0-qf} and Equation \nameref{ax:id-qf}.
\end{proof}

\begin{proof}[Proof of \Cref{thm:completeness-finite}]
We can now show that any $\cat{ZW_f}$-state $D$ can be put in normal form. First, use \Cref{prop:qufinite-to-qudit} to turn all but the outputs of the diagram into a qudit diagram. Using \Cref{prop:qufinite-derives-qudit} and \Cref{thm:completeness-qudit}, we can turn the qudit-part of that diagram into qudit normal form. Finally, using \Cref{prop:qudit-NF-to-qufinite-NF}, the whole diagram can be put in mixed-dimensional normal form.

If two $\cat{ZW_f}$-states $D_1$ and $D_2$ are semantically equivalent ($\interp{D_1}=\interp{D_2}$), then they can both be turned into normal form. By uniqueness of the normal form, the two diagrams are equal, i.e.~$\interp{D_1}=\interp{D_2} \implies \zwf\vdash D_1=D_2$.
\end{proof}

\end{document}